\newcommand{\anote}[1]{\authnote{ Andr\'{a}s}{#1}{blue}}
\newif\ifcount
\theoremstyle{plain}
\def\namedlabel#1#2{\begingroup
#2%
\def\@currentlabel{#2}%
\phantomsection\label{#1}\endgroup
}
\newcommand{\eps}{\varepsilon}
\newcommand{\ketbra}[2]{|#1\rangle\! \langle #2|}
\newcommand{\nrm}[1]{\left\lVert #1 \right\rVert}
\newcommand{\bigO}[1]{\mathcal{O}\left( #1 \right)}
\newcommand{\bigOt}[1]{\widetilde{\mathcal{O}}\left( #1 \right)}
\newcommand{\diag}[1]{\mathrm{diag}\left( #1 \right)}
\newcommand{\vertiii}[1]{{\left\vert\kern-0.25ex\left\vert\kern-0.25ex\left\vert #1 
\right\vert\kern-0.25ex\right\vert\kern-0.25ex\right\vert}}
\newcommand{\checkv}{\mathtt{Check}} 
\newcommand{\setupv}{\mathtt{Setup}}
\newcommand{\updatev}{\mathtt{Update}}
\newcommand{\ccheck}{\mathcal{C}} 
\newcommand{\csetup}{\mathcal{S}}
\newcommand{\cupdate}{\mathcal{U}}
\renewcommand{\check}{\mathsf{C}} 
\newcommand{\setup}{\mathsf{S}}
\newcommand{\update}{\mathsf{U}}
\newcommand{\refl}{\mathsf{R}}
\newcommand{\HT}{\mathrm{HT}}
\newcommand{\supp}{\mathrm{supp}}
\newcommand{\hit}{\mathrm{ht}}
\newcommand{\ct}{\mathrm{ct}}
\theoremstyle{definition}
\pgfplotsset{compat=1.13}
\newcommand{\N}{\mathbb{N}}
\newcommand{\R}{\mathbb{R}}
\DeclarePairedDelimiter\bra{\langle}{\rvert}
\DeclarePairedDelimiter\ket{\lvert}{\rangle}
\DeclarePairedDelimiterX\braket[2]{\langle}{\rangle}{#1 \delimsize\vert #2}
\def\Pr{\mathrm{Pr}}
\def\Exp{\mathbb{E}}
\newtheorem{theorem}{Theorem}
\newtheorem{corollary}[theorem]{Corollary}
\newtheorem{lemma}[theorem]{Lemma}
\newtheorem{definition}[theorem]{Definition}
\newtheorem{claim}[theorem]{Claim}
\newtheorem{remark}[theorem]{Remark}
\title{A Unified Framework of Quantum Walk Search
}
\author{
Simon Apers\thanks{Inria, France and CWI, the Netherlands. Supported by the CWI-Inria International Lab. \texttt{simon.apers@inria.fr}}
\and
András Gilyén\thanks{QuSoft, CWI and University of Amsterdam, the Netherlands. Supported by ERC Consolidator Grant QPROGRESS and partially supported by QuantERA project QuantAlgo 680-91-034.}
\kern1.3mm$^,$\thanks{Caltech, USA. Supported by the Institute for Quantum Information and Matter, an NSF Physics Frontiers Center (NSF Grant PHY-1733907). \texttt{agilyen@caltech.edu}}
\and
Stacey Jeffery\thanks{QuSoft and CWI, the Netherlands. Supported by an NWO Veni Innovational Research Grant under project number 639.021.752, an NWO WISE Grant,    and QuantERA  project  QuantAlgo  680-91-03.   SJ  is  a  CIFAR  Fellow  in  the Quantum Information Science Program. \texttt{jeffery@cwi.nl}}
}
\date{\today\vspace{-5mm}}
\begin{document}

\maketitle

\begin{abstract}
The main results on quantum walk search are scattered over different, incomparable frameworks, most notably the \emph{hitting time framework}, originally by Szegedy, the \emph{electric network framework} by Belovs, and the \emph{MNRS framework} by  Magniez, Nayak, Roland and Santha.
As a result, a number of pieces are currently missing.
For instance, the electric network framework allows quantum walks to start from an arbitrary initial state, but it only detects marked elements.
In recent work by Ambainis et al., this problem was resolved for the more restricted hitting time framework, in which quantum walks must start from the stationary distribution.

We present a new quantum walk search framework that unifies and strengthens these frameworks.
This leads to a number of new results.
For instance, the new framework not only detects, but finds marked elements in the electric network setting.
The new framework also allows one to interpolate between the hitting time framework, which minimizes the number of walk steps, and the MNRS framework, which minimizes the number of times elements are checked for being marked.
This allows for a more natural tradeoff between resources.
Whereas the original frameworks only rely on quantum walks and phase estimation, our new algorithm makes use of a technique called \emph{quantum fast-forwarding}, similar to the recent results by Ambainis et al.
As a final result we show how in certain cases we can simplify this more involved algorithm to merely applying the quantum walk operator some number of times.
This answers an open question of Ambainis et al.
\end{abstract}

\section{Introduction}

Quantum walk search refers to the use of quantum walks to solve a search problem on a graph.
In the last two decades, this topic has received a great deal of attention, with a rich literature attesting to the progress on understanding quantum walk algorithmic techniques~\cite{ambainis2005coins,szegedy2004QMarkovChainSearch,magniez2006SearchQuantumWalk,krovi2010QWalkFindMarkedAnyGraph,belovs2013Electric,ambainis2019QW,dohotaru2017controlledQAmp}
and developing applications~\cite{buhrman2006MatrixProduct,magniez2007Triangle,jeffery2012NestedQW,belovs20133Dist,bernstein2013SubsetSum,montanaro2015quantum,kachigar2017InformationSetDec,helm2018SubsetSum,kirshanova2018InformationSetDec}.
Despite this long line of progress, the main results on quantum walk search lie somewhat scattered in different frameworks, and a number of pieces are currently missing.

The quantum walk search frameworks that we consider are the \emph{hitting time framework} originally due to Szegedy~\cite{szegedy2004QMarkovChainSearch}, the \emph{MNRS framework} due to Magniez, Nayak, Roland and Santha~\cite{magniez2006SearchQuantumWalk}, the \emph{electric network framework} due to Belovs~\cite{belovs2013Electric}, and the \emph{controlled quantum amplification framework} by Dohotaru and H{\o}yer~\cite{dohotaru2017controlledQAmp}.
We summarize these frameworks, as well as the corresponding complexities, in Table~\ref{fig:frameworks}.
In this work we unify these different frameworks, leading to a number of new results and missing pieces.
For example, algorithms developed using the electric network framework could only \emph{detect} marked elements.
Our unified approach can be used to develop algorithms that \emph{find} marked elements, while incurring at most a logarithmic overhead.

We also give a conceptual bridge between the recent result of Ref.~\cite{ambainis2019QW} and the original approaches by Szegedy~\cite{szegedy2004QMarkovChainSearch} and Krovi, Magniez, Ozols and Roland~\cite{krovi2010QWalkFindMarkedAnyGraph}.
The latter showed that combining quantum walks with phase estimation or time averaging allows one to quadratically improve the hitting time of a single marked element, when starting from the stationary distribution.
Ambainis et al.~\cite{ambainis2019QW} used a more involved technique called \emph{quantum fast-forwarding}~\cite{apers2018QFastForwardMarkovChains} to improve these results to yield quadratic speedups on the hitting time of arbitrary sets.
In this work we reprove the same result using only simple quantum walks, thereby proving a conjecture from~\cite{ambainis2019QW}.

\subsection{Different Frameworks}
While the frameworks we consider are similar, each has advantages and disadvantages.
The earliest \emph{hitting time framework} was due to Szegedy~\cite{szegedy2004QMarkovChainSearch}, inspired by an algorithm of Ambainis for element distinctness~\cite{ambainis2007ElementDist}.
To illustrate this framework, imagine a classical algorithm that begins by sampling a state from the stationary distribution $\pi$ of some random walk, described by a transition matrix $P$.
The algorithm starts from a vertex distributed according to $\pi$, and simulates the random walk.
After every step of the walk it checks whether the current vertex is ``marked''.
The algorithm terminates after $\bigO{\HT(P,M)}$ steps, with $\HT(P,M)$ the \emph{hitting time}, or the expected number of steps from $\pi$ before a marked vertex in $M$, the marked set, is reached.
As such, the algorithm has a constant probability of having found a marked vertex.
To bound the complexity of this algorithm, let the \emph{setup cost} $\csetup$ denote the complexity of sampling from $\pi$, the \emph{update cost} $\cupdate$ denote the complexity of simulating a step of the walk, and the \emph{checking cost} $\ccheck$ denote the complexity of checking whether a vertex is marked.
The complexity of the resulting algorithm is then of order $\csetup+{\HT(P,M)}(\cupdate+\ccheck)$.
The hitting time framework essentially shows how to construct a quantum algorithm with complexity
\[
\setup + \sqrt{\HT(P,M)}(\update + \check),
\]
where $\setup$, $\update$ and $\check$ are quantum analogues of $\csetup$, $\cupdate$ and $\ccheck$, respectively, denoting the costs in terms of \emph{coherent} quantum samples (see Section~\ref{sec:prelim-qw-search} for details).
One of the major drawbacks of the original framework was that the resulting quantum algorithm typically \emph{detected} the presence of a marked vertex, without actually \emph{finding} one.
In the special case where there is only a single marked element, Krovi at al.~\cite{krovi2010QWalkFindMarkedAnyGraph} showed how to also find the marked element in the same complexity.
To this end they introduced the concept of \emph{interpolated walks}.
Combining interpolated walks with another technique called \emph{quantum fast-forwarding}, introduced in \cite{apers2018QFastForwardMarkovChains}, Ref.~\cite{ambainis2019QW} more recently showed how to also find a marked element in the general case.
We will refer to this final result as the hitting time framework.

The second framework that we consider is the \emph{MNRS framework} introduced by Magniez, Nayak, Roland and Santha~\cite{magniez2006SearchQuantumWalk}.
This framework also \emph{finds} a marked vertex, but it can be understood as the quantum analogue of a slightly different random walk algorithm.
Consider a random walk that begins in the stationary distribution.
Rather than checking if the current vertex is marked after every step, the walk takes $1/\delta$ steps between checks, where $\delta$ is the \emph{spectral gap} of $P$.
Since $1/\delta$ is approximately the \emph{mixing time} of the random walk, this process effectively samples from the stationary distribution, for each sample checking whether it is marked, and otherwise generating a new sample.
If $\eps$ is the probability that a vertex sampled from the stationary distribution is marked, then a marked element is found with constant probability after $\bigO{1/\eps}$ samples.
As such, the complexity of this classical algorithm is $\csetup+\frac{1}{\eps}(\frac{1}{\delta}\cupdate+\ccheck)$.
The MNRS framework shows how to get a quantum algorithm for finding a marked vertex with complexity
\[
\setup + \frac{1}{\sqrt{\eps}}\Big(\frac{1}{\sqrt{\delta}}\update + \check\Big).
\]
Since $\HT(P,M) \leq \frac{1}{\eps\delta}$, this requires at least as many steps of the walk as the hitting time framework.
On the other hand, $\HT(P,M) \geq \frac{1}{\eps}$, and so the number of checks can be significantly smaller than in the hitting time framework.
In fact, this amount of checks performed in the MNRS framework is easily seen to be optimal by a lower bound on black-box search.\footnote{Consider for instance a quantum walk search algorithm on the complete graph on $N$ vertices. Finding a single marked element then requires $\Omega(\sqrt{N})$ checks by the optimality of Grover's search algorithm.}

The third framework that we consider is the \emph{electric network framework} by Belovs~\cite{belovs2013Electric} (published in~\cite{belovs20133Dist}).
This is a generalization of the hitting time framework, allowing for the walker to start from an arbitrary initial distribution $\sigma$ (such as a single vertex), rather than necessarily the stationary distribution.
If $\setup(\sigma)$ is the complexity of sampling (coherently) from $\sigma$, then the resulting quantum algorithm has complexity
\[
\setup(\sigma)+\sqrt{C_{\sigma,M}}(\update(\sigma)+\check),
\]
where $\update(\sigma)$ is the complexity of implementing a step of a slightly modified random walk.
The quantity $C_{\sigma,M}$ (defined in Section~\ref{sec:prelim-electric}) is a generalization of the \emph{commute time}. 
When both $\sigma$ and $M$ correspond to single vertices $u$ and $m$, then $C_{\sigma,M}$ equals the commute time from $u$ to $m$, which is the expected number of steps starting from $u$ to reach $m$ and then return to $u$. 
When $\sigma$ equals the stationary distribution then $C_{\sigma,M} = HT(P,M)$, thus retrieving the hitting time framework.
The obvious advantage of the electric network framework is that it does not necessarily require quantum samples from the stationary distribution of $P$, which might be very costly, and can instead begin in a much easier to produce state. 
A major disadvantage of this framework, however, is that the quantum algorithm only \emph{detects} the presence of marked vertices, as in the original hitting time framework, rather than actually finding marked vertices.

Finally we also consider the \emph{controlled quantum amplification framework} by Dohotaru and H{\o}yer \cite{dohotaru2017controlledQAmp}.
They use an extra qubit to control the quantum walk operator\footnote{In fact they consider more general operators, but we will focus on their result for quantum walk operators.}, leading to an additional degree of freedom.
For the case of a \emph{unique} marked element $M = \{m\}$, and starting from a quantum sample of the stationary distribution, they achieve a complexity
\[
\setup+\sqrt{\HT(P,\{m\})}\update+\frac{1}{\sqrt{\eps}}\check,
\]
which has both an optimal number of walk steps (as the hitting time framework) and an optimal number of checks (as the MNRS framework).
The clear downside of this approach is that it is restricted to cases where there is a single marked element, and we start from the stationary distribution.

\begin{table}
\centering
\renewcommand{\arraystretch}{1.3}
\begin{tabular}{|l|l|}
\hline
Framework & Complexity\\
\hline
Hitting time framework \cite{szegedy2004QMarkovChainSearch,krovi2010QWalkFindMarkedAnyGraph,ambainis2019QW} & $\mathsf{S}+\sqrt{\HT(P,M)}(\mathsf{U}+\mathsf{C})$\\
\hline
MNRS framework \cite{magniez2006SearchQuantumWalk} & $\mathsf{S}+\frac{1}{\sqrt{\eps}}(\frac{1}{\sqrt{\delta}}\mathsf{U}+\mathsf{C})$\\
\hline
Electric network framework \cite{belovs20133Dist,belovs2013Electric} & $\mathsf{S}(\sigma)+\sqrt{C_{\sigma,M}}(\mathsf{U}(\sigma)+\mathsf{C})$\\
\hline
Controlled quantum amplification \cite{dohotaru2017controlledQAmp} & $\mathsf{S}+\sqrt{\HT(P,\{m\})}\mathsf{U}+\frac{1}{\sqrt{\eps}}\mathsf{C}$\\
\hline
\end{tabular}
\caption{Comparison of different quantum walk frameworks.}\label{fig:frameworks}
\end{table}

\subsection{Contributions}

\paragraph{Finding in the electric network framework}

The electric network framework \cite{belovs2013Electric} generalizes the hitting time framework~\cite{szegedy2004QMarkovChainSearch} by allowing for arbitrary initial distributions.
The downside is that algorithms in this framework only detect rather than actually find marked vertices.
On the other hand, the improved hitting time framework of~\cite{ambainis2019QW} shows how to actually find marked vertices in the hitting time framework, provided that the walk starts from a quantum sample of the stationary distribution.
Both works hence provide complementary but incompatible improvements over the initial hitting time framework.

In Section~\ref{sec:electric}, we fill this gap by generalizing the results of~\cite{ambainis2019QW} to the electric network setting, designing a quantum algorithm that not only detects but also \emph{finds} marked elements for any starting distribution $\sigma$.
This improved version strictly generalizes the results of~\cite{ambainis2019QW}, and it loses at most a log factor with respect to the original electric network framework~\cite{belovs2013Electric}. In particular, we show (see Theorem~\ref{thm:electric-finding}):

\begin{theorem}[Informal]\label{thm:intro-elec}
For any distribution $\sigma$, there is a quantum walk search algorithm that finds a marked element from $M$ with constant probability in complexity (up to log factors)
\[
\setup(\sigma) + \sqrt{C_{\sigma,M}} (\update(\sigma)+\check).
\]
\end{theorem}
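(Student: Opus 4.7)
The plan is to adapt the quantum fast-forwarding strategy of Ambainis et al.\ to arbitrary starting distributions, replacing their coherent sample of the stationary distribution with the \emph{electric} initial state that already underlies Belovs's electric network framework. If the parallel spectral analysis goes through, the relevant classical quantity will naturally become $C_{\sigma,M}$ rather than $\HT(P,M)$.

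The first step is to build an interpolated walk in the style of Krovi--Magniez--Ozols--Roland for $(P,M)$, parametrized by $s\in[0,1)$, which makes marked vertices partially absorbing. The associated quantum walk operator $W(s)$ admits a distinguished eigenstate that I would identify with (a suitable rescaling of) Belovs's initial electric flow state for $\sigma$; by construction this state is preparable at cost $\setup(\sigma)$ and each application of $W(s)$ costs $\update(\sigma)+\check$. So both the preparation and per-step costs inherit the usual electric-network accounting.

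With this in place, I would invoke quantum fast-forwarding to approximate $t$ classical steps of $P(s)$ on the initial state using only $\widetilde O(\sqrt{t})$ applications of $W(s)$. Standard electrical estimates (Thomson's principle for effective resistance, relating hitting events to the generalized commute time) show that for an appropriately chosen $s$, the classical interpolated chain started from $\sigma$ reaches $M$ with constant probability in $t = O(C_{\sigma,M})$ steps. To convert this classical hitting statement into a quantum measurement with constant success probability, I would then follow the Ambainis et al.\ template: fast-forward, measure in the vertex register to detect marking, and apply amplitude amplification using reflection around the initial electric state to boost the marked amplitude.

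The main obstacle will be the joint spectral and electrical analysis that ties together the non-stationary initial state, the correct interpolation parameter $s$, and the quantity $C_{\sigma,M}$ in place of $\HT(P,M)$; in particular, one has to show that the fast-forwarded amplitude on $M$ is $\Omega(1)$ after the correct scaling. Since both $s$ and $C_{\sigma,M}$ are a priori unknown, the algorithm must perform a geometric outer search over candidate values, which accounts for the logarithmic overhead appearing in the theorem. Once the electric initial state is shown to play the same role for $W(s)$ as the stationary state does for the Szegedy walk in \cite{ambainis2019QW}, the remaining amplification and cost bookkeeping should go through analogously, yielding the claimed $\setup(\sigma)+\sqrt{C_{\sigma,M}}(\update(\sigma)+\check)$ complexity up to log factors.
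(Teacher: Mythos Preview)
Your high-level plan---interpolate, fast-forward, amplify---is the right skeleton, but the proposal skips the two technical pieces that are the actual content of the proof and would fail as stated.

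First, the quantum amplitude you need to lower bound is $\nrm{\Pi_M D^t(q)\ket{\sqrt{\sigma}}}$, and the only handle on this (as in \cite{ambainis2019QW}) is a bound of the form
\[
\nrm{\Pi_M D^t(q)\ket{\sqrt{\sigma}}}\ \geq\ \Pr_{Y_0(q)\sim\sigma}\big(Y_t(q)\in M,\ Y_{t'}(q)\in S\big),
\]
i.e.\ a \emph{two-time} event: hit $M$ at time $t$ \emph{and return to $S=\supp(\sigma)$} at time $t'$. Your claim that ``the classical interpolated chain started from $\sigma$ reaches $M$ with constant probability in $t=O(C_{\sigma,M})$ steps'' is not enough; the return-to-$S$ part is where the difficulty lies. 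This requires a new combinatorial interpretation of $C_{\sigma,M}$ (the paper's Claim~\ref{claim:commute-time} and Lemma~\ref{lem:s-M}), proving that a walk from $\pi|_S$ hits $M$ and then returns to $S$ within $O(C_{\sigma,M})$ steps with constant probability. ``Standard electrical estimates'' do not give this; indeed the naive identity $\Exp_{\pi|_S}(\tau^M_S)=WR_{\pi|_S,M}$ is \emph{false} in general (Appendix~\ref{app:counterexample}).

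Second, for this combinatorial argument and for the discriminant bound above to apply at all, one needs $\sigma=\pi|_S$ and $\pi(S)\approx 1/C_{\sigma,M}$. An arbitrary $\sigma$ has neither property, so the paper first applies Belovs's graph modification (Section~\ref{sec:mod-graph}) to force $\sigma'=\pi'|_{S'}$ with the right weight, and then runs a \emph{two-parameter} interpolation $P(q)$ with $q=(q_S,q_M)$---one parameter slowing the walk on $S$, the other on $M$---rather than the single parameter $s$ you propose. Both parameters are essential in the box-stretching analysis (Lemma~\ref{lem:comb}) that converts the hit-and-return statement into the needed lower bound on the fast-forwarded amplitude. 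Without these ingredients, your ``once the electric initial state is shown to play the same role\ldots'' step does not go through.
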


To analyze our new algorithm, we use techniques similar to those employed in~\cite{ambainis2019QW} for finding in the hitting time framework.
However, there is an additional difficulty we must overcome.
The hitting time, $\HT(P,M)$, has a useful interpretation in terms of the classical random walk -- that is, with high probability, a marked vertex is encountered within the first $\bigO{\HT(P,M)}$ steps -- and this fact is crucial in the analysis of the quantum algorithm in~\cite{ambainis2019QW}.
In contrast, to the best of our knowledge, the generalized quantity $C_{\sigma,M}$ (defined in Section~\ref{sec:prelim-electric}) is not well understood.
If $\sigma$ is supported on a single vertex, $u$, and $M$ contains a single vertex, $m$, then $C_{\sigma,M}$ is exactly the \emph{commute time} between $u$ and $m$.
This means that within the first $\bigO{C_{\sigma,M}}$ steps, with high probability, a walker starting from $u$ has visited $m$, and then returned to $u$.
For general $\sigma$ and $M$, no such interpretation was known.
We prove that, under certain conditions, a similar interpretation holds: with high probability, a walker starting from $\sigma$ will hit $M$, and then return to the support of $\sigma$, within the first $\bigO{C_{\sigma,M}}$ steps.
We can ensure that these conditions hold by using the same graph and walk modification as used in \cite{belovs2013Electric}, adding a weighted edge to each vertex in $\mathrm{supp}(\sigma)$.
The resulting understanding of $C_{\sigma,M}$ is enough to employ a similar analysis to that of~\cite{ambainis2019QW}.

\paragraph{A Unified Framework} While the electric network framework is a generalization of the hitting time framework, the MNRS framework is incomparable.
Since $\HT(P,M) \leq \frac{1}{\eps\delta}$, the hitting time framework always finds a marked vertex using a number of quantum walk steps (updates) less than or equal to that used by the MNRS framework.
On the other hand, $\HT(P,M) \geq \frac{1}{\eps}$, and hence the MNRS framework may make fewer calls to the check operation.
When the complexity of implementing the checking operation is much larger than that of the update operation, the MNRS framework may hence be preferable to both the hitting time framework and the electric framework.
The controlled quantum amplification framework achieves the best of both worlds, but only for a unique marked element.

In Section~\ref{sec:unified}, we present a new framework that unifies all these individual approaches.
For the sake of intuition, we first describe this framework when the initial state $\pi$ is used, which can be seen as a unification between the hitting time framework, the MNRS framework and the controlled quantum amplification framework.
To this end, recall that the hitting time framework is the quantum analogue of a random walk algorithm that takes $\HT(P,M)$ steps of the random walk described by $P$, checking at each step if the current vertex is marked.
In contrast, the MNRS framework is the quantum analogue of a random walk algorithm that takes $\frac{1}{\delta}$ steps of $P$, thus approximately sampling from the stationary distribution $\pi$, and then checks if the sampled vertex is marked.
Since $\eps$ is the probability that a sampled vertex is marked, this process is repeated $\frac{1}{\eps}$ times.

We can define a natural interpolation between both classical algorithm.
To this end, take any $t$, and consider a classical random walk that repeatedly takes $t$ steps, and then checks whether the current vertex is marked.
The expected number of iterations is then $\HT(P^t,M)$, the hitting time of the $t$-step random walk, described by transition matrix $P^t$.
This classical algorithm finds a marked vertex in complexity $\csetup + \HT(P^t,M)(t\cupdate + \ccheck)$.
We give a quantum analogue of this algorithm, generalized to arbitrary initial distributions (see Theorem~\ref{thm:mnrs-electric}).
\begin{theorem}[Informal] \label{thm:prelim-QW}
For any $t \in \N$ and any distribution $\sigma$, there is a quantum walk search algorithm that finds a marked element from $M$ with constant probability in complexity (up to log factors)
\[
\setup(\sigma) + \sqrt{C_{\sigma,M}(P^t)} (\sqrt{t}\update(\sigma)  + \check).
\]
\end{theorem}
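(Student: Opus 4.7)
The plan is to reduce Theorem~\ref{thm:prelim-QW} to Theorem~\ref{thm:intro-elec} by instantiating the electric network finding algorithm with the $t$-step walk $P^t$ in place of $P$. Since the notion of marked vertex depends only on the state space and not on the transition dynamics, the marked set $M$ is unchanged, and applying Theorem~\ref{thm:intro-elec} to $P^t$ yields an algorithm that finds a marked element from any initial distribution $\sigma$ with complexity
\[
\setup(\sigma) + \sqrt{C_{\sigma,M}(P^t)}\bigl(\update_{P^t}(\sigma) + \check\bigr),
\]
where $\update_{P^t}(\sigma)$ denotes the cost of one step of the $\sigma$-modified quantum walk of $P^t$. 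Thus the theorem reduces to showing that such a step can be simulated using $\tilde{O}(\sqrt{t})$ applications of the quantum walk step of $P$, so that $\update_{P^t}(\sigma) = \tilde{O}(\sqrt{t})\,\update(\sigma)$.

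The key technical ingredient for this reduction is \emph{quantum fast-forwarding}~\cite{apers2018QFastForwardMarkovChains}, which uses $\tilde{O}(\sqrt{t})$ applications of the Szegedy walk operator of $P$ to produce an approximate block-encoding of $P^t$ (equivalently, an approximate walk operator for $P^t$) on the edge space. I would plug this fast-forwarded operator into the electric network finding algorithm of Theorem~\ref{thm:intro-elec}. The $\sigma$-dependent graph modification used there (adding a weighted edge at each vertex of $\supp(\sigma)$, so that $C_{\sigma,M}$ admits the hitting-and-return interpretation) transfers to $P^t$ without change and can be implemented with $\bigO{1}$ calls to the $\update(\sigma)$ oracle around each fast-forwarded step. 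Composing the two ingredients gives the claimed $\sqrt{t}\,\update(\sigma)+\check$ factor, while leaving the setup and checking costs untouched.

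The main obstacle is the error analysis: quantum fast-forwarding only produces an approximate walk operator for $P^t$, so what gets inserted into Theorem~\ref{thm:intro-elec} is not a true walk step. I would therefore need to argue that the electric network finding algorithm is stable under sufficiently small unitary perturbations of the walk step, or, if needed, to rerun its construction and spectral analysis directly with the fast-forwarded operator. Once robustness is established, taking the fast-forwarding precision inverse-polynomial in the final complexity suffices to preserve the constant success probability, and only contributes to the hidden polylogarithmic factors. A smaller secondary task is to verify that the interpretation of $C_{\sigma,M}(P^t)$ as an expected hitting-and-return time of $P^t$ goes through after the same $\sigma$-modification; this follows from the corresponding statement for $P$ together with the fact that the modification commutes with taking powers in the relevant sense.
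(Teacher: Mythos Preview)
Your proposal is correct and follows essentially the same approach as the paper: apply the electric-network finding result (Theorem~\ref{thm:intro-elec}, formally Theorem~\ref{thm:electric-finding}) to the reversible chain $P^t$, and then use quantum fast-forwarding to implement an approximate walk operator for $P^t$ at cost $\tilde{O}(\sqrt{t})\,\update$, with the precision chosen inverse to the total number of calls so that the accumulated error stays bounded. One small simplification: you do not need to argue that the $\sigma$-modification ``commutes with taking powers''; since $P^t$ is itself a reversible chain, Theorem~\ref{thm:electric-finding} applies to it as a black box, and the $\sigma$-modification is performed on $P^t$ directly via the $\refl(\sigma)$ operator, which is independent of the underlying walk.
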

Setting $t = 1$ we recover our previous theorem, Theorem~\ref{thm:intro-elec}.
When $\sigma = \pi$, then $C_{\sigma,M}(P^t) = \HT(P^t,M)$, and hence we find the quantum analogue of the aforementioned random walk algorithm.
As such, when $\sigma = \pi$ and $t = 1$, we recover the hitting time framework.
When $\sigma = \pi$ and $t=\frac{1}{\delta}$, we recover the MNRS framework, since a $1/\delta$-step random walk essentially samples from $\pi$ at every step, and so $\HT(P^{1/\delta},M) \in \bigO{\frac{1}{\eps}}$.
When there is a unique marked element $\{m\}$, and $t = \eps \HT(P,\{m\})$, we recover the controlled quantum amplification framework.
To see this, we use a result from \cite[Section 6]{dohotaru2017controlledQAmp} which proves that $\HT(P^t,\{m\}) = 1/\eps$ if $t \in \Omega(\eps \HT(P,\{m\}))$.
For multiple marked elements, and other intermediate values of $t$, we obtain new types of algorithms. 
We summarize these special cases in the table below.

\begin{table}[H]
\centering
\renewcommand{\arraystretch}{1.2}
\begin{tabular}{|l|l|}
\hline
New quantum walk search framework: & $\mathsf{S}(\sigma)+\sqrt{C_{\sigma,M}(P^t)}(\sqrt{t}\mathsf{U}(\sigma)+\mathsf{C})$\\
\hline
Hitting time framework & $\sigma=\pi$, $t=1$\\
MNRS framework & $\sigma=\pi$, $t=\frac{1}{\delta}$\\
Electric network framework & any $\sigma$, $t=1$\\
Controlled quantum amplification & $\sigma=\pi$, $M = \{m\}$, $t=\eps \HT(P,M)$\\
\hline
\end{tabular}
\caption{The new quantum walk search framework.}\label{fig:new-framework}
\end{table}

\paragraph{A simpler algorithm for the hitting time and electric network framework}
Similar to the recent work by Ambainis et al.~\cite{ambainis2019QW}, our new quantum algorithm makes use of a somewhat involved technique called quantum fast-forwarding.
For the case $t = 1$ (recovering the hitting time and electric network framework), we show that a much simpler algorithm works with essentially the same complexity.
This algorithm works by (classically) choosing random interpolation parameters, and applies the interpolated quantum walk operator an appropriately chosen number of steps, starting from $\ket{\sqrt{\sigma}}$.
It was already conjectured in \cite{ambainis2019QW} that this simpler approach would work (but only for the hitting time framework).
In Section~\ref{sec:alter}, we prove that this simple algorithm indeed finds a marked vertex, with at most a logarithmic overhead over the complexity of the more involved fast-forwarding algorithm.
Interestingly, our proof relies on the proof of correctness of the fast-forwarding algorithm.

\paragraph{Related independent work} While finalizing this manuscript, the authors became aware of the concurrent and independent work of Stephen Piddock, who developed an alternative refinement of Belovs' results for finding marked elements in the electric network framework \cite{piddock2019electricfind}.

\section{Preliminaries} \label{sec:prelim}

\subsection{Random walks} \label{sec:prelim-rws}

Let $Y$ be a random variable over a finite state space $X$, with $|X| = n$.
For $y \in X$, we let $\Pr(Y = y)$ denote the probability that $Y = y$.
We can describe the corresponding probability distribution by a vector $\sigma \in \R^n$, where $\sigma_y = \Pr(Y = y)$.
For $S \subseteq X$ and a probability distribution $\sigma$, we let $\sigma(S)=\sum_{u\in S} \sigma_u$ denote the probability that $Y \in S$, and we let $\sigma|_S$ be the normalized restriction of $\sigma$ to $S$, defined as $(\sigma|_S)_u=\sigma_u/\sigma(S)$ for all $u\in S$.
A sequence of random variables $Y=(Y_t)_{t=0}^{\infty}$ over $X$ is a \emph{Markov chain} if for all $t\geq 1$, $Y_t$ is independent of $Y_0,\dots,Y_{t-2}$ given $Y_{t-1}$.
For any distribution $\sigma$ over $X$ and random variable $Z$ that is a function of $Y$, we let $\Pr_\sigma(Z=z)$ denote the probability that $Z$ takes value $z$ when $Y_0$ is distributed as $\sigma$.
Any Markov chain is described by a stochastic transition matrix $P$, with $P_{y,y'}=\Pr(Y_t = y' \mid Y_{t-1} = y)$.
If $\sigma^{(t)}$ describes the probability distribution of $Y_t$, then this implies that $\sigma^{(t)} = \sigma^{(t-1)} P$.

We consider weighted, undirected graphs $G = (X,E,w)$ on vertex set $X$, with $|X| = n$; edge set $E \subseteq X \times X$ with $(u,v)\in E$ if and only if $(v,u)\in E$, and $|E| = 2m$; and edge weights $w:E \to \R_{\geq 0}$.
If an edge is not present, we will usually think of it as having edge weight zero.
The total weight is then $W = \sum_{u,v\in X} w_{u,v}$, and the total weight of edges leaving a node $u$ is $w_u = \sum_{v \in X} w_{u,v}$.
A random walk on $G$ is described by a Markov chain over $X$ with transition matrix $P$, defined by
\[
P_{u,v}
= \frac{w_{u,v}}{w_u}.
\]
In words, a random walk from a vertex $u$ picks a random neighbor $v$ with probability proportional to the edge weight $w_{u,v}$.
This random walk describes a special kind of Markov chain, a so-called \emph{reversible} Markov chain \cite{levin2017MarkovChainsMixingTimes}.
In fact, it can be shown that any reversible Markov chain can also be described as a random walk on a weighted graph, simply by choosing $w_{u,v} = w_{v,u} = 2\pi_uP_{u,v}$.
As such we will interchangeably use the terms ``random walk'' and ``reversible Markov chain''.
If the graph is connected and non-bipartite, the random walk is called \textit{ergodic} and its probability distribution converges to a unique limiting distribution called the \textit{stationary distribution} $\pi \in \R^n$, defined as
\[
\pi_u
= \frac{w_u}{W}.
\]
This is the unique left eigenvector of $P$ with eigenvalue 1.
While the transition matrix $P$ of a reversible Markov chain is not necessarily symmetric, a closely related matrix called the \textit{discriminant matrix} $D(P)$ is symmetric.
For ergodic and reversible Markov chains it can be defined as
\begin{equation} \label{eq:discriminant}
D(P):=\sqrt{P \circ P^T}
= \diag{\sqrt{\pi}} P \,\diag{\sqrt{\pi}}^{\!-1},
\end{equation}
with the $\circ$-product and square root acting elementwise and
$\diag{\sqrt{\pi}}$ being the diagonal matrix with entries $(\diag{\sqrt{\pi}})_{u,u} = \sqrt{\pi_u}$.
The second equality implies that $P$ and $D(P)$ share the same eigenvalues, which we denote by $1 = \lambda_0 > \lambda_1 \geq \dots \geq \lambda_{n-1} > -1$.
The convergence time or \textit{mixing time} of $P$ to the stationary distribution is characterized, up to log factors, by the inverse of the \textit{spectral gap} $\delta$ of the transition matrix, which is defined as $\delta = \min\{1-|\lambda_1|,1-|\lambda_{n-1}|\}$.

For a subset $M \subseteq X$, the \textit{hitting time} $\tau_M$ is the random variable representing the minimum $t$ such that $Y_t\in M$, i.e., the first time at which the random walk hits $M$.
With slight abuse of notation, we will also call $\HT(P,M) = \Exp_\pi(\tau_M)$ the hitting time of set $M$, corresponding to the expected hitting time when starting from the stationary distribution $\pi$.
We similarly define $\tau^M_S$ to be the first time at which the random walk has hit $M$, and then $S$.
As such, when $S = \{s\}$ is a singleton, the quantity $\Exp_s(\tau^M_s)$ denotes the expected \textit{commute time} from $s$ to $M$.

\subsection{Interpolated walks} \label{sec:prelim-interpolated}
\emph{Interpolated walks} form an important tool in quantum walk search algorithms~\cite{krovi2010QWalkFindMarkedAnyGraph}.
Quite literally, such walks are an interpolation between the original random walk $P$, and the \emph{absorbing} random walk $P_M$ in some set $M$, which corresponds to the walk that halts when it hits $M$ (equivalently, the walk obtained after adding self-loops of infinite weight to the elements in $M$).
For an interpolation parameter $s \in [0,1]$, the interpolated walk $P(s)$ is then defined as
\[
P(s)
= (1-s) P + s P_M.
\]
If $P$ is an ergodic reversible Markov chain, then so is $P(s)$ for every $s\in[0,1)$ \cite[Proposition 12]{krovi2010QWalkFindMarkedAnyGraph}.

\subsection{Electric networks} \label{sec:prelim-electric}
An electric network is described by a weighted, undirected graph $G=(X,E,w)$ (for an introduction to the connection between random walks on graphs and electric networks, see~\cite{doyle1984electric}).
The edge weights in $G$ are interpreted as \emph{conductances} associated to the edges.
An edge that is not present has weight zero, and hence also zero conductance.
A central notion is that of a \emph{flow}.

\begin{definition} \label{def:flow}
Let $M\subset X$ be a set of marked vertices, and let $\sigma$ be a distribution supported on unmarked vertices. A \emph{unit flow} from $\sigma$ to $M$ is a function $p:X\times X\rightarrow \mathbb{R}$ such that:
\begin{itemize}
	\item $p_{u,v}=0$ if $w_{u,v}=0$;
	\item $p_{u,v}=-p_{v,u}$ for all $u,v$;
	\item for all $u\not\in M$, $\sum_{v}p_{u,v}=\sigma_u$; and
	\item $\sum_{u\in M} \sum_{v\notin M} p_{u,v} = -1$.
\end{itemize}
The \emph{effective resistance} $R_{\sigma,M}$ from $\sigma$ to $M$ is 
\begin{align}\label{eq:resistance}
R_{\sigma,M}
= \min_{p}\sum_{u,v\in X:u<v}\frac{p_{u,v}^2}{w_{u,v}}
\end{align}
where the minimum runs over all unit flows from $\sigma$ to $M$. For an $S$ disjoint from $M$ we define
$$
	R_{M,S}=R_{S,M}:=\min_{\sigma\colon \mathrm{supp}(\sigma)\subseteq S}R_{\sigma,M}
$$
In the special case when $S=\{s\}$ and $M=\{t\}$ are singletons, we simply write $R_{s,t}:=R_{\{s\},\{t\}}$. 
\end{definition}
Note that there is a unique $\sigma$-$M$ flow that minimizes the expression in \eqref{eq:resistance}.
To see this, note that if two distinct flows $p$ and $p'$ achieve the same minimum, then their average is again a $\sigma$-$M$ flow, but with an even smaller value, which leads to a contradiction.
In the special cases where $\sigma = \pi$ or $\supp(\sigma) = \{s\}$, the effective resistance $R_{\sigma,M}$ has a well-known combinatorial interpretation.

\begin{theorem}[\cite{chandra1996electrical,belovs2013Electric}]\label{thm:commuteVertex}
In a weighted graph of total weight $W$, for any vertex $s$ and subset $M \subseteq X$, we have $WR_{s,M}=C_{s,M}$, where $C_{s,M}$ is the \emph{commute time} from $s$ to $M$.
Furthermore, we have $WR_{\pi,M}=\HT(P,M)$, with $\HT(P,M)$ the hitting time from $\pi$ to $M$.
\end{theorem}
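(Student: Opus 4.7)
The plan is to use the classical correspondence between reversible random walks and electric networks via voltage (potential) functions, following Doyle--Snell and Chandra et al.~\cite{chandra1996electrical}. The common setup: for any distribution $\sigma$ supported on $\bar M$, define the voltage $v\colon X\to\R$ by $v|_M=0$ together with Kirchhoff's law $\sigma_u=\sum_v w_{u,v}(v(u)-v(v))$ for $u\notin M$. Thomson's/Dirichlet's principle identifies the associated Ohm flow $p^*_{u,v}=w_{u,v}(v(u)-v(v))$ as the unique minimizer in~\eqref{eq:resistance}, with $R_{\sigma,M}=\sum_u \sigma_u v(u)$. Substituting $w_{u,v}=w_u P_{u,v}$ recasts Kirchhoff's law as the inhomogeneous harmonic recurrence $v(u)=\sigma_u/w_u+\sum_v P_{u,v}v(v)$ on $\bar M$ with boundary $v|_M=0$, which is the key link with hitting times.

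For the hitting-time identity, take $\sigma=\pi$, so that $\sigma_u/w_u=1/W$ is constant. Then $Wv$ satisfies the standard hitting-time recurrence $h(u)=1+\sum_v P_{u,v}h(v)$ with $h|_M=0$, so $v(u)=\Exp_u[\tau_M]/W$ by uniqueness. Summing against $\pi$ yields $WR_{\pi,M}=\HT(P,M)$. For the commute-time identity, I would use the Green's function $G(s,v):=\Exp_s[\#\{t<\tau_M:Y_t=v\}]$ of the walk from $s$ killed at $\tau_M$. The expected net edge-traversal flow $p^{RW}_{u,v}:=G(s,u)P_{u,v}-G(s,v)P_{v,u}$ satisfies Kirchhoff's law at interior vertices (by the Markov property applied to $G$), has unit source at $s$, and has sinks on $M$ distributed as the hitting distribution $\rho_s$. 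Using reversibility $w_{u,v}=w_uP_{u,v}=w_vP_{v,u}$, one can rewrite $p^{RW}_{u,v}=w_{u,v}(\phi(u)-\phi(v))$ with $\phi(u):=G(s,u)/w_u$; thus $p^{RW}$ \emph{is} the electric unit flow from $s$ to $M$, and $R_{s,M}=\phi(s)=G(s,s)/w_s$. To extract $C_{s,M}$, I would exploit the renewal structure: in a long walk from $s$ decomposed into i.i.d.\ commutes $s\to M\to s$, the expected number of visits to $s$ per commute equals $G(s,s)$ (all visits to $s$ occur on the forward leg, before hitting $M$), while the fraction of time at $s$ converges to $\pi_s$ by the ergodic theorem. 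Equating the two yields $C_{s,M}=G(s,s)/\pi_s=WG(s,s)/w_s=WR_{s,M}$.

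The main obstacle is identifying $p^{RW}$ with the electric flow in the commute-time derivation: a priori the ``random walk flow'' sinks at $M$ via $\rho_s$, while the electric flow sinks via a voltage-dependent distribution. Reversibility of $P$ is precisely what forces these two distributions to agree, via the cleanly Ohm-shaped potential $\phi(u)=G(s,u)/w_u$. This is the step that genuinely distinguishes reversible chains from general Markov chains, and it upgrades the set-valued case to the single-vertex Chandra et al.\ argument essentially for free.
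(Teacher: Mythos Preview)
Your proposal is correct. For the hitting-time identity $WR_{\pi,M}=\HT(P,M)$, you give a direct argument via the inhomogeneous harmonic recurrence for $\Exp_u[\tau_M]$; the paper simply cites this identity, so there is nothing to compare.

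For the commute-time identity $WR_{s,M}=\Exp_s(\tau_M^s)$, your route differs from the paper's. The paper (Appendix~\ref{app:Cs-M}) works through the \emph{escape probability}: it first shows $\Pr_s(\tau_M<\tau_s^+)=1/(WR_{s,M}\,\pi_s)$ by identifying the harmonic extension of the boundary data with $u\mapsto\Pr_u(\tau_M<\tau_s)$ and summing the Ohm currents at $s$, and then shows separately, via Kac's lemma and a one-step renewal (Lemma~\ref{lem:exp-returns}), that $\Pr_s(\tau_M<\tau_s^+)=1/(\Exp_s(\tau_M^s)\,\pi_s)$. You instead go through the \emph{Green's function}: your potential is $\phi(u)=G(s,u)/w_u$, the Dirichlet identity gives $R_{s,M}=\phi(s)=G(s,s)/w_s$, and a renewal-reward argument gives $G(s,s)=\pi_s\,\Exp_s(\tau_M^s)$. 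The two routes are dual through the relation $G(s,s)=1/\Pr_s(\tau_M<\tau_s^+)$, so neither is deeper; yours is slightly more self-contained for the singleton-source statement, while the paper's escape-probability formulation is what extends cleanly to the set-valued source $S$ needed for Lemma~\ref{lem:s-M}. One small remark: your ``main obstacle'' about matching sink distributions is not really an obstacle once you have the Ohm form $p^{RW}_{u,v}=w_{u,v}(\phi(u)-\phi(v))$ with $\phi|_M=0$, since uniqueness of the energy minimizer then forces $p^{RW}$ to be the electric flow regardless of how one parameterizes the sinks.
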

\noindent
Since we could only find in the literature a proof of the first statement for the case where $M$ is a singleton, we extend the existing proofs to the more general case in Appendix~\ref{app:Cs-M}.

In the same vein we define the quantity $C_{\sigma,M}=W R_{\sigma,M}$ for any distribution $\sigma$ and subset $M$, and define $C_{S,M}$ analogously.
Similar to the above theorem, we will later prove a connection of this more general quantity to the behavior of a random walk.

\subsection{Quantum walks and quantum walk search algorithms}\label{sec:prelim-qw-search}

Let $D(P)$ denote the discriminant matrix of a random walk on a weighted graph, as defined in \eqref{eq:discriminant}.
We can associate a quantum walk operator to this random walk, which is a unitary operator for which the following holds.

\begin{definition}[Quantum walk operator]\label{def:walk-operator}
For any reversible Markov chain $P$ on a finite state space $X$, a \emph{quantum walk operator} $W(P)$ is a unitary on $\mathcal{H}_A\otimes \mathcal{H}_X$, such that $\ket{\bar{0}}\in \mathcal{H}_A$, $\mathrm{span}\{\ket{u}:u\in X\}\subseteq \mathcal{H}_X$, and for all $u \in X$ it holds that
\[
(\bra{\bar{0}}\otimes \bra{u})W(P)(\ket{\bar{0}}\otimes I_X) = \bra{u}D(P)
\quad \text{ and } \quad
(\bra{\bar{0}}\otimes I_X)W(P)(\ket{\bar{0}}\otimes \ket{u})= D(P)\ket{u}.
\]
\end{definition}
We note that this definition is more general than the usual notion of Szegedy's quantum walk operator \cite{szegedy2004QMarkovChainSearch}.
Nevertheless, this definition perfectly fits Szegedy's framework and its later extensions, and enables obtaining all major results (but with increased generality and clarity). 

In case $\mathcal{H}_X= \mathrm{span}\{\ket{u}:u\in X\}$ we can simply write 
$(\bra{\bar{0}}\otimes I)W(P)(\ket{\bar{0}}\otimes I)= D(P)$
and such a unitary is called a \emph{block-encoding} of $D(P)$. But as indicated by our definition, all of our results also apply if $W(P)$ is a block-encoding of a matrix $M$ that can be block-diagonalised with $D(P)$ being one of its diagonal blocks -- this generalization is relevant for example in  applications where a data-structure is attached to each vertex (for more details see \cite{gilyen2018QSingValTransf}). Thus, for simplicity in the presentation we will use a slight abuse of notation and simply write $(\bra{\bar{0}}\otimes I)W(P)(\ket{\bar{0}}\otimes I)= D(P)$, when $(\bra{\bar{0}}\otimes I)W(P)(\ket{\bar{0}}\otimes I)= M$ and $M$ is block-diagonal, and the block corresponding to the subspace $\mathrm{span}\{\ket{u}:u\in X\}$ is $D(P)$.

To gain some intuition, we recall the usual construction for Szegedy's quantum walk operator \cite{szegedy2004QMarkovChainSearch}.
It starts from the classical walk perspective: a step of a classical random walk from vertex $u$ consists of sampling a new vertex $v$ according to the distribution $P_{u,\cdot}$ given by the $u$-th row of $P$.
An analogous quantum operation is a unitary $V(P)$ on $\mathrm{span}\{\ket{u,v}:u,v\in X\cup\{\bar{0}\}\}$ defined by
\begin{equation} \label{eq:VP}
\ket{\bar{0}}\ket{u}\mapsto V(P) \ket{\bar{0}}\ket{u} = \left( \sum_{v\in X}\sqrt{P_{u,v}}\ket{v} \right) \ket{u},
\end{equation}
and acting arbitrarily (but unitarily, and controlled on the second register) on the rest of the state space.
Using such a unitary, one can simulate the classical random walk by measuring the state and re-initializing the first register to $\bar{0}$  in every step.
Using the unitary swap operator $\textsc{Shift}\colon \ket{u,v}\mapsto\ket{v,u}$ (for $u,v\in X$), we can now define the operator
\begin{equation}\label{eq:VSV}
V(P)^\dagger \, \textsc{Shift} \, V(P),
\end{equation}
where the dagger $\dagger$ denotes the Hermitian conjugate. One can now verify that this operator is indeed a quantum walk operator, as defined in Definition~\ref{def:walk-operator}.

In order to understand the relationship between our perspective and most previous works on Szegedy-type quantum walks, we note that a Szegedy-type quantum walk is usually implemented as the following sequence of gates:
\[
\ldots  \overbrace{V(P)^\dagger \, \textsc{Shift} \, V(P)}^{W(P)}([2\ketbra{\bar{0}}{\bar{0}}-I]\otimes I)V(P)^\dagger \, \textsc{Shift} \, \underbrace{V(P)([2\ketbra{\bar{0}}{\bar{0}}-I]\otimes I)V(P)^\dagger}_{\textsc{Ref}}\ldots
\]
This can be viewed~\cite{szegedy2004QMarkovChainSearch,magniez2006SearchQuantumWalk} as a sequence of unitaries $\ldots \textsc{Shift} \, \textsc{Ref} \, \textsc{Shift} \, \textsc{Ref} \ldots$, where $\textsc{Ref}$ is a reflection around the span of the states in the right-hand side of \eqref{eq:VP}. We instead look at it as the sequence $\ldots W(P) \, ([2\ketbra{\bar{0}}{\bar{0}}-I]\otimes I) \, W(P) \, ([2\ketbra{\bar{0}}{\bar{0}}-I]\otimes I) \ldots$. There are various advantages of our treatment: 
\begin{itemize}
	\item It directly reveals the discriminant matrix, which is at the core of the analysis
	\item It enables the use of new techniques such as fast-forwarding or block-encoding	
	\item There is no need to work with pairs of vertices $\ket{u}\ket{v}$
	\item The similarities and differences compared to the classical walk are more apparent
\end{itemize}

Analogous to the case of classical random walk algorithms, quantum walk search algorithms are assumed to have access to the following (possibly controlled) black-box operations (and their inverses):
\begin{itemize}
\item $\checkv(M)$: checks whether a vertex $u$ is marked. Complexity $\check(M)$ or $\check$.
Described by the mapping
\[
\forall u \in X, b \in \{0,1\}: \quad
\ket{u}\ket{b}
\mapsto
\begin{cases}
\ket{u}\ket{b} &\text{if }u\notin M \\
\ket{u}\ket{b\oplus 1} &\text{if }u\in M.
\end{cases}
\]
\item $\setupv(\pi)$: generates the superposition $\ket{\sqrt{\pi}}=\sum_{u\in X}\sqrt{\pi_u}\ket{u}$.
Complexity $\setup(\pi)$ or $\setup$.
\item $\updatev(P)$: implements a (controlled) walk operator $W(P)$, as in \cref{def:walk-operator}.
Complexity $\update(P)$ or $\update$.
\end{itemize}

\begin{remark}
In the literature the update cost is often defined as the cost implementing the unitary $V(P)$ described in~\eqref{eq:VP}, which is compatible with our cost notion due to \eqref{eq:VSV}. However, in some papers the update cost is defined as the cost of implementing $\textsc{Ref}$, which is harder to compare directly. Still it seems unlikely that $\textsc{Ref}$ can be implemented much more efficiently than $W(P)$, so we do not devote much attention to this minor conflict in the definitions. When we cite such a paper we re-express their bounds in terms of our cost functions, c.f., Belovs' original paper~\cite{belovs2013Electric} on electrical network based quantum walks.
\end{remark}

\paragraph{Implementing interpolated quantum walks.}
We can derive other operations by combining the above black-box operations.
Say that we wish to implement a quantum walk corresponding to the interpolated walk $P(s)$.
Let $\theta = \arccos(\sqrt{s})/2$, and 
\[
V = \begin{pmatrix}
\cos(\theta) & \sin(\theta)\\
\sin(\theta) & -\cos(\theta)
\end{pmatrix}, \quad 
X = \begin{pmatrix}
0 & 1\\
1 & 0
\end{pmatrix}, \quad 
Y = \begin{pmatrix}
0 & 1\\
-1 & 0
\end{pmatrix}.
\]
Then
\[
-YVXV\ket{0}
= \sqrt{s}\ket{0} + \sqrt{1-s}\ket{1},\quad
-YVV\ket{0}
= \ket{1}.
\]
Let $cW(P)$ be a controlled version of the ``update unitary'' $W(P)$ controlled by the first qubit, and $C$ be the ``check unitary'' flipping the first qubit for marked vertices. Then the operator
$$
[I_A\otimes(V\otimes I_{X}) \, C \, (VY\otimes I_{X})] \, cW(P) \, [I_A\otimes (YV\otimes I_{X}) \, C \, (V\otimes I_{X})]
$$
is a quantum walk operator\footnote{In order to show that it satisfies the requirements of Definition~\ref{def:walk-operator}, note that the subspace $\mathrm{span}(\ket{0},\ket{1})\otimes\mathrm{span}\{\ket{u}\colon u\in X \}$ is invariant under the action of $C$.} corresponding to $D(P(s))$.
This shows that we can implement $\updatev(P(s))$ using one call to $\updatev(P)$, two calls to $\checkv(M)$ and 4 elementary gates.
We can hence bound the complexity
\[
\update(P(s)) \in \bigO{\update(P) + \check(M)}.
\]

\subsection{Quantum fast-forwarding}
Similar to interpolated walks, \emph{quantum fast-forwarding}~\cite{apers2018QFastForwardMarkovChains} has proven to be a useful tool for quantum walk search algorithms.
For example, the most recent developments \cite{ambainis2019QW} in the hitting time framework (Theorem~\ref{thm:hitting-time}) are based on this technique.

\begin{theorem}[Quantum fast-forwarding \cite{apers2018QFastForwardMarkovChains}]\label{thm:fast-forwarding}
Let $\eps\in (0,1)$, $t\in \mathbb{N}$ and let $P$ be any reversible Markov chain on state space $X$. There is a quantum algorithm with complexity $\bigO{\sqrt{t\log\frac{1}{\eps}}\mathsf{U}}$, where $\mathsf{U}$ is the cost\footnote{For simplicity we assume that $\mathsf{U}$ is at least the number of qubits on which $W(P)$ act. This is a fair assumption because if $W(P)$ uses fewer gates than qubits, then there must be some unused qubits.} of implementing a quantum walk operator $W(P)$ (and its inverse), that implements a unitary $U$ on $\mathrm{span}\{\ket{a,x}: a \in A,x\in X\}$ for some finite set $A\ni \{\tilde{0}\}$, such that for any $\ket{\psi}\in\mathrm{span}\{\ket{x}:x\in X\}$, 
$$\nrm{(\bra{\tilde{0}}\otimes I)U\ket{\bar{0}}\ket{\psi}- \ket{\tilde{0}}D(P)^t\ket{\psi}}^2\leq \eps.$$
\end{theorem}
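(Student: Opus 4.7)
The plan is to approximate $D(P)^t$ by a low-degree polynomial in $D(P)$ and then implement that polynomial by combining iterates of $W(P)$. Since Definition~\ref{def:walk-operator} already makes $W(P)$ a block-encoding of $D(P)$ (with flag state $\ket{\bar 0}$), what remains is a sublinear-in-$t$ block-encoding of $D(P)^t$ with error $\sqrt{\eps}$ in the operator norm.

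First I would establish the approximation ingredient. Writing $x=\cos\theta$ and expanding $((e^{i\theta}+e^{-i\theta})/2)^t$ binomially yields the exact Chebyshev expansion $x^t = \sum_{k=0}^t c_k T_k(x)$, where $|c_k|$ is proportional to the binomial weight $\binom{t}{(t+k)/2}/2^t$ and vanishes when $k\not\equiv t\pmod 2$. These weights are exactly the distribution of a simple symmetric random walk on $\mathbb{Z}$ at time $t$, so a Chernoff bound gives $\sum_{k>d}|c_k|\leq\sqrt{\eps}$ for $d = \bigO{\sqrt{t\log(1/\eps)}}$. Hence the truncation $p_d(x)=\sum_{k=0}^d c_k T_k(x)$ satisfies $|p_d(x)-x^t|\leq\sqrt{\eps}$ uniformly on $[-1,1]$, and crucially $\sum_{k=0}^d |c_k|\leq 1$.

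Second I would realise $p_d(D(P))$ as a block-encoding using $\bigO{d}$ calls to $W(P)$. A standard Szegedy-type calculation (closely parallel to the $\textsc{Ref}$ discussion of Section~\ref{sec:prelim-qw-search}) shows that the iterate $\big[W(P)\cdot\bigl((2\ketbra{\bar 0}{\bar 0}-I)\otimes I\bigr)\big]^k$, projected back onto $\ket{\bar 0}$ in the flag register, implements $T_k(D(P))$ (up to an innocuous sign). These Chebyshev block-encodings are combined via a linear combination of unitaries: an ancilla register is prepared in the coin state $\sum_{k\leq d}\sqrt{|c_k|}\ket{k}$, the value of $k$ controls the number of Szegedy iterations applied, and the ancilla is uncomputed. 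Because $\sum_{k}|c_k|\leq 1$ the LCU yields a genuine block-encoding with no further amplification needed, and the controlled iterates can be arranged so as to use $\bigO{d}$ calls to $W(P)$ in total.

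The main obstacle is the bookkeeping: the flag state $\ket{\tilde 0}$ of the theorem must package together the $\ket{\bar 0}$ flag of $W(P)$, the LCU index register, and the ancillas used to prepare the coin state. Once these are folded together, the uniform bound $|p_d-x^t|\leq\sqrt{\eps}$ on $[-1,1]$ combined with $\|D(P)\|\leq 1$ gives $\|(p_d(D(P))-D(P)^t)\ket{\psi}\|\leq\sqrt{\eps}$, and hence $\|(\bra{\tilde 0}\otimes I)U\ket{\bar 0}\ket{\psi} - \ket{\tilde 0}D(P)^t\ket{\psi}\|^2\leq\eps$ as required.
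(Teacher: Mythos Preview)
Your proposal is correct and follows essentially the same route as the paper: expand $x^t$ in the Chebyshev basis via the binomial identity, truncate at degree $d\in\bigO{\sqrt{t\log(1/\eps)}}$ using a Chernoff bound on the symmetric-walk coefficients, block-encode each $T_k(D)$ by iterating $W(P)$ interleaved with the $\ket{\bar 0}$-reflection, and combine via LCU. The paper makes the ``can be arranged to use $\bigO{d}$ calls'' step explicit through the binary-controlled iterate $\mathbb{U}^{(\ell)}$ of \eqref{eq:contWalk}, and it works with the even-degree form $T_{2n}(D)$ coming from $([I-2\ketbra{\bar 0}{\bar 0}]\otimes I)W^\dagger([I-2\ketbra{\bar 0}{\bar 0}]\otimes I)W$ rather than your single-step iterate, but these are cosmetic differences in the same argument.
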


The resulting unitary $U$ can be equivalently described as a $(1,\log|A|,\eps)$-block-encoding of $D(P)^t$ \cite{chakraborty2018BlockMatrixPowers,gilyen2018QSingValTransf}.
For completeness we will later reprove this theorem, and give an explicit quantum circuit solving this problem, that will play a crucial role in \cref{sec:alter}.

\section{Quantum walk frameworks} \label{sec:qw-frameworks}
In this section we survey the different quantum walk search frameworks.

\subsection{Hitting time framework} \label{sec:ht-framework}

The hitting time framework is the quantum analogue of arguably the simplest random walk search algorithm:\footnote{In the case of classical algorithms, the subroutines $\mathtt{Setup}$, $\mathtt{Update}$, and $\mathtt{Check}$ are assumed to: sample a vertex according to $\pi$; sample a neighbour of the current vertex $u$; and check if the current vertex is marked. Note that the costs $\csetup$ and $\cupdate$ of the classical operations might be significantly cheaper than their quantum counterparts $\setup$ and $\update$ (since the quantum checking operation is just the reversible version of the classical checking operation, it will never be significantly harder).}
\begin{enumerate}
\item Use $\setupv(\pi)$ to sample a vertex $u$ according to $\pi$.
\item Repeat $\HT$ times:
\begin{enumerate}
\item Check if the current vertex is marked using $\checkv(M)$.
\item Sample a neighbour of the current vertex using $\updatev(P)$, and make that the current vertex.
\end{enumerate}
\end{enumerate}
If $\HT \in \Omega(\HT(P,M))$, then this algorithm finds a marked vertex with constant probability.
Its complexity is $\csetup + \HT(\cupdate + \ccheck)$.

The quantum analogue of this framework was first introduced by Szegedy~\cite{szegedy2004QMarkovChainSearch}, giving a quadratic speedup over the update and checking costs of the classical algorithm.
His algorithm, however, only \emph{detected} the presence of a marked vertex, rather than actually finding one.
Later work by Ambainis et al.~\cite{ambainis2019QW} resolved this issue, describing a quantum walk algorithm that effectively \emph{finds} a marked vertex with constant probability.
We summarize their result in the theorem below.

\begin{theorem}[Hitting time framework]\label{thm:hitting-time}
Let $P$ be any reversible Markov chain on a finite state space $X$, $M\subset X$ a marked set, and $\HT$ a known upper bound on $\HT(P,M)$.
Then there is a quantum algorithm that outputs a vertex $x$ from $M$ with constant probability in complexity
\[
\bigO{\setup \sqrt{\log(\HT)} + \sqrt{\HT}(\update + \check) \sqrt{\log(\HT)\log\log(\HT)}}.
\]
\end{theorem}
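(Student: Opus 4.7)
The plan is to follow the strategy of Ambainis et al.~\cite{ambainis2019QW}, combining three ingredients: the coherent initial state $\ket{\sqrt{\pi}}$ produced by $\setupv(\pi)$; interpolated walks $P(s)=(1-s)P+sP_M$ whose unique stationary distribution $\pi(s)$ concentrates on $M$ as $s\to 1$; and quantum fast-forwarding (Theorem~\ref{thm:fast-forwarding}), which approximates $D(P(s))^t\ket{\sqrt{\pi}}$ to error $\eps$ using $\bigO{\sqrt{t\log(1/\eps)}(\update+\check)}$ gates, via the implementation of $\updatev(P(s))$ discussed in Section~\ref{sec:prelim-qw-search}.

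First I would establish a ``one-shot'' statement: if $s\in[0,1)$ is chosen so that $1-s$ is of order $1/\HT(P,M)$, and $t$ is of order $\sqrt{\HT(P,M)}$, then $D(P(s))^t\ket{\sqrt{\pi}}$ has constant overlap with the marked subspace $\mathrm{span}\{\ket{u}:u\in M\}$. The argument is spectral. The stationary state $\ket{\sqrt{\pi(s)}}$ is the unique $+1$ eigenvector of $D(P(s))$, and a direct calculation on interpolated walks shows that for this scaling it has $\Omega(1)$ $\ell_2$-mass on $M$. The initial state $\ket{\sqrt{\pi}}$ can be decomposed along the eigenbasis of $D(P(s))$ into a ``high'' component in the eigenspaces with eigenvalues within $1/\sqrt{\HT}$ of $1$, and a ``bulk'' component. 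Applying $D(P(s))^t$ for $t=\Theta(\sqrt{\HT})$ nearly preserves the high component while damping the bulk, and a calculation (essentially the core lemma of~\cite{ambainis2019QW}) shows that the preserved high-eigenvalue mass is $\Omega(1)$ on $M$. This is exactly where the hypothesis $\HT(P,M)\le \HT$ enters the complexity $\sqrt{\HT}$.

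Since the true hitting time is only bounded above by $\HT$, the correct scale for $s$ is unknown, so I would run the one-shot routine for $\bigO{\log \HT}$ geometrically spaced guesses $s_i\approx 1-2^{-i}$, $i=0,\dots,\lceil\log_2 \HT\rceil$. At least one $i$ matches the scale of $\HT(P,M)$ up to constants. I would then perform amplitude amplification over the ``guess'' register together with $\checkv(M)$ as the marker, which requires $\bigO{\sqrt{\log \HT}}$ outer iterations. Each iteration invokes $\setupv(\pi)$ once and one fast-forwarding call with $t=\Theta(\sqrt{\HT})$, run with precision $\eps=1/\mathrm{poly}(\log \HT)$ so the accumulated error does not spoil amplification. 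The total cost is then
\[
\bigO{\sqrt{\log \HT}\cdot\setup}+\bigO{\sqrt{\log \HT}\cdot\sqrt{\HT\log(1/\eps)}\,(\update+\check)},
\]
which matches the claimed bound after substituting $\log(1/\eps)=\Theta(\log\log \HT)$.

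The main obstacle is the spectral claim in the first step: controlling how the hitting time bound $\HT(P,M)\le\HT$ translates into $\ell_2$-mass of $\ket{\sqrt{\pi}}$ in the eigenspaces of $D(P(s))$ close to $+1$, and ensuring that this mass lies essentially on $M$ after damping the bulk. This is the technical core of~\cite{ambainis2019QW} and relies on a careful analysis of the spectrum of the interpolated discriminant $D(P(s))$ around $1$, together with the classical fact that the absorbing walk $P_M$ hits $M$ in $\bigO{\HT}$ steps from $\pi$ with constant probability. Once this spectral lemma is in place, the rest is a standard packaging exercise combining fast-forwarding, interpolated walks, and amplitude amplification, whose cost bookkeeping directly yields the theorem.
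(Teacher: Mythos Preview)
Your outline mischaracterizes the argument of \cite{ambainis2019QW}. The proof there (and the paper's own generalization in Section~\ref{sec:electric}) is not the spectral/eigenspace analysis you describe; it is a combinatorial ``box'' argument. One takes $t$ uniformly in $[T]$ with $T=\Theta(\HT)$ and an interpolation parameter $q_M$ uniformly from a set of $\bigO{\log T}$ geometrically spaced values, and lower-bounds $\nrm{\Pi_M D^t(q)\ket{\sqrt{\pi}}}$ by the classical probability $\Pr_{\pi}\big(Y_t(q)\in M,\;Y_{t+t'}(q)\in S\big)$ (cf.\ Lemma~\ref{lem:nrm-bnd} and Corollary~\ref{cor:bnd-exp}). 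This probability is shown to be $\Omega(1/\log T)$ by tracking how stretching $M$- and $S$-blocks of a random walk path interacts with the random $t,t',q_M$ (Lemma~\ref{lem:comb}, Corollary~\ref{cor:bnd}). Fast-forwarding is applied with $t$ ranging up to $T=\Theta(\HT)$, costing $\bigO{\sqrt{T\log(1/\eps)}}$ update calls; your ``$t=\Theta(\sqrt{\HT})$'' conflates the classical step count being simulated with the quantum cost of simulating it, and is inconsistent with the cost expression you write afterwards.

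The spectral route you sketch --- fix a single $s$ with $1-s=\Theta(1/\HT)$ and argue that the high-eigenvalue component of $\ket{\sqrt{\pi}}$ in $D(P(s))$ lands on $M$ after powering --- is essentially what \cite{krovi2010QWalkFindMarkedAnyGraph} does, and it works when $|M|=1$. For general $M$, however, this approach is only known to give a bound in terms of the \emph{extended} hitting time $\HT^+(P,M)$, which can far exceed $\HT(P,M)$. Bridging the gap from $\HT^+$ to $\HT$ was precisely the contribution of \cite{ambainis2019QW}, and it required abandoning the eigenspace picture in favor of the classical-walk/box analysis above. So the ``main obstacle'' you flag is not a technicality to be filled in later: it is the entire content of the theorem, and the spectral lemma you attribute to \cite{ambainis2019QW} does not exist in the form you need.
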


\subsection{MNRS framework} \label{sec:mnrs-framework}

While the hitting time framework is optimal in terms of the number of quantum walk steps, it may be suboptimal in the number of calls to $\mathtt{Check}(M)$.
In contrast, the MNRS framework uses an optimal number of calls to $\mathtt{Check}(M)$, at the expense of possibly more calls to $\mathtt{Update}(P)$. This framework is again best understood using a random walk perspective.
Specifically, consider a classical algorithm that, rather than checking after every step, only checks after every $\frac{1}{\delta}$ steps:
\begin{enumerate}
\item Use $\mathtt{Setup}(\pi)$ to sample a vertex $u$ according to $\pi$.
\item Repeat $\frac{1}{\eps}$ times:
\begin{enumerate}
\item Check if the current vertex is marked using $\mathtt{Check}(M)$.
\item Repeat $\frac{1}{\delta}$ times:
\begin{enumerate}
\item[] Sample a neighbour of the current vertex using $\mathtt{Update}(P)$, and make that the current vertex. 
\end{enumerate}
\end{enumerate}
\end{enumerate}
If $\delta$ is a lower bound on the spectral gap of $P$, then by taking $\frac{1}{\delta}$ steps between checks, each random walk vertex that is checked is approximately distributed as an independent sample from $\pi$.
Hence, if $\eps \leq \pi(M)$, then after $\frac{1}{\eps}$ samples from $\pi$ a marked vertex will be sampled with constant probability.
The total complexity of this classical algorithm is $\csetup + \frac{1}{\eps}(\frac{1}{\delta}\cupdate + \ccheck)$.
If $\eps = \pi(M)$ and $\delta$ equals the spectral gap of $P$, then it holds that
\begin{equation} \label{eq:ht-bnds}
\frac{1}{\eps}
\leq
\HT(P,M)
\leq
\frac{1}{\eps\delta}.
\end{equation}
Hence this approach is often suboptimal in the number of steps with respect to the algorithm in the previous section\footnote{E.g., consider the graph consisting of two cliques on $n/2$ nodes, and a single edge between them. For a single marked element we get that $\pi(m) \in \Theta(1/n)$, $\delta(P) \in \Theta(1/n^2)$ and $\HT \in \Theta(1/n^2)$, so that $1/(\epsilon\delta) \gg \HT$.}, but it may be better in the number of checks.
This algorithm will hence be preferable in cases where the checking cost $\ccheck$ is much larger than the update cost $\cupdate$.

The quantum analogue of this classical algorithm was introduced by Magniez, Nayak, Roland and Santha~\cite{magniez2006SearchQuantumWalk}, who showed the following:
\begin{theorem}[MNRS framework]
Let $P$ be any reversible Markov chain on a finite state space $X$, $M \subset X$ a marked set, $\eps$ a known lower bound on $\pi(M)$ and $\delta$ a known lower bound on the spectral gap of $P$.
Then there is a quantum algorithm that outputs a vertex $x$ from $M$ with constant probability in complexity 
\[
\bigO{\setup + \frac{1}{\sqrt{\eps}}\left(\frac{1}{\sqrt{\delta}}\update + \check \right)}.
\]
\end{theorem}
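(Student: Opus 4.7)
The plan is to mirror the classical algorithm's structure via quantum amplitude amplification, using phase estimation to implement the nontrivial reflection. The initial state is $\ket{\sqrt{\pi}}=\sum_u\sqrt{\pi_u}\ket{u}$, produced by one call to $\setupv(\pi)$ at cost $\setup$. Let $\Pi_M=\sum_{m\in M}\ketbra{m}{m}$ be the projector onto marked vertices, so that the projection of $\ket{\sqrt{\pi}}$ onto the marked subspace has squared norm $\pi(M)\geq\eps$. Thus, in Grover-style language, $\ket{\sqrt{\pi}}$ makes an angle $\theta$ with the "good" subspace with $\sin\theta\geq\sqrt{\eps}$, and amplitude amplification with $\bigO{1/\sqrt{\eps}}$ iterations rotates $\ket{\sqrt{\pi}}$ to a state that is $\Omega(1)$-overlapping with marked vertices. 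Each iteration uses one reflection $\refl_M=I-2\Pi_M$ about the marked subspace (implementable with a single call to $\checkv(M)$ plus a phase flip, hence cost $\bigO{\check}$) and one reflection $\refl_\pi=2\ketbra{\sqrt{\pi}}{\sqrt{\pi}}-I$ about $\ket{\sqrt{\pi}}$.

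The core task is implementing $\refl_\pi$ efficiently. By \eqref{eq:discriminant}, $\ket{\sqrt{\pi}}$ is the unique eigenvector of $D(P)$ with eigenvalue $1$, and every other eigenvalue $\lambda_j$ satisfies $|\lambda_j|\leq 1-\delta$, hence its associated phase under $W(P)$ is separated from $0$ by $\Omega(\sqrt{\delta})$ (using $\arccos(1-\delta)=\Theta(\sqrt\delta)$). So I implement an approximate reflection by phase estimation: (i) run phase estimation on $W(P)$ with precision $\bigO{\sqrt{\delta}}$, which by standard analysis costs $\bigO{\update/\sqrt\delta}$; (ii) apply a phase flip conditioned on the estimated phase being nonzero; (iii) uncompute the phase estimation register. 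With $\bigO{\log(1/\eps)}$ repetitions inside the estimation, we obtain an $\bigO{\sqrt\eps}$-approximation to $\refl_\pi$ in the operator norm.

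Putting these pieces together yields a single amplification iteration of cost $\bigO{\check+\update/\sqrt\delta}$ (absorbing polylogarithmic factors, which are standard), and $\bigO{1/\sqrt\eps}$ iterations bring the state to a constant-amplitude superposition supported on $M$. Measuring in the computational basis returns a marked vertex with constant probability. The total complexity is therefore
\[
\bigO{\setup+\frac{1}{\sqrt{\eps}}\Big(\frac{1}{\sqrt{\delta}}\update+\check\Big)},
\]
as claimed.

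The main obstacle in this plan is ensuring that the approximate reflection $\widetilde\refl_\pi$ does not spoil the amplification. Since we iterate $\bigO{1/\sqrt\eps}$ times, the per-step error in operator norm must be $o(\sqrt\eps)$ in order for the accumulated deviation from ideal amplitude amplification to remain a constant. This is achieved by choosing the phase-estimation precision and repetition count so that the failure probability per call is $\bigO{\eps}$, at the (logarithmic) cost accounted for above; the argument is the standard effective-spectral-gap / approximate-reflection analysis of Magniez, Nayak, Roland and Santha, which I would invoke rather than redo.
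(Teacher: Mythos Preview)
The paper does not actually prove this theorem: it is stated in Section~\ref{sec:mnrs-framework} as a result of Magniez, Nayak, Roland and Santha~\cite{magniez2006SearchQuantumWalk}, without proof. Your sketch is essentially the original MNRS argument---amplitude amplification on $\ket{\sqrt{\pi}}$, with the reflection about $\ket{\sqrt{\pi}}$ implemented approximately via phase estimation exploiting the $\Theta(\sqrt{\delta})$ phase gap---and is correct in outline. The only place this paper independently ``recovers'' the MNRS bound is as the special case $\sigma=\pi$, $t=1/\delta$ of the unified framework (Theorem~\ref{thm:mnrs-electric}), which proceeds through quantum fast-forwarding of $P^t$ rather than phase estimation, and in fact picks up polylogarithmic overhead; so your route is the more direct one for this particular statement.

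Two small technical remarks on your sketch. First, phase estimation is not run on $W(P)$ itself but on the Szegedy-type rotation $([2\ketbra{\bar 0}{\bar 0}-I]\otimes I)\,W(P)$ (or its square), whose eigenphases on the relevant invariant subspace are $\pm\arccos\lambda_j$; this is what produces the $\Theta(\sqrt{\delta})$ phase gap you invoke. As stated in Definition~\ref{def:walk-operator}, $W(P)$ is merely a block-encoding of $D(P)$ and need not have this spectral structure on its own. Second, the bound in the theorem carries no logarithmic factors, whereas the naive phase-estimation-plus-union-bound argument you describe does introduce a $\mathrm{polylog}(1/\eps)$ factor; the clean statement requires the more refined MNRS analysis of the approximate reflection, which you correctly say you would cite rather than reproduce.
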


\subsection{Controlled quantum amplification framework}\label{sec:prelim-controlled}
Dohotaru and H{\o}yer \cite{dohotaru2017controlledQAmp} showed that an \emph{optimal} payoff between update cost and checking cost can be obtained for the special case where there is a single marked element $M = \{m\}$.
Specifically, consider the following classical algorithm:
\begin{enumerate}
\item Use $\mathtt{Setup}(\pi)$ to sample a vertex $u$ according to $\pi$.
\item Repeat $1/\eps$ times:
\begin{enumerate}
\item Check if the current vertex is marked using $\mathtt{Check}(M)$.
\item Repeat $\eps \HT$ times:
\begin{enumerate}
\item[] Sample a neighbour of the current vertex using $\mathtt{Update}(P)$, and make that the current vertex. 
\end{enumerate}
\end{enumerate}
\end{enumerate}
If $\eps \in \Theta(\pi(m))$, $\HT \in \Omega(\HT(P,\{m\}))$, and there is a unique marked element $M = \{m\}$, then the above algorithm returns $m$ with constant probability.
This can be proven using the following lemma.
\begin{lemma}[{\cite[Section 6]{dohotaru2017controlledQAmp}}] \label{lemma:DH}
Consider a reversible Markov chain $P$ with stationary distribution $\pi$, and a single marked element $m$.
If $\tau \in \Omega(\pi(m) \HT(P,m))$, then
\[
\HT(P^\tau,m)
\in \bigO{\frac{1}{\pi(m)}}.
\]
\end{lemma}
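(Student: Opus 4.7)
My plan is to use the spectral decomposition of the discriminant matrix $D(P)$ to express both $\pi(m)\HT(P,m)$ and $\pi(m)\HT(P^\tau,m)$ as sums over eigenmodes, and then bound the second sum by splitting its terms into contributions from ``slow'' and ``fast'' eigenmodes of $P$. Let $1=\lambda_0 > \lambda_1 \geq \cdots \geq \lambda_{n-1} > -1$ denote the eigenvalues of $P$ and let $v_0 = \sqrt{\pi},v_1,\ldots,v_{n-1}$ be orthonormal eigenvectors of $D(P)$. Since $P^t = \diag{\sqrt{\pi}}^{-1} D(P)^t \diag{\sqrt{\pi}}$, the diagonal entries satisfy $P^t(m,m) = \sum_i \lambda_i^t v_i(m)^2$, and $v_0(m)^2 = \pi(m)$. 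Combined with the standard fundamental-matrix identity $\HT(P,m)=Z_{mm}/\pi(m)$ where $Z_{mm}=\sum_{t\geq 0}(P^t(m,m)-\pi(m))$, this gives
\[
\pi(m)\HT(P,m) \;=\; \sum_{i\geq 1} \frac{v_i(m)^2}{1-\lambda_i},
\]
and applying the same identity to $P^\tau$ (whose eigenvalues are $\lambda_i^\tau$ and whose stationary distribution is still $\pi$) yields
\[
\pi(m)\HT(P^\tau,m) \;=\; \sum_{i\geq 1}\frac{v_i(m)^2}{1-\lambda_i^\tau}.
\]

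It therefore suffices to prove that the last sum is $O(1)$ whenever $\tau \geq c_0 \,\pi(m)\HT(P,m)$ for a suitable constant $c_0$. Partition the indices into a slow set $S=\{i\geq 1 : \tau(1-\lambda_i)\leq 1\}$ and a fast set $F=\{i\geq 1 : \tau(1-\lambda_i)>1\}$. For $i\in F$, the bound $(1-x)^\tau\leq e^{-\tau x}$ gives $\lambda_i^\tau \leq e^{-1}$, so $1-\lambda_i^\tau\geq 1-e^{-1}\geq 1/2$, and the $F$-contribution is at most $2\sum_i v_i(m)^2 = 2(1-\pi(m)) \leq 2$ by orthonormality. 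For $i\in S$, the elementary inequality $1-e^{-y}\geq y/2$ for $y\in[0,1]$ together with $(1-x)^\tau\leq e^{-\tau x}$ yields $1-\lambda_i^\tau \geq \tau(1-\lambda_i)/2$, so the $S$-contribution is at most
\[
\frac{2}{\tau}\sum_{i\in S}\frac{v_i(m)^2}{1-\lambda_i} \;\leq\; \frac{2\,\pi(m)\HT(P,m)}{\tau} \;\leq\; \frac{2}{c_0}.
\]
Adding both pieces gives $\pi(m)\HT(P^\tau,m) \leq 2 + 2/c_0 = O(1)$, i.e.\ $\HT(P^\tau,m) \in \bigO{1/\pi(m)}$.

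The main obstacle is the handling of negative eigenvalues, where $\lambda_i^\tau$ changes sign with the parity of $\tau$ and the estimate $(1-x)^\tau \leq e^{-\tau x}$ does not apply to $\lambda_i$ directly. The cleanest workaround is to replace $P$ by its lazy version $(I+P)/2$, which pushes the spectrum into $[0,1]$ and changes $\HT(P,m)$ and the per-step cost by only constant factors; otherwise the negative regime requires a short case analysis: for odd $\tau$ one has $1-\lambda_i^\tau > 1$ for free, while for even $\tau$ the slow/fast split can be run in terms of $|\lambda_i|$, after noting that $1-\lambda_i \geq 1$ on this part of the spectrum so its total contribution to $\pi(m)\HT(P,m)$ is already $O(1)$.
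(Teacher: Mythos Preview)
The paper does not prove this lemma; it is simply quoted from \cite[Section~6]{dohotaru2017controlledQAmp} and used as a black box. Your spectral approach via the eigentime identity $\pi(m)\HT(P,m)=\sum_{i\geq 1}v_i(m)^2/(1-\lambda_i)$ is a clean and standard way to establish the result, and the slow/fast split works perfectly once all eigenvalues are nonnegative.

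There is, however, a genuine gap in your treatment of negative eigenvalues for even~$\tau$. Your remark that ``$1-\lambda_i\geq 1$ on this part of the spectrum so its total contribution to $\pi(m)\HT(P,m)$ is already $O(1)$'' bounds the wrong quantity: what you would need for the slow part is control of $\sum_{\lambda_i<0}v_i(m)^2/(1-|\lambda_i|)$, not of $\sum_{\lambda_i<0}v_i(m)^2/(1-\lambda_i)$, and the hypothesis $\tau\geq c_0\,\pi(m)\HT(P,m)$ gives no leverage on the former. In fact the statement is false as literally written without a laziness-type assumption: for the two-state chain $P=\left(\begin{smallmatrix}\epsilon & 1-\epsilon\\ 1-\epsilon & \epsilon\end{smallmatrix}\right)$ one has $\pi(m)\HT(P,m)=1/(4(1-\epsilon))\approx 1/4$, so the hypothesis already allows $\tau=2$, yet $\HT(P^2,m)=1/(4\epsilon(1-\epsilon))$ while $1/\pi(m)=2$. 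Your first suggested fix---pass to the lazy chain $(I+P)/2$, which only changes hitting times by constant factors---is therefore not merely the ``cleanest workaround'' but essential; with that reduction in place your argument is complete and correct.
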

If we set $\tau = \eps \HT$ then $\HT(P^\tau,m)$ exactly describes the expected number of repetitions of step 2.~in the above algorithm that are required to find a marked element.
The algorithm has complexity of the order $\csetup + {\HT} \cupdate + \frac{1}{{\eps}} \ccheck$.
This corresponds to the update complexity of the classical algorithm in the hitting time framework, and the checking complexity of the classical algorithm in the MNRS framework.

In their \emph{controlled quantum amplification framework}, Dohotaru and H{\o}yer \cite{dohotaru2017controlledQAmp} construct a quantum analogue of the above result.
\begin{theorem}[Controlled quantum amplification framework]
Let $P$ be any reversible Markov chain on a finite state space $X$, $m \in X$ a unique marked element, $\HT$ a known upper bound on $\HT(P,\{m\})$, and $\eps$ a known lower bound on $\pi(m)$.
Then there is a quantum algorithm that outputs $m$ with constant probability in complexity 
\[
\bigOt{\setup + \sqrt{\HT} \, \update + \frac{1}{\sqrt{\eps}} \, \check}.
\]
\end{theorem}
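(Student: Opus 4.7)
The plan is to apply the hitting time framework (Theorem~\ref{thm:hitting-time}) to the powered Markov chain $P^\tau$ with $\tau := \lceil \eps \HT \rceil$, and to implement the required quantum walk operator $W(P^\tau)$ approximately via quantum fast-forwarding (Theorem~\ref{thm:fast-forwarding}). The classical intuition in the excerpt already identifies this as the right idea: iterating $\tau$ walk steps between checks collapses the expected number of checks to $O(1/\eps)$ by Lemma~\ref{lemma:DH}, so the quantum version should require only $\widetilde{O}(1/\sqrt{\eps})$ calls to $\check$.

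The key technical observation is that for a reversible chain $P$ with stationary distribution $\pi$, the discriminant satisfies
\[
D(P)^\tau = \bigl(\diag{\sqrt{\pi}} P \diag{\sqrt{\pi}}^{-1}\bigr)^\tau = \diag{\sqrt{\pi}} P^\tau \diag{\sqrt{\pi}}^{-1} = D(P^\tau).
\]
Hence a block-encoding of $D(P)^\tau$, which is exactly what fast-forwarding produces in $\bigOt{\sqrt{\tau}}$ calls to $W(P)$, is a quantum walk operator for $P^\tau$ in the sense of Definition~\ref{def:walk-operator}. Moreover, $P^\tau$ has the same stationary distribution $\pi$ as $P$, so $\setupv(\pi)$ serves as the setup for both chains. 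Plugging $\HT(P^\tau,\{m\}) \in \bigO{1/\eps}$ (from Lemma~\ref{lemma:DH}) into Theorem~\ref{thm:hitting-time} for the chain $P^\tau$ yields total complexity (ignoring polylog factors) of
\[
\setup + \sqrt{1/\eps}\bigl(\sqrt{\tau}\,\update + \check\bigr)
= \setup + \sqrt{\HT}\,\update + \tfrac{1}{\sqrt{\eps}}\,\check,
\]
matching the claimed bound once the polylogarithmic factors are absorbed into $\bigOt{}$.

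The main obstacle is error control. The hitting time framework invokes the walk operator $\bigOt{1/\sqrt{\eps}}$ times, and fast-forwarding only delivers $W(P^\tau)$ to accuracy $\eps'$ in operator norm. By subadditivity of errors under composition of unitaries, taking $\eps' = \Theta(\sqrt{\eps})$ suffices to ensure that the final measurement distribution deviates from the ideal one by at most a small constant, preserving constant success probability. Since fast-forwarding's dependence on $\eps'$ is only $\sqrt{\log(1/\eps')}$, this extra accuracy costs no more than a polylogarithmic factor, which is absorbed into $\bigOt{}$. One minor subtlety is that Theorem~\ref{thm:hitting-time} also requires controlled and inverse versions of the walk operator; both are still produced by the fast-forwarding circuit, since one can invert and control quantum circuits with only constant overhead per gate.
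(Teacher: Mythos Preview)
Your proposal is correct and matches the paper's approach. The paper does not give a standalone proof of this theorem in Section~\ref{sec:prelim-controlled} (it is cited from \cite{dohotaru2017controlledQAmp}), but it recovers the result as the special case $\sigma=\pi$, $M=\{m\}$, $t=\eps\HT$ of the unified framework (Theorem~\ref{thm:mnrs-electric}), whose proof is precisely your argument: apply the hitting-time/electric-network finding result to $P^t$, implement $W(P^t)$ via fast-forwarding using $D(P^t)=D(P)^t$, and invoke Lemma~\ref{lemma:DH} to bound $\HT(P^t,\{m\})\in\bigO{1/\eps}$, with the fast-forwarding precision set inversely proportional to the number of walk-operator calls so that errors accumulate to $O(1)$.
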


\subsection{Electric network framework}\label{sec:electric-framework}
A main drawback of all the aforementioned frameworks is that they require the quantum walk to start from a quantum sample of the stationary distribution, which may in general be much more difficult to construct than, say, a distribution that is supported on a single vertex.
Belovs \cite{belovs2013Electric} showed that one can combine quantum walks with tools from electric network theory to get rid of this restriction, proving the theorem below.
For an arbitrary distribution $\sigma$, we let $\setup(\sigma)$ denote the complexity of generating the initial state $\ket{\sqrt{\sigma}} = \sum_u\sqrt{\sigma_u}\ket{u}$, which can be much smaller than the setup cost $\setup=\setup(\pi)$ for generating the quantum sample $\ket{\sqrt{\pi}}$ of the stationary distribution.

We also define $\Lambda(\sigma,C)$, for some choice of $C$, as a unitary that acts as:
\begin{equation}\label{eq:Lambda}
\ket{{0}}\ket{u}
\mapsto \frac{\sqrt{\pi_u}\ket{{0}} + \sqrt{\sigma_u/C}\ket{1}}{\sqrt{\pi_u + \sigma_u/C}}\ket{u},
\end{equation}
and let $\mathsf{R}(\sigma)$ be its implementation cost (again including controlled/inverse versions). 
We define $\update(\sigma)=\mathsf{U}+\mathsf{R}(\sigma)$. With these costs, we can describe Belovs' framework as follows.
\begin{theorem}[Electric network framework \cite{belovs2013Electric}] \label{thm:belovs}
Let $P$ be any reversible Markov chain on a finite state space $X$, $M\subset X$ a marked set, $\sigma$ a distribution on $X$, and $C$ a known upper bound on $C_{\sigma,M}$.
Then there is a quantum algorithm that decides if $M\neq \emptyset$ with bounded error in complexity
\[
\bigO{\setup(\sigma) + \sqrt{C}\left(\update(\sigma) + \check\right)}.
\]
\end{theorem}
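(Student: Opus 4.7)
The plan is to mirror Belovs' original approach: combine an interpolated quantum walk with phase estimation so that the presence of a marked element manifests as a detectable spectral signature on an easy-to-prepare initial state, with the effective resistance $R_{\sigma,M}$ controlling the required precision. First, I would construct the initial state. Using $\setupv(\sigma)$ once to produce $\ket{\sqrt{\sigma}}$, followed by $\Lambda(\sigma,C)$ as defined in \eqref{eq:Lambda}, I obtain a state proportional to $\sum_u (\sqrt{\pi_u}\ket{0}+\sqrt{\sigma_u/C}\ket{1})\ket{u}$ on an extended register. After rescaling, this can be viewed as a superposition containing a $\ket{\sqrt{\pi}}$ component and a $\ket{\sqrt{\sigma}}$ component weighted by $1/\sqrt{C}$, which is precisely the ``test state'' for which the electric flow interpretation of $R_{\sigma,M}$ becomes useful.

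Next, I would implement the interpolated walk $W(P(s))$ for an appropriate $s$ close to $1$, using the recipe from the Preliminaries: one call to $\updatev(P)$ together with $\bigO{1}$ calls to $\checkv(M)$ and $\Lambda(\sigma,C)$, giving cost $\bigO{\update(\sigma)+\check}$ per application. The key spectral fact, which I would import from the theory of effective resistance, is a dichotomy: when $M=\emptyset$, the prepared state is a $+1$-eigenvector of the (appropriately projected) interpolated walk operator, while when $M\neq\emptyset$, the existence of a unit $\sigma$-to-$M$ flow of energy $R_{\sigma,M}\le C/W$ together with a harmonic ``potential'' witness produces an invariant subspace on which the walk operator has eigenphases bounded away from $0$ by $\Omega(1/\sqrt{C})$, and on which the initial state has constant overlap. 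This is the standard flow-versus-potential argument used in \cite{belovs2013Electric}, adapted to our block-encoded walk operator via the relation between $W(P(s))$ and $D(P(s))$.

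Given this spectral dichotomy, phase estimation on $W(P(s))$ applied to the prepared state, run at precision $\Theta(1/\sqrt{C})$, distinguishes the two cases with bounded error. The precision requires $\bigO{\sqrt{C}}$ controlled applications of $W(P(s))$, yielding an overall complexity of
\[
\bigO{\setup(\sigma)+\mathsf{R}(\sigma)+\sqrt{C}\bigl(\update(\sigma)+\check\bigr)}=\bigO{\setup(\sigma)+\sqrt{C}\bigl(\update(\sigma)+\check\bigr)},
\]
matching the bound. Standard amplification (running the phase-estimation-based tester $\bigO{1}$ times) boosts the success probability to a constant, and because we are only deciding $M\neq\emptyset$, no further amplitude amplification is needed.

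The main obstacle, and the part that requires real work rather than bookkeeping, is the spectral gap statement in the middle paragraph: one must exhibit an explicit flow and potential to certify that the prepared state lies in a subspace where the eigenphases of $W(P(s))$ are $\Omega(1/\sqrt{C})$ in magnitude whenever $M\neq\emptyset$. This is exactly where the quantity $R_{\sigma,M}$ enters, through the min–max duality between electric flows and potentials, and where the choice of interpolation parameter $s$ must be tuned to ensure that the absorbing behaviour at $M$ dominates over the residual mixing of $P$. Everything else — preparing the initial state, implementing $W(P(s))$ at cost $\update(\sigma)+\check$, and running phase estimation — is modular given the tools in the Preliminaries.
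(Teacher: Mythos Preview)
The paper does not give its own proof of this statement: \cref{thm:belovs} is simply quoted from Belovs~\cite{belovs2013Electric}, and the paper's contributions begin only with the \emph{finding} version, \cref{thm:electric-finding}. So there is nothing in the paper to compare your argument against directly; I will instead compare it to Belovs' actual construction, pieces of which the paper does spell out in \cref{sec:mod-graph}.

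Your high-level plan---prepare a simple test state, run phase estimation on a walk operator, and use a flow/potential duality to certify a spectral dichotomy with gap $\Theta(1/\sqrt{C})$---is exactly Belovs' approach. However, you have conflated two distinct constructions that appear in the paper. The \emph{interpolated walk} $P(s)=(1-s)P+sP_M$ from \cref{sec:prelim-interpolated} depends only on $M$, not on $\sigma$; implementing $W(P(s))$ uses $\checkv(M)$ but never $\Lambda(\sigma,C)$, and its spectrum knows nothing about $R_{\sigma,M}$. The object Belovs actually uses is the walk $P'$ on the \emph{modified graph} $G'$ of \cref{sec:mod-graph}, obtained by attaching new vertices $(1,u)$ to each $u\in\supp(\sigma)$ with weight $\sigma_uW/C$. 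It is this $P'$ whose implementation costs $\update(P)+\mathsf{R}(\sigma)$ via $\Lambda(\sigma,C)$ (\cref{lem:Pprime-cost}), and whose spectral behaviour is governed by $R_{\sigma,M}$.

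Correspondingly, the initial state is not obtained by applying $\Lambda(\sigma,C)$ to $\ket{\sqrt{\sigma}}$ (that map acts on $\ket{0}\ket{u}$, not on a superposition over $u$, and would not produce the state you describe). The test state is simply $\ket{1}\otimes\ket{\sqrt{\sigma}}$, i.e.\ the coherent encoding of $\sigma'$ supported on the new vertices of $G'$, prepared at cost $\setup(\sigma)+\bigO{1}$. With these two fixes---walk on $P'$ rather than $P(s)$, and take the trivially prepared state on the attached vertices---your phase-estimation sketch with precision $\Theta(1/\sqrt{C})$ and the flow/potential witness is precisely Belovs' argument, and the claimed complexity follows.
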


Recall that if $\sigma = \pi$ then $C = HT(P,M)$.
In addition, $\mathsf{R}(\sigma)$ becomes trivial in this case, so that this recovers the hitting time framework (except that it only \emph{detects} marked elements).

\section{Finding in the electric network framework} \label{sec:electric}
The major drawback of the electric network framework is that it only allows for detecting marked vertices, rather than actually finding one -- in fact, if we only want to detect a marked vertex, the electric network framework is a strict generalization of the hitting time framework.
In this section, we describe a quantum algorithm that reproduces the electric network framework, but in addition actually \emph{finds} marked elements, making it a strict generalization of the hitting time framework.

Recall that we are given a weighted graph $G$ over $X$ with total edge weight $W$, a distribution $\sigma$ over $X$, and a subset $M \subset X$ of marked elements.
The quantity $R_{\sigma,M}$ denotes the effective resistance from $\sigma$ to $M$, and we define $C_{\sigma,M} = WR_{\sigma,M}$.
We will prove the following.
\begin{theorem}[Electric network framework]\label{thm:electric-finding}
Let $P$ be any reversible Markov chain on a finite state space $X$, $M\subset X$ a marked set, $\sigma$ a distribution on $X$, and $C$ a known upper bound on $C_{\sigma,M}$.
There is a quantum algorithm that finds a marked element from $M$, or decides that it is empty, with bounded error in complexity
\[
\bigO{\sqrt{\log C}\,\mathsf{S}(\sigma) + \sqrt{C\log C\log\log C} \left(\mathsf{U}(\sigma)+\mathsf{C}\right)}.
\]
\end{theorem}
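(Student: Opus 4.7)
The plan is to follow the template of~\cite{ambainis2019QW} for finding in the hitting time framework, but replacing the use of the walk $P$ with the interpolated walks $P(s)$, and starting from $\ket{\sqrt{\sigma}}$ rather than $\ket{\sqrt{\pi}}$. At a high level, for an interpolation parameter $s$ close to $1$ and an appropriate number of steps $t$, the action of $D(P(s))^t$ on $\ket{\sqrt{\sigma}}$ should leave a constant amount of mass on a state supported on $M$. Quantum fast-forwarding (Theorem~\ref{thm:fast-forwarding}) implements this transformation up to arbitrary precision at a cost of $\bigO{\sqrt{t\log(1/\varepsilon)}}$ applications of the interpolated walk, which in turn has cost $\bigO{\update(\sigma)+\check}$ by the block-encoding recipe described in Section~\ref{sec:prelim-qw-search}.

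The central obstacle, as flagged in the introduction, is that $C_{\sigma,M}$ lacks the clean probabilistic meaning that $\HT(P,M)$ had and that drove the analysis in~\cite{ambainis2019QW}. My plan is to first establish a usable analogue: following Belovs, I augment the graph by attaching an auxiliary sink vertex to each $u\in\supp(\sigma)$ with edge weight proportional to $\sigma_u$, and then show that on the resulting walk, with constant probability the walker starting from $\sigma$ hits $M$ and then returns to $\supp(\sigma)$ within $\bigO{C_{\sigma,M}}$ steps. The proof would proceed via the identity $C_{\sigma,M}=WR_{\sigma,M}$ together with the generalized commute-time correspondence (Theorem~\ref{thm:commuteVertex} extended in Appendix~\ref{app:Cs-M}), using flows in the augmented graph to bound the expected commute-type time from $\sigma$ to $M$ by $\bigO{C_{\sigma,M}}$ and then invoking Markov's inequality.

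With this interpretation in hand, the quantum algorithm I would run is: choose $k$ uniformly from $\{0,\ldots,\lceil\log_2 C\rceil\}$, choose $t$ uniformly from $\{1,\ldots,2^k\}$, and pick the interpolation parameter $s$ from an $\bigO{\log C}$-sized grid so that $1-s$ sweeps dyadic values calibrated to the unknown scale of $1/C_{\sigma,M}$; then prepare $\ket{\sqrt{\sigma}}$ using $\setupv(\sigma)$, apply the fast-forwarded $D(P(s))^t$ with precision small enough that the per-guess error does not leak outside the amplified subspace, measure, and test the outcome with $\checkv(M)$. The classical lemma guarantees that at least one pair $(k,s)$ in the grid realises constant overlap with $M$, so amplitude amplification over the $\bigO{\log C}$ guesses lifts the success probability to a constant while contributing a $\sqrt{\log C}$ factor to the setup cost and, together with the fast-forwarding error analysis, a $\sqrt{\log C\log\log C}$ factor to the walk cost. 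Adding up, one call costs $\setup(\sigma)+\bigO{\sqrt{C\log C}}(\update(\sigma)+\check)$, and the amplification yields the advertised bound.

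The hardest piece will be the two-sided classical lemma for $C_{\sigma,M}$: showing that with constant probability the walk not only \emph{hits} $M$ but also \emph{returns} to $\supp(\sigma)$ within $\bigO{C_{\sigma,M}}$ steps. Hitting alone follows from a generalized commute identity plus Markov, but the return leg genuinely requires the graph augmentation, since on, e.g., a long path graph with $\sigma$ spread out over many vertices the unmodified return time can exceed $C_{\sigma,M}$ by large factors. The augmented-graph trick effectively collapses $\supp(\sigma)$ into a single super-source through the auxiliary sink, restoring the analogy with the single-vertex commute time that underpins the analysis of~\cite{ambainis2019QW} and enabling its quantum amplification argument to go through for general $\sigma$.
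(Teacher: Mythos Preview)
Your high-level plan is close to the paper's, and you correctly identify the hardest piece: the classical hit-and-return statement for $C_{\sigma,M}$, established via the Belovs-style graph augmentation. The paper proves exactly this (Claim~\ref{claim:commute-time} combined with Lemma~\ref{lem:props-mod-graph-gen}). However, there is a genuine gap in how you intend to turn this classical statement into a quantum lower bound.

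You propose to run the \emph{single}-parameter interpolated walk $P(s)$, as in~\cite{ambainis2019QW}. That works when $\sigma=\pi$ because then $S=X$ and the ``return to $S$'' event is automatic. For general $\sigma$ it is not. The quantity you need to lower-bound is $\mathbb{E}_{t}\nrm{\Pi_M D(P(s))^t\ket{\sqrt{\sigma}}}$, and the Cauchy--Schwarz trick the paper (and~\cite{ambainis2019QW}) uses reduces this to $\mathbb{E}_{t,t'}\Pr_\sigma(Y_t\in M,\,Y_{t+t'}\in S)$. Even granting your classical lemma that the walk hits $M$ and then returns to $S$ once within $\bigO{C_{\sigma,M}}$ steps, the \emph{fraction of time} spent in $S$ on that window is only $\bigO{1/C}$, so the time-averaged return probability is tiny and the bound collapses. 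The graph augmentation does not fix this: in the modified graph the new vertices $(1,u)$ each have a single edge, so the walk leaves $S'$ immediately.

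The paper's remedy is a \emph{second} interpolation parameter: it runs the doubly-interpolated walk $P(q)$ with $q=(q_S,q_M)$, where $q_S=1-2/T$ makes the walk stick at $\supp(\sigma)$ for $\Theta(T)$ steps once it returns (see~\eqref{eq:Pq} and Lemma~\ref{lem:comb}). The $q_M$ grid plays the role your parameter $s$ plays; the $q_S$ parameter is the new ingredient that has no analogue in~\cite{ambainis2019QW} and is precisely what converts ``returns to $S$ once'' into ``is in $S$ for a constant fraction of a length-$\Theta(T)$ window''. Without it, your step ``the classical lemma guarantees that at least one pair $(k,s)$ realises constant overlap with $M$'' does not follow. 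Implementing the extra interpolation is cheap (it only needs a reflection about $\supp(\sigma)$, which the augmentation makes trivial), so once you add it your outline goes through essentially as the paper's does.
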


In earlier work, Belovs \cite{belovs2013Electric} showed that it is possible to detect the presence of marked vertices in $\bigO{\sqrt{C}}$ quantum walk steps (Theorem~\ref{thm:belovs}).
Our work strengthens this result by also finding a marked element, at the cost of an additional log factor.
Our algorithm and analysis runs along the same lines as \cite{ambainis2019QW}.
As described in Section~\ref{sec:QW-alg}, the algorithm combines quantum fast-forwarding with a quantum walk derived from an interpolated Markov chain.
The analysis, described in Section~\ref{sec:boxes} reduces the success probability of our quantum algorithm to the probability that a classical random walk starting from $\sigma$ hits $M$, and then returns to the support of $\sigma$, within $\bigO{C_{\sigma,M}}$ steps of the walk, similar to the analysis in \cite{ambainis2019QW}.
In order to lower bound this quantity, we extend the known combinatorial interpretation of the quantity $C_{\sigma,M}$ from the case where $\sigma$ is a singleton (\cref{thm:commuteVertex}) to a more general setting.
This is described in Section~\ref{sec:combinatorial}.

\begin{remark}
A similar result, but restricted to the special case where the graph is a tree, can be found in the quantum algorithm for backtracking by Montanaro \cite{montanaro2015quantum}.
Starting from the root of a binary tree, this algorithm incurs an additional log factor for actually finding a solution, rather than simply detecting one.
The extension however crucially relies on the tree structure of the graph, essentially performing a binary search, and hence seems restricted to this special class of graphs.
\end{remark}

\subsection{Combinatorial interpretation of \texorpdfstring{$C_{S,M}$ and $C_{\sigma,M}$}{C(S,M) and C(sigma,M)} }\label{sec:combinatorial}
In Theorem~\ref{thm:commuteVertex} we mentioned the classic result that for any vertex $s$ and subset $M$, the electric quantity $C_{s,M} = WR_{s,M}$ equals the commute time $\Exp_s(\tau^M_s)$ from $s$ to $M$.
A similar interpretation however seems to be lacking for the more general quantity $C_{\sigma,M} = WR_{\sigma,M}$.
While one could expect that a similar relation should hold, at least for the special case where $\sigma = \pi|_S$ equals the stationary distribution on some subset $S$, we provide a counterexample in Appendix~\ref{app:counterexample}.
There we show that in certain cases $C_{\pi|_S,M} = WR_{\pi|_S,M}$ is not equal to the commute time $\Exp_{\pi|_S}(\tau^M_S)$ from $\pi|_S$ to $M$ and back to $S$.

Nevertheless, we do succeed in proving a one-way bound, showing that a variant of the commute time can indeed be bounded by the electric quantity $C_{\pi|_S,M}$, which will prove sufficient for our purpose.

\begin{claim} \label{claim:commute-time}
Let $S\subseteq X\setminus M$, and $p\in \R, T\in \N$ such that $\frac{2}{T}\leq \pi(S)p \leq 1/C_{S,M}$.
Then with probability at least $p/2$ the random walk started from $\sigma=\pi|_{S}$ first hits $M$, and then returns to $S$, in the first $T$ steps.
\end{claim}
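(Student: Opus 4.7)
The plan is to lower-bound the probability of interest by the probability of a single fortunate event: that the first excursion of the walk out of $S$ both visits $M$ and returns to $S$ within $T$ steps. Writing $\tau^S_+:=\min\{t\geq 1:X_t\in S\}$ for the first return time, the inclusion
\[
\{\tau^M<\tau^S_+\}\cap\{\tau^S_+\leq T\}\ \subseteq\ \{\tau^M_S\leq T\}
\]
holds because if during the first excursion the walk hits $M$ and then returns to $S$ by time $T$, then a fortiori $\tau^M_S\leq\tau^S_+\leq T$. So it suffices to lower-bound $\Pr_{\pi|_S}[\tau^M<\tau^S_+]$ and upper-bound $\Pr_{\pi|_S}[\tau^S_+>T]$, then combine via $\Pr[A\cap B]\geq\Pr[A]-\Pr[B^c]$.

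For the first quantity I would run the standard voltage/Dirichlet argument. Define the harmonic function $h$ on $X$ with $h|_M=1$, $h|_S=0$, so $h(v)=\Pr_v[\tau^M<\tau^S]$ on $X\setminus(S\cup M)$. For $u\in S$, conditioning on the first step gives $\Pr_u[\tau^M<\tau^S_+]=\sum_v P_{u,v}h(v)$. Multiplying by $w_u$ and summing over $u\in S$ produces $\sum_{u\in S,v}w_{u,v}h(v)$, which is precisely the total current entering $S$ under these boundary voltages; by Ohm's law in the network obtained by short-circuiting $S$, this current equals $1/R_{S,M}$. Dividing by $\sum_{u\in S}w_u=W\pi(S)$ yields the key identity
\[
q:=\Pr_{\pi|_S}[\tau^M<\tau^S_+]=\frac{1}{W\pi(S)R_{S,M}}=\frac{1}{\pi(S)C_{S,M}},
\]
and the hypothesis $\pi(S)p\leq 1/C_{S,M}$ is exactly $p\leq q$.

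For the second quantity, Kac's return-time formula gives $\Exp_{\pi|_S}[\tau^S_+]=1/\pi(S)$, so Markov's inequality yields $\Pr_{\pi|_S}[\tau^S_+>T]\leq 1/(\pi(S)T)$, which by the remaining hypothesis $2/T\leq\pi(S)p$ is at most $p/2$. Combining,
\[
\Pr_{\pi|_S}[\tau^M_S\leq T]\ \geq\ q-\Pr_{\pi|_S}[\tau^S_+>T]\ \geq\ p-p/2\ =\ p/2.
\]
The main obstacle is the voltage identity for $q$: it requires matching the walk-theoretic escape probability with the Ohm-law current into the short-circuited terminal $S$ using the \emph{original} (non-contracted) edge weights. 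Once this is in hand, the rest is a one-line Markov calculation. I would deliberately \emph{not} try to iterate Wald over successive excursions: the endpoint $X_{\tau^S_+}$ from $\pi|_S$ is again $\pi|_S$ by reversibility, but the pairs (excursion length, indicator of visiting $M$) across successive excursions are only \emph{marginally} i.i.d.\ and not jointly independent. Consequently $\Exp_{\pi|_S}[\tau^M_S]$ can strictly exceed $C_{S,M}$ (cf.\ the counterexample promised in Appendix~\ref{app:counterexample}), so a Wald-based derivation would give a wrong identity; analyzing only the first excursion sidesteps this issue entirely.
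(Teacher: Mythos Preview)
Your proof is correct and follows essentially the same approach as the paper: the paper also intersects the events $\{\tau_M<\tau_S^+\}$ and $\{\tau_S^+<T\}$, invokes the identity $\Pr_{\pi|_S}(\tau_M<\tau_S^+)=1/(C_{S,M}\pi(S))$ (their Lemma~\ref{lem:s-M}, proved in Appendix~\ref{app:Cs-M} by exactly the voltage/short-circuit argument you sketch), and bounds the return-time tail via Kac plus Markov. Your inline derivation of the escape-probability identity and your closing remark about why a Wald-type argument fails both match the paper's treatment.
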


\noindent
The conditions on $\sigma$ might seem a bit strong, we can however adapt any graph, similar to what is implicitly done in Belovs' algorithm \cite{belovs2013Electric}, to ensure that they do hold for $p=\frac12,T=4C_{\sigma,M}$.

The main technical contribution in the proof of Claim~\ref{claim:commute-time} is the following lemma, which we prove in Appendix~\ref{app:Cs-M}.
We let $\tau_M$ denote the hitting time of $M$, which is the random variable representing the number of steps to reach $M$ (i.e., the minimum $i$ such that $Y_i\in M$), and $\tau_S^+$ the \emph{first return time} to $S$ (i.e., the minimum $i>0$ such that $Y_i\in S$).

\begin{restatable}{lemma}{lemmaSM}\label{lem:s-M}
	Let $S,M\subseteq X$ be disjoint sets, then
	\[
	\Pr_{\pi|_S}(\tau_M < \tau_S^+)
	= \frac{1}{C_{S,M} \pi(S)}\left(\geq \frac{1}{C_{\pi|_S,M} \pi(S)}\right).
	\]
\end{restatable}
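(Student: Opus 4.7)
The plan is to prove the identity via the classical connection between random walk hitting probabilities and electric network voltages/currents, by viewing $R_{S,M}$ as the effective resistance from the ``contracted'' source $S$ to $M$.

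First, I would introduce the harmonic voltage function $\phi : X \to \R$ defined by the boundary conditions $\phi|_M = 1$, $\phi|_S = 0$, and harmonicity ($\phi_v = \sum_u P_{v,u}\phi_u$) at every $v\in X\setminus(S\cup M)$. By the standard Dirichlet/harmonic-function interpretation of hitting probabilities, $\phi_u = \Pr_u(\tau_M < \tau_S)$ for every $u \notin S \cup M$. Consequently, for $v \in S$, decomposing on the first step gives
\[
p_v := \Pr_v(\tau_M < \tau_S^+) = \sum_{u \in X} P_{v,u}\phi_u,
\]
since contributions from $u \in S$ vanish ($\phi_u = 0$) and contributions from $u \in M$ count fully ($\phi_u = 1$).

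Second, I would compute the total current flowing out of $S$. Using $w_v P_{v,u} = w_{v,u}$ and $\phi_v = 0$ for $v\in S$,
\[
\sum_{v\in S} w_v\, p_v = \sum_{v \in S} \sum_{u \in X} w_{v,u}\,\phi_u = \sum_{v \in S} \sum_{u \in X} w_{v,u}(\phi_u - \phi_v) =: I,
\]
which is precisely the net current leaving $S$ under the voltage profile $\phi$. On the one hand, rewriting the left-hand side via $w_v = W\pi_v$ yields $I = W\pi(S)\cdot \Pr_{\pi|_S}(\tau_M < \tau_S^+)$. On the other hand, I claim $I = 1/R_{S,M}$. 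This is the electric content of the lemma: the current induced by a unit voltage drop between the (short-circuited) source $S$ and sink $M$ is the reciprocal of the effective resistance. The main obstacle is justifying this identity given the specific definition used in the paper, namely $R_{S,M} = \min_{\sigma\colon\supp(\sigma)\subseteq S} R_{\sigma,M}$ with $R_{\sigma,M}$ given by Thomson's variational principle. I would verify it in two steps: (i) the current $p^\star_{u,v} := w_{u,v}(\phi_v - \phi_u)$ induced by $\phi$ is, after rescaling by $1/I$, a unit flow from some distribution $\sigma^\star$ supported on $S$ to $M$; (ii) by Thomson's principle (or equivalently, Dirichlet's principle applied to $\phi$) this rescaled flow achieves the minimum energy, with value $1/I$, and this minimum is attained by the $\sigma^\star$ that itself minimises over $S$. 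Combining (i) and (ii) gives $R_{S,M} = R_{\sigma^\star,M} = 1/I$.

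Putting the two expressions for $I$ together gives
\[
W\pi(S)\cdot \Pr_{\pi|_S}(\tau_M < \tau_S^+) = \frac{1}{R_{S,M}},
\]
which, after multiplying through by $1/(W\pi(S))$ and using $C_{S,M} = W R_{S,M}$, is the claimed equality. Finally, the parenthetical inequality is immediate: since $\pi|_S$ is one particular distribution supported on $S$, the definition of $R_{S,M}$ as a minimum yields $R_{S,M} \leq R_{\pi|_S,M}$, hence $C_{S,M}\leq C_{\pi|_S,M}$, giving $1/(C_{S,M}\pi(S)) \geq 1/(C_{\pi|_S,M}\pi(S))$.
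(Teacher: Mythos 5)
Your proposal is correct and sound, but it takes a different route from the paper. The paper first proves the identity for a singleton source $\{s\}$ (using the escape probability as the harmonic voltage), and then handles a general $S$ by \emph{contracting} $S$ to a single vertex $s'$, checking that $W$, $\pi(S)$, $R_{S,M}$, and the escape probability are all preserved under contraction, and invoking the singleton case on the contracted graph. You instead work directly on the original graph with the grounded potential $\phi|_S = 0$, $\phi|_M = 1$, and compute the total current $I$ out of $S$ from first principles. The two approaches are equivalent in content — what the paper's contraction step does, and what your step (ii) does, is verify the same Thomson-type duality: that the energy of the unit-rescaled harmonic flow equals $R_{S,M} = \min_{\sigma, p}$, where the minimisation over both the source distribution $\sigma$ (supported on $S$) and the unit flow $p$ is achieved when the potential is constant across $S$. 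Your step (ii) is the only place where the sketch is thin: you invoke ``Thomson's principle'' to get both that the harmonic flow minimises energy \emph{and} that its induced source distribution $\sigma^\star$ minimises over $S$, but this two-sided optimisation is not quite the textbook statement of Thomson's principle. It can be made airtight by a Cauchy--Schwarz / summation-by-parts argument pairing an arbitrary unit flow $p$ with the potential $\phi$ (which gives $\mathrm{Energy}(p) \cdot \mathrm{Dirichlet}(\phi) \geq 1$ and $\mathrm{Dirichlet}(\phi) = I$), or by the paper's contraction trick (where the Cauchy--Schwarz inequality $\sum_{s\in S} p_{s,v}^2/w_{s,v} \geq (\sum_s p_{s,v})^2/\sum_s w_{s,v}$ shows the energy can only drop under contraction, with equality exactly on the harmonic flow). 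The contraction route has the advantage of reducing everything to a textbook singleton fact; your direct route is a bit more self-contained but requires spelling out the generalised Thomson's principle you are using.
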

This generalizes the classic fact that the probability that a reversible Markov chain starting at $s$ visits $t$ before returning to $s$ is $1/(C_{s,t}\pi_s)$.
We can then combine this with the following classic lemma, a proof of which can be found in \cite[Lemma 21.13]{levin2017MarkovChainsMixingTimes}.
\begin{lemma}[Kac's Lemma] \label{lem:kac}
For any irreducible and reversible Markov chain it holds that
\[
\Exp_{\pi|_S}(\tau_S^+)
= \frac{1}{\pi(S)}.
\]
\end{lemma}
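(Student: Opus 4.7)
The plan is to extend the elementary return-time identity $\Exp_s(\tau_s^+) = 1/\pi_s$ (for a single state $s$) to a set $S$, using stationarity together with reversibility. I would rewrite the left-hand side as a tail sum and absorb the conditioning by multiplying through by $\pi(S)$:
\[
\pi(S)\cdot\Exp_{\pi|_S}(\tau_S^+) \;=\; \sum_{k=0}^{\infty} \Pr_\pi\bigl(Y_0\in S,\; Y_1\notin S,\; \ldots,\; Y_k\notin S\bigr).
\]
Call the event in the $k$-th summand $E_k$. The $k=0$ term is $\Pr_\pi(E_0) = \pi(S)$; the main work is to evaluate the terms with $k\geq 1$.

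The key step is to time-reverse each trajectory using detailed balance $\pi_u P_{u,v} = \pi_v P_{v,u}$. Iterating this identity along $(y_0,\ldots,y_k)$ gives $\pi_{y_0}P_{y_0,y_1}\cdots P_{y_{k-1},y_k} = \pi_{y_k}P_{y_k,y_{k-1}}\cdots P_{y_1,y_0}$, and relabeling $z_j := y_{k-j}$ turns the constraint ``$y_0\in S$, $y_1,\ldots,y_k\notin S$'' into ``$z_0,\ldots,z_{k-1}\notin S$, $z_k\in S$''. Hence, for every $k\geq 1$,
\[
\Pr_\pi(E_k) \;=\; \Pr_\pi\bigl(Y_0,\ldots,Y_{k-1}\notin S,\; Y_k\in S\bigr) \;=\; \Pr_\pi(\tau_S = k).
\]
That is, the probability of \emph{starting} in $S$ and then avoiding $S$ for $k$ steps equals the probability of the first visit to $S$ occurring at time exactly $k$, when the chain is launched from $\pi$.

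With this identification, the tail sum collapses by positive recurrence. Irreducibility of the finite-state chain guarantees $\tau_S<\infty$ almost surely under $\Pr_\pi$, so
\[
\sum_{k=0}^{\infty}\Pr_\pi(E_k) \;=\; \pi(S) + \sum_{k=1}^{\infty}\Pr_\pi(\tau_S=k) \;=\; \pi(S) + \Pr_\pi(Y_0\notin S) \;=\; \pi(S) + \bigl(1-\pi(S)\bigr) \;=\; 1.
\]
Combining with the display above and dividing by $\pi(S)$ yields $\Exp_{\pi|_S}(\tau_S^+) = 1/\pi(S)$.

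The only nontrivial step is the time-reversal identity $\Pr_\pi(E_k) = \Pr_\pi(\tau_S = k)$, which is where reversibility enters; everything else uses only stationarity and the fact that $\tau_S$ is a.s.\ finite. This matches how the lemma is presented in \cite[Lemma 21.13]{levin2017MarkovChainsMixingTimes}, and since the paper only needs Kac's lemma in the reversible setting, the short argument above is the natural one to record.
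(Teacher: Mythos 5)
Your proof is correct, and since the paper does not give its own proof (it simply cites \cite[Lemma~21.13]{levin2017MarkovChainsMixingTimes}), your argument is the one worth scrutinizing on its own terms. All steps check out: the tail-sum rewrite and multiplication by $\pi(S)$ is exact, the iterated detailed-balance identity along the path $(y_0,\ldots,y_k)$ and relabeling $z_j=y_{k-j}$ gives $\Pr_\pi(E_k)=\Pr_\pi(\tau_S=k)$ for $k\geq 1$, and positive recurrence of the finite irreducible chain makes the final telescoping sum equal to $1$.

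One point worth recording is that your route is genuinely different from the standard one and trades a bit of generality for brevity. Kac's formula holds for any irreducible positive-recurrent chain, not only reversible ones, and the usual proof does not touch detailed balance. In place of the time-reversal, the standard argument shifts each event: by stationarity $\Pr_\pi(E_k)=\Pr_\pi\bigl(Y_{-k}\in S,\; Y_{-k+1}\notin S,\ldots,Y_0\notin S\bigr)$, and this shifted event is precisely ``the last visit to $S$ at or before time $0$ occurred at time $-k$.'' Since, on the doubly infinite stationary chain, the chain visits $S$ at some nonpositive time almost surely, the events indexed by $k=0,1,2,\ldots$ form a partition, so their probabilities sum to $1$ with no reversibility needed. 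Your time-reversal replaces that partition-by-last-visit with the identification $\Pr_\pi(E_k)=\Pr_\pi(\tau_S=k)$, which is equally clean \emph{in the reversible case} -- and that is all the paper requires, since the lemma as stated already assumes reversibility. So the proof is fine as written; just be aware that the reversibility hypothesis is not actually essential to the statement, only to the specific reversal trick you chose.
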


\noindent The proof of the main claim then easily follows.
\begin{proof}[Proof of Claim~\ref{claim:commute-time}]
We use a union bound on the events that $\tau_M < \tau_S^+$ and $\tau_S^+ < T$, the union of which implies the claimed statement.
From Lemma~\ref{lem:s-M} we know that $\Pr_\sigma(\tau_M < \tau_S^+) = 1/(C_{S,M} \pi(S)) \geq p$.
A bound on $\Pr_\sigma(\tau_S^+ < T)$ easily follows from Kac's lemma (Lemma~\ref{lem:kac}).
Combined with Markov's inequality this lemma implies that
\[
\Pr_\sigma(\tau_S^+ < T)
\geq
\Pr(\tau_S^+ < 2/(\pi(S)p))
> 1-p/2.
\]
The claim then follows by a union bound:
\[
\Pr_\sigma((\tau_M < \tau_S^+) \wedge (\tau_S^+ < T))
\geq \Pr_\sigma(\tau_M < \tau_S^+) + \Pr_\sigma(\tau_S^+ < T) - 1
> p/2. \qedhere
\]
\end{proof}

\subsection{Quantum walk algorithm} \label{sec:QW-alg}

Our algorithm combines quantum fast-forwarding with interpolated quantum walks.
Similar to \cite{ambainis2019QW}, the analysis of the algorithm then follows from a ``box-stretching'' argument, which builds on our combinatorial Claim~\ref{claim:commute-time}.
To ensure the conditions of this claim, we will consider an interpolated walk on a slightly modified graph, implicitly used in~\cite{belovs2013Electric}.

\subsubsection{Modified graph and Belovs' quantum walk} \label{sec:mod-graph}
The input to the search problem is a weighted graph $G = (X,E,w)$, a subset of marked elements $M \subset X$ and an initial distribution $\sigma$ over $X$.
We assume access to these through black-box operations $\checkv(M)$, $\setupv(\sigma)$, and the operator $\Lambda(\sigma,C)$, as defined in \cref{sec:electric-framework}, with $C$ an upper bound on $C_{\sigma,M}$.

We will consider a slightly modified graph $G' = (X',E',w')$, with total weight $W'$, initial distribution $\sigma'$ and marked elements $M'$.
In \cref{lem:Pprime-cost} we will prove that we can implement a quantum walk on $G'$ using only the original black-box operations mentioned earlier.
The graph $G'$ essentially consists of two copies of the original graph, and is defined by
\[
\begin{gathered}
X'
= \{0,1\} \times X, \\
E'
= \{((0,u),(0,v)),\; (u,v) \in E\} \;\cup\; \{((0,u),(1,u)),\; u \in \supp(\sigma) \}, \\
w'_{(0,u),(0,v)} = w_{u,v},\; \forall (u,v) \in E, \qquad w'_{(0,u),(1,u)} = \sigma_u W/C,\; \forall u \in X.
\end{gathered}
\]
We illustrate this construction in Figure~\ref{fig:modified-graph} below.
The modified initial state $\sigma'$ is defined as $\sigma'_{(1,u)} = \sigma_u$, and zero elsewhere, and the marked elements $M' \subseteq V'$ correspond to the set $\{0\} \times M$.

\begin{figure}[htb]
\centering
\includegraphics[draft=false,width=.55\textwidth]{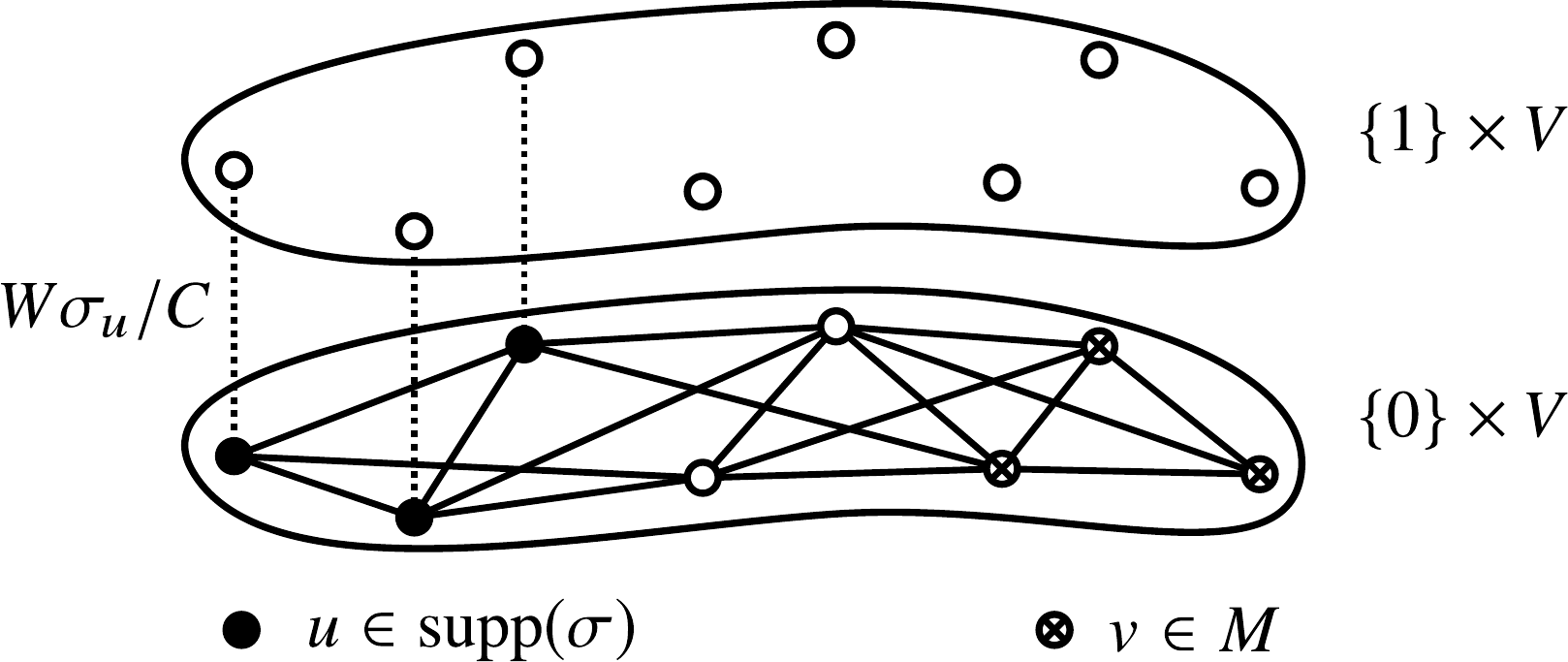}
\caption{Modified graph.}
\label{fig:modified-graph}
\end{figure}

These particular choices ensure that (i) $S':=\mathrm{supp}(\sigma')$ is disjoint from $M'$, (ii) $\sigma'$ is proportional to $\pi'$ on its support, with $\pi'$ the stationary distribution on $G'$, and (iii) the commute time between $\sigma'$ and $M'$ in $G'$ does not increase too much.
We prove the second and third points in the following lemma, letting $R'_{\sigma',M'}$ and $C'_{\sigma',M'}=R'_{\sigma',M'}W'$ denote the effective resistance and commute time, respectively, between $\sigma'$ and $M'$ on $G'$.
We also use the fact that $W' = W + 2W\sum_u \sigma_u/C = W + 2W/C$.

\begin{lemma} \label{lem:props-mod-graph-gen}
	We have $\pi'(S') = 1/(C+2)$, moreover if $\rho$ is a probability distribution on $S=\supp(\sigma)$ such that $\sum_{u\in S}\frac{\rho_u^2}{\sigma_u}= 1/p$, then $C'_{\rho',M'} = (C_{\rho,M}/C + 1/p)(C+2)$.
	Therefore,
	\[
	\frac{1}{C'_{S',M'}\pi'(S') }\geq  \frac{1}{2}\min\left\{p,\frac{C}{C_{\rho,M}}\right\}.
	\]
\end{lemma}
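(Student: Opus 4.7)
The plan is to compute each quantity directly from the structure of the modified graph, exploiting the fact that the ``vertical'' edges $((0,u),(1,u))$ form a rigid piece of any $\rho'$-to-$M'$ flow. I would start by identifying the stationary distribution of $G'$: the only edge incident to $(1,u)$ is the vertical one of weight $\sigma_u W/C$, so $w'_{(1,u)} = \sigma_u W/C$. Summing all edge weights gives $W' = W + 2W/C = W(C+2)/C$, hence $\pi'_{(1,u)} = \sigma_u/(C+2)$. This immediately yields $\pi'(S') = \sum_{u\in S}\sigma_u/(C+2) = 1/(C+2)$ and simultaneously shows $\sigma' = \pi'|_{S'}$, which is useful for applying Claim~\ref{claim:commute-time} later.

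Next, for $C'_{\rho',M'}$, I would decompose any unit flow from $\rho'$ to $M'$ in $G'$ into two pieces lying on disjoint edges. Since every source vertex $(1,u)$ has only the single neighbor $(0,u)$, flow conservation at $(1,u)$ forces the vertical flow to equal $\rho_u$, contributing resistance $\sum_{u\in S} \rho_u^2/(\sigma_u W/C) = C/(Wp)$. The remaining flow lives inside the bottom copy $\{0\}\times X$, and the conservation constraints at each $(0,u)$ make it \emph{precisely} a unit flow from $\rho$ to $M$ in the original graph $G$; its minimum resistance equals $R_{\rho,M} = C_{\rho,M}/W$. Because the two pieces use disjoint edge sets, the minimization factorizes and gives $R'_{\rho',M'} = (C/p + C_{\rho,M})/W$. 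Multiplying by $W' = W(C+2)/C$ then produces $C'_{\rho',M'} = (C_{\rho,M}/C + 1/p)(C+2)$, as claimed.

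For the concluding inequality, I would use $C'_{S',M'}\leq C'_{\rho',M'}$ by the definition of $R'_{S',M'}$ as a minimum, combined with $\pi'(S') = 1/(C+2)$, to obtain $C'_{S',M'}\pi'(S') \leq C_{\rho,M}/C + 1/p$. Finally, the elementary bound $1/(a+b) \geq \tfrac{1}{2}\min\{1/a,1/b\}$ applied with $a = 1/p$ and $b = C_{\rho,M}/C$ yields the stated $\tfrac{1}{2}\min\{p,C/C_{\rho,M}\}$. There is no substantive obstacle here: the only point that deserves care is confirming that once the vertical edge flows are fixed at $\rho_u$, what remains really is an unconstrained $\rho$-to-$M$ flow in $G$, so that the two contributions to the resistance can be minimized independently; this is immediate from flow conservation at each $(0,u)$.
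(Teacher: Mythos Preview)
Your proposal is correct and follows essentially the same route as the paper: compute $\pi'(S')$ from $w'_{(1,u)}=\sigma_uW/C$ and $W'=W(C+2)/C$, decompose $R'_{\rho',M'}$ into the forced vertical contribution $C/(Wp)$ plus the bottom-layer $\rho$-$M$ resistance $R_{\rho,M}$, and finish with $C'_{S',M'}\le C'_{\rho',M'}$ together with $1/(a+b)\ge\tfrac12\min\{1/a,1/b\}$. If anything, your explicit justification that the bottom-layer residual is an unconstrained $\rho$-$M$ flow is more detailed than the paper's one-line assertion of the same identity.
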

\begin{proof}
	Since $W' = W + 2W/C$, and $w_{(1,u)}'=W\sigma_u/C$, we get that
	\[
		\pi'_{(1,u)}
		= \frac{w_{(1,u)}'}{W'}
		= \frac{W\sigma_u}{W'C}
		= \frac{\sigma_u}{(1+2/C)C}
		= \frac{\sigma_u}{(C+2)}
		\quad \Longrightarrow \quad
		\pi'(S')=\frac1{(C+2)}.
	\]
	Now observe that 
	\[
	R'_{\rho',M'}
	= R_{\rho,M} + \sum_{u}\frac{\rho_u^2}{W\sigma_u/C}
	=  R_{\rho,M} + \frac C{Wp},
	\]
	yielding
	\[
	C'_{\rho',M'}
	= W' R'_{\rho',M'}
	= (W + 2W/C) (R_{\rho,M} + C/(Wp)) 
	=(C+2) (C_{\rho,M}/C + 1/p).
	\]
	Finally,
	\[
	\frac{1}{C'_{S',M'}\pi'(S') }\geq\frac{1}{C'_{\rho',M'}\pi'(S') }= \frac{1}{C_{\rho,M}/C + 1/p}\geq \frac{1}{2\max\{C_{\rho,M}/C, 1/p\}}=\frac{1}{2}\min\{p,C/C_{\rho,M}\}.	 \qedhere
	\]
\end{proof}

Now we apply the above lemma with $\rho=\sigma$, which gives $p=1$. If $C \in \Theta(C_{\sigma,M})$ and $C_{\sigma,M}=\Omega(1)$, then we see that $C'_{\sigma',M'} \in \Theta(C_{\sigma,M})$, so the commute time does not increase significantly.
Moreover, if $C\geq C_{\sigma,M}$, then for $T:=\lceil 4 C'_{\sigma',M'} \rceil\leq 8(C+2)+1$ we have
\[
\frac{2}{T}
\leq \pi'(S')/2
\leq \frac{1}{C'_{\sigma',M'}}.
\]
In this case, all conditions of Claim~\ref{claim:commute-time} are satisfied with respect to $p'=\frac{1}{2}$.

Finally, we show that we can efficiently implement a quantum walk corresponding to the random walk $P'$ on $G'$:

\begin{lemma}\label{lem:Pprime-cost}
	The cost of $\updatev(P')$ is $\update(P') \in \bigO{\update(P) + \mathsf{R}(\sigma)}=\bigO{\update(\sigma)}$, where $\update(P)$ is the cost of implementing the original quantum walk operator $W(P)$, and $\mathsf{R}(\sigma)$ is the cost of $\Lambda(\sigma,C)$, defined in Section~\ref{sec:prelim-electric}. 
\end{lemma}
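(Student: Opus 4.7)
The plan is to explicitly construct $W(P')$ from a single application each of $W(P)$ and $\Lambda(\sigma,C)$, plus $O(1)$ elementary controlled gates, so that $\mathsf{U}(P')=O(\mathsf{U}(P)+\mathsf{R}(\sigma))=O(\mathsf{U}(\sigma))$. The starting point is the block structure of the discriminant: ordering the basis of $\mathcal{H}_{X'}\cong \mathbb{C}^2\otimes\mathcal{H}_X$ by the level bit, a direct calculation from the edge weights of $G'$ gives
\[
D(P') = \begin{pmatrix} A\,D(P)\,A & B \\ B & 0\end{pmatrix},\qquad A=\diag{\alpha_u},\ B=\diag{\beta_u},
\]
with $\alpha_u=\sqrt{\pi_u/(\pi_u+\sigma_u/C)}$ and $\beta_u=\sqrt{(\sigma_u/C)/(\pi_u+\sigma_u/C)}$ (and $\beta_u=0$ for $u\notin\supp(\sigma)$). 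Crucially, $\Lambda(\sigma,C)$ is already a simultaneous block-encoding of $A$ and $B$: by \eqref{eq:Lambda}, $\Lambda\ket{0}\ket{u}=\alpha_u\ket{0}\ket{u}+\beta_u\ket{1}\ket{u}$.

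Following Szegedy, I would take $W(P')=V(P')^\dagger\,\textsc{Shift}\,V(P')$, where $\textsc{Shift}$ swaps an ancilla register $\mathcal{H}_{A'}\cong\mathbb{C}^2\otimes\mathcal{H}_X$ with $\mathcal{H}_{X'}$. The task then reduces to implementing the neighbourhood-preparation map
\[
V(P'):\ \ket{\bar 0}_{A'}\ket{(b,u)}_{X'}\ \mapsto\ \ket{\eta_{(b,u)}}_{A'}\ket{(b,u)}_{X'},
\]
with $\ket{\eta_{(0,u)}}=\alpha_u\ket{0}\sum_v\sqrt{P_{u,v}}\ket{v}+\beta_u\ket{1}\ket{u}$ and $\ket{\eta_{(1,u)}}=\ket{0}\ket{u}$. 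Splitting $\mathcal{H}_{A'}=B'\otimes X_A$ with $B'$ a qubit, I build $V(P')$ in three controlled stages: (i) apply $\Lambda(\sigma,C)$ on $(B',X)$ controlled on the system level bit $B=0$, placing $B'$ into $\alpha_u\ket{0}+\beta_u\ket{1}$; (ii) apply $V(P)$ on $(X_A,X)$ controlled on $B=0\wedge B'=0$; (iii) apply a CNOT cascade copying $X$ into $X_A$, controlled on $B'=1$ or $B=1$. Tracking the three surviving branches $(b=0,B'=0)$, $(b=0,B'=1)$ and $(b=1,B'=0)$ confirms that the desired $\ket{\eta_{(b,u)}}_{A'}$ is produced.

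The block-encoding identity $(\bra{\bar 0}\otimes I)W(P')(\ket{\bar 0}\otimes I)=D(P')$ from Definition~\ref{def:walk-operator} then follows because the overlaps $\braketbra{\eta_{(b',v)}}{\textsc{Shift}}{\eta_{(b,u)}}$ reproduce exactly the matrix entries of $D(P')$ listed above, which is a short four-case calculation. Counting gates yields one $W(P)$, one $\Lambda(\sigma,C)$, plus $O(1)$ elementary controlled gates per application of $W(P')$, proving the bound. The main delicate step is the bookkeeping in stage (iii): every surviving branch must land with the correct target-vertex encoding in $X_A$ so that $\textsc{Shift}$ transports the system to the intended neighbour in $X'$; once the control conditions are arranged correctly, the rest is routine.
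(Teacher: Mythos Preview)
Your proposal is correct and shares the same starting point as the paper: both compute the block structure
\[
D(P')=\begin{pmatrix} A\,D(P)\,A & B\\ B & 0\end{pmatrix},\qquad A=\diag{\alpha_u},\ B=\diag{\beta_u},
\]
and then build a block-encoding using $\Lambda(\sigma,C)$ together with the walk operator for $P$. The difference lies in how the block-encoding is assembled. The paper works at the abstract level of Definition~\ref{def:walk-operator}: it sandwiches a single controlled $W(P)$ between controlled $\Lambda$ and $\Lambda^\dagger$, with a $\mathrm{SWAP}$ of the two flag qubits in between, namely
\[
W(P')=(I_A\otimes \bar c_b\Lambda^\dagger)\,\bar c_{ab}W(P)\,(I_A\otimes(\mathrm{SWAP}_{ab}\otimes I_X)\,\bar c_b\Lambda),
\]
and verifies the identity $(\bra{0}_a\otimes I)W(P')(\ket{0}_a\otimes I)=D(P')$ directly. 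You instead build the Szegedy neighbourhood map $V(P')$ from a controlled $\Lambda$ and a controlled $V(P)$, and then set $W(P')=V(P')^\dagger\,\textsc{Shift}\,V(P')$.

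Your route is more concrete and perhaps easier to visualise, but it is slightly less general: it presupposes that the given walk operator is of the Szegedy form $V(P)^\dagger\,\textsc{Shift}\,V(P)$ (so that $V(P)$ is available), whereas the paper's construction works for an arbitrary block-encoding $W(P)$ satisfying Definition~\ref{def:walk-operator}. A minor correction: the CNOT cascade in your stage~(iii) costs $O(\log|X|)$ elementary gates, not $O(1)$; this is harmless since the paper assumes $\mathsf{U}$ dominates the qubit count.
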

\begin{proof}
    In the following we will identify $P$ with $\ketbra{0}{0} \otimes P$ (and similarly for $D(P)$).
 	Observe that 
 	$$
 	P'= \diag{\frac{\pi_u}{\pi_u+\sigma_u/C}}P
 	+ \sum_{u \in \supp(\sigma)}\frac{\sigma_u/C}{\pi_u+\sigma_u/C}\left(\ketbra{0,u}{1,u}+\ketbra{1,u}{0,u}\right),
 	$$
 	and therefore $D(P')=\sqrt{P'\circ P'^T}$ expressed in terms of  $D(P)=\sqrt{P\circ P^T}$ is
 	\begin{align*}
 	D(P')=&\, \diag{\sqrt{\frac{\pi_u}{\pi_u+\sigma_u/C}}}D(P)\diag{\sqrt{\frac{\pi_u}{\pi_u+\sigma_u/C}}} \\
 	& + \sum_{u}\sqrt{\frac{\sigma_u/C}{\pi_u+\sigma_u/C}}\left(\ketbra{0,u}{1,u}+\ketbra{1,u}{0,u}\right).
 	\end{align*}
	As in \eqref{eq:Lambda} let $\Lambda(\sigma,C)$ be a unitary that uses a single-qubit ``flag'' register $a$ and acts as 
	$$\ket{\bar{0}}_{A'}\ket{u}_{X}\mapsto \frac{\sqrt{\pi_u }\ket{\bar{0}}_{a}+\sqrt{\sigma_u/C}\ket{1}_{a}}{\sqrt{\pi_u + \sigma_u/C}}\ket{u}_X.$$
 	We will use a new qubit register $b$, and represent the vertices as $(0,u)\mapsto \ket{0}_b\ket{u}$, $(1,u)\mapsto \ket{1}_b\ket{u}$. We will denote by $\bar{c}_{ab}W(P)$ the operator $W(P)$ conditioned on the qubit state $\ket{0}_a\ket{0}_b$, and by $\bar{c}_b\Lambda$ the operator $\Lambda(\sigma,C)$ conditioned on the qubit state $\ket{0}_b$, and using $a$ as the output qubit. Let 
	$$
	W(P'):=
	(I_A\otimes(\bar{c}_b\Lambda^\dagger))\bar{c}_{ab}W(P)(I_A\otimes (\mathrm{SWAP}_{ab}\otimes I_X)\bar{c}_b\Lambda),
	$$
	then
	\begin{align*}
	(\bra{0}_a\otimes I)W(P')(\ket{0}_a\otimes I)
	=&\,
	\ketbra{0}{0}_b\otimes \left[\diag{\sqrt{\frac{\pi_u}{\pi_u+\sigma_u/C}}}W(P)\diag{\sqrt{\frac{\pi_u}{\pi_u+\sigma_u/C}}}\right]
	\\
	&+\left(\ketbra{0}{1}_b+\ketbra{1}{0}_b\right)\otimes I_A\otimes \left[\sum_{u}\sqrt{\frac{\sigma_u/C}{\pi_u+\sigma_u/C}}\ketbra{u}{u}\right],
	\end{align*}
	and therefore $W(P')$ is a walk operator of $D(P')$ whenever $W(P)$ is a walk operator of $D(P)$.
\end{proof}

\subsubsection{Interpolated walk and algorithm}

The transformation described in Section~\ref{sec:mod-graph} can be applied to any input graph $G$ and any distribution $\sigma$, with negligible impact on $C_{\sigma,M}$, or the update cost $\mathsf{U}$.
This justifies focusing our analysis on the case where $\sigma = \pi|_S$ and $1/C_{\sigma,M}\leq \pi(S)\leq 2/C_{\sigma,M}$, ensuring that the conditions of Claim~\ref{claim:commute-time} are satisfied for $p=1/2$ and $T\geq 4C_{\sigma,M}$.
Moreover, this ensures that $M$ and $S=\supp(\sigma)$ are disjoint, and furthermore, that we can easily reflect around $\mathrm{supp}(\sigma)$.\footnote{To see the second point, note that in the modified graph of Section~\ref{sec:mod-graph} this amounts to reflecting around the states whose first register is 1.}
We assume these conditions for the remainder of this section.

For a Markov chain $P$, and parameter $q=(q_S,q_M) \in [0,1)^2$, we consider the interpolated Markov chain $P(q)$, defined by (here $\delta_{uv}$ is the Kronecker delta)
\[
P(q)_{u,v}
= \begin{cases}
(1-q_S) P_{u,v} + q_S \delta_{uv} & \mbox{if } u \in \supp(\sigma) \\
(1-q_M)P_{u,v} + q_M \delta_{uv} &\mbox{if }u\in M\\
P_{u,v} & \mbox{else.}
\end{cases}
\]
Equivalently,
\begin{equation} \label{eq:Pq}
P(q)
= (1-q_S-q_M)P + q_S P_{\supp(\sigma)} + q_M P_M,
\end{equation}
with $P_{\supp(\sigma)}$ and $P_M$ absorbing walks, as defined in Section~\ref{sec:prelim-interpolated}.
We denote by $D(q)$ the discriminant matrix of $P(q)$.
Starting from the state $\ket{\sqrt{\sigma}}$, our algorithm will apply $T$ steps of quantum fast-forwarding of $P(q)$ for appropriately chosen $q$, and increasing~$T$.

\begin{algorithm}[H]
$\mathbf{Search}(P,\sigma,M,T)$.
\begin{enumerate}
\item
Let $Q = \big\{ 1,2^{-1},2^{-2},\dots,2^{-\lceil\log(14T)\rceil} \big\}$, and prepare the state
\[
\ket{\psi}
= \sum_{t\in[T]}\sum_{q_M \in Q} \frac{1}{\sqrt{T|Q|}} \ket{t}\ket{q = (1-T/2,1-q_M)} \ket{\sqrt{\sigma}}.
\]
\item \label{step:ffw}
Let $U$ be the operator that applies quantum fast-forwarding, controlled on the first two registers, mapping $\ket{t}\ket{q}\ket{\sqrt{\sigma}}$ to $\ket{1}\ket{t}\ket{q}D^t(q)\ket{\sqrt{\sigma}}+\ket{0}\ket{\Gamma}$ for some arbitrary $\ket{\Gamma}$, with precision $\bigO{\frac{1}{\log T}}$.
\item
Apply $\bigO{\sqrt{\log T}}$ rounds of amplitude amplification on $U$ and $\ket{\psi}$, conditioned on the first register.
Finally, measure the last register.
\end{enumerate}
\caption{Fast-forwarding-based search algorithm}\label{alg:commute}
\end{algorithm}

\noindent
In Lemma~\ref{lem:stop} of Section~\ref{sec:boxes} we will prove that there exists some $T' \in \bigO{C_{\sigma,M}}$, such that for all $T''\geq T'$ the algorithm returns a marked vertex with constant probability.
Combined with the following lemma, which bounds the complexity of the algorithm, this proves \cref{thm:electric-finding}.
\begin{lemma} \label{lem:compl}
The complexity of \cref{alg:commute} is
\[
\bigO{\sqrt{\log T} \, \setup(\sigma)+\sqrt{T \log T \log\log T} \, (\mathsf{U}(\sigma)+\mathsf{C})}.
\]
\end{lemma}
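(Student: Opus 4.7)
The plan is to analyze each of the three steps of Algorithm~\ref{alg:commute} separately, bound their individual costs, and then observe that amplitude amplification simply multiplies the whole cost by $O(\sqrt{\log T})$. There is no real ``hard part'' here; the only care needed is in chasing through how the costs of the modified graph's interpolated walk depend on the original black-box operations, via Lemma~\ref{lem:Pprime-cost} and the generic bound $\update(P(s)) \in O(\update(P)+\check)$ from the paragraph on implementing interpolated quantum walks.

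First I would bound the cost of preparing $\ket{\psi}$ in step~1. The $t$-register is a superposition over $[T]$ and the $q$-register ranges over $Q$ with $|Q|\in O(\log T)$, so those two registers can be prepared with $O(\log T)$ elementary gates (for instance by a sequence of Hadamards followed by a controlled rotation to trim the $t$-register to size exactly $T$, and a short state-preparation circuit on the $O(\log\log T)$ qubits of the $q$-register). The $\ket{\sqrt{\sigma}}$ part costs exactly $\setup(\sigma)$ by definition. Thus preparing $\ket{\psi}$ costs $\setup(\sigma)+O(\log T)$.

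Next I would bound the cost of the controlled fast-forwarding unitary $U$ in step~2. We apply Theorem~\ref{thm:fast-forwarding} to each interpolated walk $P(q)$ on the modified graph, with number of steps bounded by $T$ and precision $\varepsilon = \Theta(1/\log T)$; this yields a unitary with cost $O\bigl(\sqrt{T\log(1/\varepsilon)}\bigr)\cdot\update(P(q)) = O\bigl(\sqrt{T\log\log T}\bigr)\cdot \update(P(q))$. By Lemma~\ref{lem:Pprime-cost}, the cost of $\updatev(P')$ on the modified graph is $O(\update(\sigma))$, and then applying the interpolated-walk construction from Section~\ref{sec:prelim-qw-search} gives $\update(P(q)) \in O(\update(\sigma)+\check)$; here reflecting around $S'=\mathrm{supp}(\sigma')$ is just a reflection conditioned on the flag qubit being~$1$ (so free), while reflecting around $M'$ costs one call to $\checkv(M)$. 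Coherently controlling on $t$ and $q$ only adds an overhead logarithmic in $T$ per gate, which does not affect the asymptotics. Therefore $U$ costs $O\bigl(\sqrt{T\log\log T}\,(\update(\sigma)+\check)\bigr)$.

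Finally, step~3 runs $O(\sqrt{\log T})$ rounds of amplitude amplification, each consisting of a reflection around $\ket{\psi}$ and an application of $U$ (plus their inverses and a $O(1)$-cost reflection conditioned on the flag register). Reflecting around $\ket{\psi}$ costs twice the preparation cost of $\ket{\psi}$, and reflecting on the flag is $O(1)$, so each round costs $O\bigl(\setup(\sigma)+\log T + \sqrt{T\log\log T}\,(\update(\sigma)+\check)\bigr)$. Multiplying by $\sqrt{\log T}$ and absorbing the $O(\log^{3/2} T)$ term into the $\update(\sigma)+\check$ term (since $\update(\sigma)\geq 1$) yields the claimed
\[
O\!\left(\sqrt{\log T}\,\setup(\sigma) + \sqrt{T\log T\log\log T}\,(\update(\sigma)+\check)\right),
\]
which completes the proof.
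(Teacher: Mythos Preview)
Your proposal is correct and follows essentially the same approach as the paper: bound the preparation cost of $\ket{\psi}$, bound the cost of $U$ via the fast-forwarding theorem together with $\update(P(q))\in O(\update(\sigma)+\check)$, and then multiply by $O(\sqrt{\log T})$ for the amplitude-amplification rounds. The paper's own proof is slightly terser but uses exactly the same decomposition and the same supporting lemmas.
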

\begin{proof}
For the complexity of step 1., note that creating $\ket{\psi}$ only requires $\bigO{\log T}$ elementary gates, and a call to $\setupv(\sigma)$ costing $\setup(\sigma)$. For the complexity of step 2., note that by Theorem~\ref{thm:fast-forwarding}, the operators $U$ and $U^\dag$ require $\bigO{\sqrt{T \log\log T}}$ calls to $\updatev(P(q))$, which implements a block-encoding of $W(q)=W(P(q))$. We can implement such a block-encoding, using a block-encoding of $P$, $W(P)$, which, by assumption (see also Lemma~\ref{lem:Pprime-cost}), can be implemented in $\bigO{\mathsf{U}(\sigma)}$ complexity; the operation $\mathtt{Check}(M)$, costing $\check$; and an analogous operation that checks if a vertex is in $S=\mathrm{supp}(\sigma)$, which can be done in $\bigO{1}$ cost, by our assumptions on the structure of $G$ (such an implementation of $W(q)$ is straightforward, as we discuss in Section~\ref{sec:prelim-qw-search}).
Thus, the total cost of step 2. is $\bigO{\sqrt{T\log\log T}(\update(\sigma)+\check)}$.

By \cite{brassard2002AmpAndEst} we can implement step $3.$ using $\bigO{\sqrt{\log T}}$ reflections around $U\ket{\psi}$.
A single such reflection can be implemented by $\bigO{1}$ calls to $U$ and $U^\dag$, and the preparation circuit of $\ket{\psi}$ and its inverse -- yielding a total complexity (neglecting constants):
\begin{equation*}
\sqrt{\log T}\left( \setup(\sigma)+\sqrt{T\log\log T}(\update(\sigma)+\check) \right) = \sqrt{\log T}\setup(\sigma)+\sqrt{T\log T\log\log T}(\update(\sigma)+\check). \qedhere
\end{equation*}
\end{proof}

\subsubsection{Correctness of Algorithm~\ref{alg:commute}} \label{sec:boxes}
Similar to the argument in \cite{ambainis2019QW}, we use a careful choice of the parameters of the interpolated walk to ensure a constant success probability of Algorithm~\ref{alg:commute}.
As discussed in the previous section, we can assume without loss of generality that $\sigma = \pi|_S$ for some $S \subseteq X$, and that $1/C_{\sigma,M}\leq \pi(S) \leq 2/C_{\sigma,M}$.

The key quantity is $\nrm{\Pi_M D^t(q) \ket{\sqrt{\sigma}}}^2$, where $\Pi_M$ is the orthogonal projector onto marked vertices -- this is the probability of finding a marked vertex in step $2.$ of the algorithm (that is, before amplitude amplification).
We can bound this quantity in terms of the classical Markov chain $P(q)$ as follows, generalizing \cite[Lemma 8]{ambainis2019QW}:
\begin{lemma} \label{lem:nrm-bnd}
Let $(Y_i{(q)})_{i=0}^\infty$ be a Markov chain evolving according to $P(q)$ with initial state $Y_0(q)$ distributed according to $\sigma=\pi|_S$, and let $D(q)$ be the associated discriminant matrix.
Then for any $t,t' \in \N$ such that $t'>t$ we have that
\[
\nrm{\Pi_M D^t(q) \ket{\sqrt{\sigma}}}
\geq \Pr_{Y_0(q)\sim \sigma}\big( Y_t(q) \in M, Y_{t'}(q) \in S\big).
\]
\end{lemma}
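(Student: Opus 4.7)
The approach is to insert an intermediate projection and sandwich the desired classical probability between two quantum norms. Set
\[
\ket{\xi} \;:=\; \Pi_S\, D^{t'-t}(q)\, \Pi_M\, D^{t}(q)\ket{\sqrt{\sigma}},
\]
and prove the chain
\[
\Pr_\sigma\bigl(Y_t(q) \in M,\ Y_{t'}(q) \in S\bigr) \;\leq\; \nrm{\xi} \;\leq\; \nrm{\Pi_M D^t(q)\ket{\sqrt{\sigma}}}.
\]
The right inequality is immediate: $P(q)$ is stochastic and reversible with spectrum in $[-1,1]$, so $D(q)$, which is similar to $P(q)$ via $\diag{\sqrt{\pi(q)}}$, is a Hermitian contraction, and therefore $\Pi_S D^{t'-t}(q)$ cannot increase any norm.

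The content is in the left inequality, and the first step is to identify $\pi(q)|_S$. Using the decomposition \eqref{eq:Pq} together with the disjointness $S \cap M = \emptyset$, I will verify by a direct stationarity check that the measure
\[
\pi(q)_v \;\propto\; \begin{cases} \pi_v/(1-q_S) & v \in S,\\ \pi_v/(1-q_M) & v \in M,\\ \pi_v & \text{otherwise}, \end{cases}
\]
is stationary for $P(q)$, using only that $\pi P = \pi$. The crucial consequence is $\pi(q)|_S = \pi|_S = \sigma$, equivalently $\sqrt{\sigma_v} = \sqrt{\pi(q)_v/\pi(q)(S)}$ for every $v \in S$.

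Equipped with this identification, I will compute the amplitudes of $\ket{\xi}$ on $S$. Since $D^s(q) = \diag{\sqrt{\pi(q)}}\,P(q)^s\,\diag{\sqrt{\pi(q)}}^{-1}$ for every $s \in \N$, its entries satisfy $D^s(q)_{a,b} = \sqrt{\pi(q)_a/\pi(q)_b}\,P(q)^s_{a,b}$, and reversibility provides the equivalent form $\sqrt{\pi(q)_b/\pi(q)_a}\,P(q)^s_{b,a}$. Expanding $\bra{w}D^{t'-t}(q)\Pi_M D^t(q)\ket{\sqrt{\sigma}}$ with a convenient mix of these two forms and collapsing the double sum over $(u,v) \in M \times S$ via the Markov property yields, for every $w \in S$,
\[
\braket{w}{\xi} \;=\; \sqrt{\pi(q)(S)/\pi(q)_w}\;\Pr_\sigma\bigl(Y_t(q)\in M,\ Y_{t'}(q)=w\bigr).
\]
Finally, Cauchy--Schwarz with the probability weights $\pi(q)_w/\pi(q)(S)$ on $w \in S$, which sum to one, gives
\[
\Bigl(\sum_{w \in S}\Pr_\sigma(Y_t \in M, Y_{t'}=w)\Bigr)^{\!2} \;\leq\; \sum_{w \in S} \frac{\pi(q)(S)}{\pi(q)_w}\,\Pr_\sigma(Y_t\in M,Y_{t'}=w)^2 \;=\; \nrm{\xi}^2,
\]
which is precisely $\Pr_\sigma(Y_t \in M,\ Y_{t'} \in S)^2 \leq \nrm{\xi}^2$, completing the chain when combined with the first step.

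The main technical obstacle is the identification $\pi(q)|_S = \sigma$. This coincidence is what lets the $\sqrt{\sigma_v}$ coefficients in $\ket{\sqrt{\sigma}}$ merge cleanly with the $\sqrt{\pi(q)_v}$ factors coming from $D^t(q)$, so that reversibility can convert the remaining entries of $P(q)^t$ and $P(q)^{t'-t}$ into honest two-time probabilities under the chain $(Y_i(q))$ started at $\sigma$. Without this alignment between the quantum and classical weightings, the Cauchy--Schwarz step would not produce the square of the desired probability, and the whole argument would stall.
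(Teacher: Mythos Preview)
Your proof is correct and follows essentially the same approach as the paper. The paper does not prove this lemma separately but points to the computation inside the proof of \cref{cor:bnd-exp}, where it is shown that $\bra{\sqrt{\sigma}}D^{t'-t}(q)\Pi_M D^{t}(q)\ket{\sqrt{\sigma}}=\Pr_\sigma(Y_t\in M,\,Y_{t'}\in S)$ using the same identification $\pi(q)|_S=\sigma$, after which a single Cauchy--Schwarz against $D^{t'-t}(q)\ket{\sqrt{\sigma}}$ yields the bound; your version simply inserts an extra $\Pi_S$ and phrases the same Cauchy--Schwarz as a norm inequality on $\ket{\xi}$, which is equivalent since $\langle\sqrt{\sigma}\,|\,\xi\rangle$ already equals the probability.
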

\noindent
We will not use this lemma directly, but we state it here for the sake of intuition. Its proof closely follows that of \cite[Lemma 8]{ambainis2019QW} (see also the proof of Corollary~\ref{cor:bnd-exp}).

By this lemma it suffices to show that with certain probability, for appropriate choice of $t$ and $t'$, the $t$-th vertex is marked and the $t'$-th vertex is again in the initial support $S$.
We will be able to ensure these conditions by appropriately tweaking the parameters $q=(q_S,q_M)$.

The sequence of random variables $(Y_i)_{i=0}^{\infty}$ is supported on infinite sequences of vertices from $X$, which represent \emph{paths} of the random walk. However, in light of Lemma~\ref{lem:nrm-bnd}, given such a sequence, we will only care about which states in the sequence are in $S$, and which are in $M$. Following a similar abstraction in \cite{ambainis2019QW}, we model a path of the random walk as a sequence of boxes, with gray boxes representing vertices in $S$, black boxes representing vertices in $M$, and white boxes representing vertices in neither $S$ nor $M$. We depict such a sequence of boxes in Figure~\ref{fig:sequence}. In a slight abuse of notation, we will refer to $y=(y_0,y_1,\dots)$ drawn from $(Y_i)_{i=0}^{\infty}$ as a sequence of boxes. 
A gray or black box at position $i$ then denotes the event that $Y_i \in S$ or $Y_i \in M$, respectively.
The indicated times $\hit$ and $\ct$ in Figure~\ref{fig:sequence} denote the random variables corresponding to the hitting time (first time to reach $M$) and commute time (first time to return to $S$ after reaching $M$), respectively. 

\begin{figure}[H]
\centering
\begin{tikzpicture}[scale=.3]
\filldraw[fill=gray] (0,0) rectangle (1,1);
\draw (1,0) rectangle (2,1);
\draw (2,0) rectangle (3,1);
\draw (2,0) rectangle (4,1);
\draw (4,0) rectangle (5,1);
\draw (5,0) rectangle (6,1);
\draw (5,0) rectangle (7,1);
\filldraw[fill=gray] (7,0) rectangle (8,1);
\draw (8,0) rectangle (9,1);
\draw (9,0) rectangle (10,1);
\draw (10,0) rectangle (11,1);
\draw (11,0) rectangle (12,1);
\draw (12,0) rectangle (13,1);
\draw (13,0) rectangle (14,1);
\draw (14,0) rectangle (15,1);
\draw (15,0) rectangle (16,1);
\draw (16,0) rectangle (17,1);
\draw (17,0) rectangle (18,1);
\filldraw (18,0) rectangle (19,1);
\node at (18.5,2) {$\hit$};
\draw (19,0) rectangle (20,1);
\draw (20,0) rectangle (21,1);
\draw (21,0) rectangle (22,1);
\filldraw (22,0) rectangle (23,1);
\filldraw (23,0) rectangle (24,1);
\draw (24,0) rectangle (25,1);
\draw (25,0) rectangle (26,1);
\draw (26,0) rectangle (27,1);
\draw (27,0) rectangle (28,1);
\draw (28,0) rectangle (29,1);
\filldraw[fill=gray] (29,0) rectangle (30,1);
\node at (29.5,2) {$\ct$};
\draw (30,0) rectangle (31,1);
\draw (31,0) rectangle (32,1);
\draw (32,0) rectangle (33,1);
\draw (33,0) rectangle (34,1);
\filldraw[fill=gray] (34,0) rectangle (35,1);
\draw (35,0) rectangle (36,1);
\draw (35,0) rectangle (37,1);
\draw (37,0) rectangle (38,1);
\draw (38,0) rectangle (39,1);
\draw (39,0) rectangle (40,1);
\draw (40,0) rectangle (41,1);
\draw (41,0) rectangle (42,1);
\draw (42,0) rectangle (43,1);
\draw (43,0) rectangle (44,1);
\filldraw (44,0) rectangle (45,1);
\draw (45,0) rectangle (46,1);
\draw (46,0) rectangle (47,1);
\filldraw[white] (46.5,-.1) rectangle (47.1,1.1);
\node at (47.5,.5) {$\dots$};
\end{tikzpicture}
\caption{Sequence of boxes $y=(y_0,y_1,\dots)$ drawn from the random variable $(Y_i)_{i=0}^{\infty}$.
Gray boxes, called $S$-boxes, correspond to $y_k \in S$; black boxes, called $M$-boxes, correspond to $y_k \in M$.
The hitting time from $S$ to $M$ is denoted by $\hit$, the commute time by $\ct$.}\label{fig:sequence}
\end{figure}
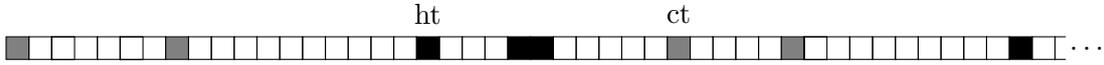

In the following we define $r_S = 1/(1-q_S)$ and $r_M = 1/(1-q_M)$, representing the expected number of steps that the interpolated walk remains at a vertex in $S$ or $M$ respectively.
Given parameters $r=(r_S,r_M)$ and a sequence of boxes $y = (y_0,y_1,\dots)$, we denote by $\gamma^{(r)}$ the sequence derived from $y$ by replacing each $M$-box with $r_M$ $M$-boxes, and each $S$-box with $r_S$ $S$-boxes. Since $r_M$ is the expected number of steps a walker would stay at a marked vertex in an absorbing walk with parameter $q_M$ (and similarly for $r_S$), $\gamma^{(r)}$ loosely models a path of the interpolated random walk $P(q)$. 
We denote by $\hit^{(r_S,r_M)}$ and $\ct^{(r_S,r_M)}$  the hitting time and commute time of the sequence~$\gamma^{(r)}$.

\begin{figure}[H]
\centering
\begin{tikzpicture}[scale=.3]
\filldraw[fill=gray] (0,0) rectangle (1,1);
\filldraw[fill=gray] (1,0) rectangle (2,1);
\draw (2,0) rectangle (3,1);
\draw (2,0) rectangle (4,1);
\draw (4,0) rectangle (5,1);
\draw (5,0) rectangle (6,1);
\draw (5,0) rectangle (7,1);
\draw (7,0) rectangle (8,1);
\filldraw[fill=gray] (8,0) rectangle (9,1);
\filldraw[fill=gray] (9,0) rectangle (10,1);
\draw (10,0) rectangle (11,1);
\draw (11,0) rectangle (12,1);
\draw (12,0) rectangle (13,1);
\draw (13,0) rectangle (14,1);
\draw (14,0) rectangle (15,1);
\draw (15,0) rectangle (16,1);
\draw (16,0) rectangle (17,1);
\draw (17,0) rectangle (18,1);
\draw (18,0) rectangle (19,1);
\draw (19,0) rectangle (20,1);
\filldraw (20,0) rectangle (21,1);
\node at (21.5,2.25) {$\hit^{(r_S,r_M)}$};
\draw[->] (20.5,1.75)--(20.5,1);
\filldraw (21,0) rectangle (22,1);
\filldraw (22,0) rectangle (23,1);
\draw (23,0) rectangle (24,1);
\draw (24,0) rectangle (25,1);
\draw (25,0) rectangle (26,1);
\filldraw (26,0) rectangle (27,1);
\filldraw (27,0) rectangle (28,1);
\filldraw (28,0) rectangle (29,1);
\filldraw (29,0) rectangle (30,1);
\filldraw (30,0) rectangle (31,1);
\filldraw (31,0) rectangle (32,1);
\draw (32,0) rectangle (33,1);
\draw (33,0) rectangle (34,1);
\draw (34,0) rectangle (35,1);
\draw (35,0) rectangle (36,1);
\draw (35,0) rectangle (37,1);
\filldraw[fill=gray] (37,0) rectangle (38,1);
\node at (38.7,2.25) {$\ct^{(r_S,r_M)}$};
\draw[->] (37.5,1.75)--(37.5,1);
\filldraw[fill=gray] (38,0) rectangle (39,1);
\draw (39,0) rectangle (40,1);
\draw (40,0) rectangle (41,1);
\draw (41,0) rectangle (42,1);
\draw (42,0) rectangle (43,1);
\filldraw[fill=gray] (43,0) rectangle (44,1);
\filldraw[fill=gray] (44,0) rectangle (45,1);
\draw (45,0) rectangle (46,1);
\draw (46,0) rectangle (47,1);
\filldraw[white] (46.5,-.1) rectangle (47.1,1.1);
\node at (47.5,.5) {$\dots$};
\end{tikzpicture}
\caption{If $y$ is the sequence of boxes shown in Figure~\ref{fig:sequence}, then the above sequence represents $\gamma^{(r_S,r_M)}$ for $r_S=2$ and $r_M=3$.}\label{fig:sequence2}
\end{figure}
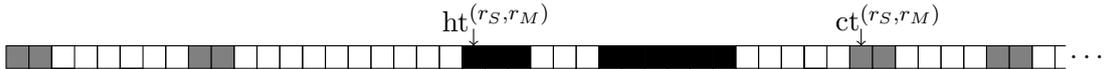

For integers $a<b$, we will be interested in bounding the quantities
\[
M_y^{(r)}[a,b]
= |\{t \in [a,b]: \gamma^{(r)}_t \in M\}|
\quad \text{ and } \quad
S_y^{(r)}[a,b]
= |\{t \in [a,b]: \gamma^{(r)}_t \in S\}|.
\]
When $r=(1,1)$ we will omit the $(r)$ superscript.

\begin{lemma} \label{lem:comb}
Let $y = (y_0,y_1,\dots,y_T,\dots)$ denote a sequence of boxes, for some $T\in \N$, such that
\begin{itemize}
\item $y_0\in S$,
\item
$\ct \leq T$,
\item
$S_y[0,\hit] = 1$.
\end{itemize}
If we set $r_S = T/2$, then there exists $r_M \in R = \{1,2,4,\dots,2^{\lceil \log(14T)\rceil}\}$ such that
\[
M_y^{(r)}[0,2T] \geq T/2
\quad \text{and} \quad
S_y^{(r)}[7T,15T] \geq T/4.
\]
\end{lemma}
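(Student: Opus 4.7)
The plan is to explicitly identify a suitable $r_M \in R$ from the combinatorial structure of $y$, and then verify both inequalities directly.

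First I will read off the shape of $y$ from the three hypotheses. Writing $h := \hit$ and $c := \ct$, the condition $S_y[0,h] = 1$ together with $y_0 \in S$ forces $y_1,\dots,y_{h-1}$ to be white and $y_h \in M$; by definition of $c$, no position in $\{h+1,\dots,c-1\}$ lies in $S$, and $y_c \in S$. Let $n_M \geq 1$ and $n_W \geq h-1$ denote, respectively, the number of $M$-boxes and white boxes in $\{y_0,\dots,y_{c-1}\}$, so that $n_M + n_W = c - 1 \leq T - 1$. A direct count then gives the starting position of the stretched $y_c$-block in $\gamma^{(r)}$ as
\[
L_c(r_M) \;=\; \tfrac{T}{2} + n_M r_M + n_W.
\]

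Next I would choose $r_M$ so that $L_c \in \bigl[\tfrac{27}{4}T,\; \tfrac{59}{4}T + 1\bigr]$, which is equivalent to requiring $r_M$ to lie in the interval
\[
J \;:=\; \bigl[(\tfrac{25}{4}T - n_W)/n_M,\; (\tfrac{57}{4}T + 1 - n_W)/n_M\bigr].
\]
Its endpoint ratio is increasing in $n_W$, hence bounded below by $(57T + 4)/(25T) > 57/25 > 2$, so $[\log_2(\mathrm{lower}), \log_2(\mathrm{upper})]$ has length greater than $1$ and contains an integer $k$. The resulting $r_M := 2^k$ then lies in $J$; moreover $r_M \leq 2 \cdot (\tfrac{25}{4}T)/n_M \leq 12.5T \leq 2^{\lceil \log(14T)\rceil}$ and $r_M \geq 1$, so $r_M \in R$.

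For Condition~2, the stretched $y_c$-block occupies $[L_c, L_c + T/2)$ in $\gamma^{(r)}$; an elementary case analysis on $L_c \in [\tfrac{27}{4}T, \tfrac{59}{4}T + 1]$ shows that its overlap with $[7T, 15T]$ is at least $T/4$. For Condition~1, observe that the ``mix section'' of $\gamma^{(r)}$ corresponding to $y_h,\dots,y_{c-1}$ starts at position $T/2 + h - 1$, contains exactly $n_M r_M$ $M$-positions and $n_W - h + 1$ white positions, and extends up to $L_c$. Since $L_c > 2T$, the window $[0, 2T]$ captures a prefix of length $\tfrac{3}{2}T + 2 - h$ of the mix. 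The total number of white positions in the mix is only $n_W - h + 1$, so this prefix contains at least $\tfrac{3}{2}T + 2 - h - (n_W - h + 1) = \tfrac{3}{2}T + 1 - n_W \geq T/2 + 2$ $M$-positions, yielding Condition~1.

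The main obstacle is the calibration of the numerical constants: the admissible interval $J$ for $r_M$ must have endpoint ratio strictly greater than $2$ so that the dyadic grid $R$ is guaranteed to hit it, and $L_c$ must simultaneously land deep enough in $[7T,15T]$ to contribute $T/4$ $S$-positions and far enough past $2T$ that the mix prefix supplies $T/2$ $M$-positions. The target range $[\tfrac{27}{4}T, \tfrac{59}{4}T + 1]$ for $L_c$ is chosen precisely to make both balances work.
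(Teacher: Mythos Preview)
Your proof is correct and follows essentially the same strategy as the paper: choose $r_M$ as a dyadic value so that the position of the stretched $y_c$-block (equivalently, $\ct^{(r)}$) lands in a prescribed window, then count $M$- and $S$-boxes. The only difference is cosmetic: the paper first reasons in $\gamma^{(1,r_M)}$ (choosing $r_M$ so that $7T\le \ct^{(1,r_M)}\le 14T$) and then shifts by $r_S-1$ to $\gamma^{(r_S,r_M)}$, whereas you work directly with $L_c = \ct^{(r_S,r_M)}$ and a slightly different target interval $[27T/4,\,59T/4+1]$; your constants give the stated $T/4$ for the $S$-count while the paper's tighter window actually yields $T/2$.
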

\noindent We note that $R=\{1/q_M:q_M\in Q\}$ for $Q$ the set defined in Algorithm~\ref{alg:commute}.
\begin{proof}
We first choose $r_M \in \{1,2,4,\dots,2^{\lceil \log(14T)\rceil}\}$ such that $7T \leq \ct^{(1,r_M)} \leq 14T$.
To see that this is possible, note that $\ct^{(1,1)}=\ct\leq 14T$ by assumption, and increasing $r_M$ strictly increases $\ct^{(1,r_M)}$, so there is some largest possible $r_M$ such that $\ct^{(1,r_M)}\leq 14T$.
Since doubling $r_M$ can at most double $\ct^{(1,r_M)}$, we must also have $7T\leq \ct^{(1,r_M)}$. 

The increase in the commute time when transforming $y$ to $\gamma^{(1,r_M)}$, given by $\ct^{(1,r_M)}-\ct$, comes from adding $\ct^{(1,r_M)}-\ct$ new $M$-boxes to $\gamma^{(1,r_M)}$ before $\ct^{(1,r_M)}$, meaning that 
\begin{align*}
M_y^{(1,r_M)}[0,\ct^{(1,r_M)}] &\geq \ct^{(1,r_M)}-\ct\geq \ct^{(1,r_M)}-T.
\end{align*}
Note that for any positive integer $k$, we have $M_y^{(1,r_M)}[0,\ct^{(1,r_M)}-k]\geq M_y^{(1,r_M)}[0,\ct^{(1,r_M)}]-k$.
Setting $k=\ct^{(1,r_M)}-3T/2$ thus implies that
\begin{align*}
M_y^{(1,r_M)}[0,3T/2] &\geq T/2.
\end{align*}

Next we choose $r_S = T/2$.
By the conditions $S_y[0,\hit]=1$ and $y_0\in S$, the unique $S$-box in $y$ before $\hit$ is $y_0$, so the second $S$-box in $y$ is at $\ct$, and similarly, the second $S$-box in $\gamma^{(1,r_M)}$ is at $\ct^{(1,r_M)}$. Thus, there are $r_S-1=T/2-1$ new boxes added to $\gamma^{(1,r_M)}$ before $\ct^{(1,r_M)}$ to get $\gamma^{(r_S,r_M)}$, meaning that the $\geq T/2$ $M$-boxes in the first $3T/2$ boxes of $\gamma^{(1,r_M)}$ must all occur within the first $2T$ boxes of $\gamma^{(r_S,r_M)}$, so 
\[
M_y^{(r_S,r_M)}[0,2T]
\geq T/2.
\]
Similarly, since the first $S$-box in $\gamma^{(1,r_M)}$ is extended to $T/2$ $S$-boxes to get to $\gamma^{(r_S,r_M)}$, the second $S$-box is displaced by $T/2-1$, meaning $\ct^{(r_S,r_M)}=\ct^{(1,r_M)}+T/2-1$, so 
\begin{equation*}
15T/2 - 1 \leq \ct^{(r_S,r_M)}\leq 29T/2-1.
\end{equation*}
In $\gamma^{(r_S,r_M)}$, there is a sequence of $r_S=T/2$ $S$-boxes beginning at $\ct^{(r_S,r_M)}$, so we have:
\[
S_y^{(r_S,r_M)}[7T,15T]
\geq T/2. \qedhere
\]
\end{proof}

For a fixed Markov process $P$, and parameters $r=(r_S,r_M)$, let $(Y_i^{(r)})_{i=0}^\infty$ be the Markov chain evolving according to $P(q)$, the absorbing chain with $q_S=1-\frac{1}{r_S}$ and $q_M=1-\frac{1}{r_M}$ (i.e., $Y^{(r)}=Y(q)$).

For a sequence $y$, for any choice of $r=(r_S,r_M)$, we associate two slightly different sequences of boxes to $y$:
\begin{itemize}
\item
As previously defined, $\gamma^{(r)}$ is derived from $y$ by replacing each $M$- resp.~$S$-box with a fixed number of copies $r_M$ resp.~$r_S$.
\item
$y^{(r)}$ is derived from $y$ by replacing each $M$- resp.~$S$-box with an independently random number of copies that are geometrically distributed with mean $r_M$ resp.~$r_S$.
Note that if $y$ is distributed according to the random variable $(Y_i)_{i=0}^{\infty}$, then $y^{(r)}$ is distributed according to $(Y_i^{(r)})_{i=0}^{\infty}$. 
\end{itemize}

Lemma~\ref{lem:comb} allows us to say something about the probability that $\gamma^{(r)}_t\in M$ and $\gamma^{(r)}_{t'}\in S$ under uniform random $r\in R$, $t\in [0,2T]$, and $t'\in [7T,15T]$, assuming certain conditions on $y$ are satisfied (we will shortly argue that these conditions are satisfied with high probability when starting from the distribution $\sigma=\pi|_S$). Since we are actually concerned with proving statements about the absorbing walk, we would like to say something similar for $y^{(r)}$ in place of $\gamma^{(r)}$. Fortunately, these two sequences are very similar, leading to the following corollary of Lemma~\ref{lem:comb}. 

\begin{corollary} \label{cor:bnd}
Set $r_S = T/2$, and let $r_M \in R$, $t\in \{1,\dots,30T\}$ and $t'\in \{1,\dots,30T\}$ be chosen uniformly at random. 
Let $Y=(Y_i)_{i=0}^{\infty}$ be the Markov chain evolving according to Markov process $P$ (not necessarily reversible) starting in any distribution $\kappa$ supported on $S$, with $Y^{(r)}$ the absorbing Markov chain, coupled to $Y$, as defined above.
Let $E$ be the event that: $\ct \leq T$ and $S_Y[0,\hit] = 1$, where $\hit$ and $\ct$ are the hitting time and commute time in $Y$.
Then
\[
\mathbb{E}_{t,t',r_M}\left(\Pr_{Y_0^{(r)} \sim \kappa}(Y_t^{(r)}\in M, Y_{t+t'}^{(r)} \in S \mid E)\right)
= \Omega(\log^{-1} T).
\]
\end{corollary}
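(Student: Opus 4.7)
The plan is to peel the layers of randomness one by one and reduce to the deterministic box-counting statement supplied by Lemma~\ref{lem:comb}. Expressing the probability in terms of the geometric stretching $y^{(r)}$ of the unstretched trajectory $Y=y$, the quantity to bound equals
\[
\mathbb{E}_{Y\mid E}\,\mathbb{E}_{t,t',r_M}\,\mathbb{E}_{y^{(r)}\mid Y=y}\bigl[\mathbbm{1}[y^{(r)}_t\in M,\, y^{(r)}_{t+t'}\in S]\bigr].
\]
First I would condition on an arbitrary $y$ in the support of $Y\mid E$. Since $E$ forces $y_0\in S$, $\ct(y)\leq T$, and $S_y[0,\hit(y)]=1$, the sequence $y$ satisfies the hypotheses of Lemma~\ref{lem:comb}; fixing $r_S=T/2$, that lemma supplies some $r_M^*(y)\in R$ with $r^*:=(r_S,r_M^*(y))$ satisfying $M_y^{(r^*)}[0,2T]\geq T/2$ and $S_y^{(r^*)}[7T,15T]\geq T/4$.

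Since $r_M$ is uniform on $R$ and $|R|=O(\log T)$, the random draw coincides with $r_M^*(y)$ with probability $\Omega(1/\log T)$, so it suffices to show that $\mathbb{E}_{t,t',y^{(r^*)}}[\mathbbm{1}[y^{(r^*)}_t\in M,\, y^{(r^*)}_{t+t'}\in S]]=\Omega(1)$. I would first handle this for the deterministic stretching $\gamma^{(r^*)}$ by a direct counting argument: the $\geq T/2$ good $M$-positions in $[0,2T]$ give $\Pr_t(\gamma^{(r^*)}_t\in M)\geq (T/2)/(30T)=1/60$, and for each $t\in[0,2T]$ the shifted window $[7T-t,\,15T-t]$ sits inside $[5T,15T]\subseteq[1,30T]$ and contains $\geq T/4$ good $S$-positions, so summing over valid pairs $(t,s)$ with $t'=s-t$ yields $\Pr_{t,t'}(\gamma^{(r^*)}_t\in M,\,\gamma^{(r^*)}_{t+t'}\in S)=\Omega(1)$.

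The remaining step, which I expect to be the main obstacle, is transferring this deterministic bound on $\gamma^{(r^*)}$ to the random stretching $y^{(r^*)}$. The block lengths in $y^{(r^*)}$ are independent geometric random variables whose means reproduce those of $\gamma^{(r^*)}$, so the position of the $j$-th box in $y^{(r^*)}$ is a sum of independent geometrics centered at the position of the corresponding box of $\gamma^{(r^*)}$, with standard deviation $O(\sqrt{T})$. Since the windows $[0,2T]$ and $[7T,15T]$ have width $\Theta(T)\gg\sqrt{T}$, a Chebyshev / second-moment argument in the spirit of~\cite{ambainis2019QW} shows that with constant probability $y^{(r^*)}$ still contains $\Omega(T)$ $M$-positions and $\Omega(T)$ $S$-positions in slightly-enlarged versions of these windows, reproducing the deterministic count up to constants. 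Combining this with the $\Omega(1/\log T)$ factor for matching $r_M^*(y)$ then finishes the proof; the crux is precisely that the windows supplied by Lemma~\ref{lem:comb} were chosen of order $T$, large enough to absorb the $O(\sqrt{T})$ positional fluctuations of the geometric stretching.
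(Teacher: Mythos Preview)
Your overall strategy---condition on a trajectory $y$ satisfying $E$, invoke Lemma~\ref{lem:comb} to obtain a good $r_M^*$, count pairs $(t,t')$ for the deterministic stretching $\gamma^{(r^*)}$, and then transfer the count to the random stretching $y^{(r^*)}$---is exactly the route the paper takes. The deterministic counting step is fine.

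The gap is in the transfer step. Your claim that the position of the $j$-th box in $y^{(r^*)}$ deviates from its position in $\gamma^{(r^*)}$ by $O(\sqrt{T})$ is false: a single geometric block of mean $r_S=T/2$ already has variance $\Theta(T^2)$, hence standard deviation $\Theta(T)$, and under $E$ there is exactly one such $S$-block before $\hit$. Likewise $r_M^*$ can be as large as $\Theta(T)$, in which case a single $M$-block contributes standard deviation $\Theta(T)$. So the positional fluctuation is of the same order as the window widths, and a Chebyshev/second-moment bound gives nothing. The paper handles this differently: instead of additive concentration it uses \emph{multiplicative} concentration, invoking~\cite[Lemma~10]{ambainis2019QW}, which says that any sum of i.i.d.\ geometrics lies within a factor~$2$ of its mean with probability at least $7/16$---crucially, this holds even for a single geometric sample. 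Applying this to the average block lengths $\overline{r}_S[a,b]$ and $\overline{r}_M[a,b]$ over six relevant intervals (all six bounds hold simultaneously with probability at least $(7/16)^6$), one deduces that positions in $y^{(r^*)}$ are within a constant \emph{factor} of the corresponding positions in $\gamma^{(r^*)}$, and block counts are at least halved. This is what lets the $\Theta(T)$ many $M$- and $S$-boxes guaranteed by Lemma~\ref{lem:comb} survive into slightly enlarged windows of $y^{(r^*)}$ (the paper ends up with $[0,7T/2]$ for the $M$-boxes and $[7T/2,30T]$ for the $S$-boxes). Once you replace your Chebyshev step by this factor-$2$ argument, the rest of your outline goes through.
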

\begin{proof}
By Lemma~\ref{lem:comb}, we know that if $E$ holds then there exists some $r_M \in R$, chosen with probability $1/|R| = \Omega(\log^{-1} T)$, such that with probability 1 we have 
\[
|\{i \in [0,2T]: \gamma^{(r)}_i \in M\}|
= M_Y^{(r)}[0,2T] \geq T/2
\]
and
\[
|\{i \in [7T,15T]: \gamma^{(r)}_i \in S\}|
= S_Y^{(r)}[7T,15T] \geq T/4.
\]
We will show that for \emph{any} $y$ in the support of $Y$ such that $E$ holds, given this choice of $r$, $\Pr_{t,t'}(y_t^{(r)}\in M,y_{t+t'}^{(r)}\in S)$ is constant, completing the proof. 
Using the similarities between $y^{(r)}$ and $\gamma^{(r)}$, we now wish to lower bound $|\{i\in[0,7T/2]:y^{(r)}_i\in M\}|$ and $|\{i\in[7T/2+1,30T]:y_i^{(r)}\in S\}|$.

In going from $y$ to $\gamma^{(r)}$, we replace each $S$-box with a block of $r_S$ $S$-boxes, whereas when going from $y$ to $y^{(r)}$, we replace each $S$-box with a block of $S$-boxes whose length is geometrically distributed with mean $r_S$ (and similarly for $M$-boxes). We can equivalently derive $y^{(r)}$ from $\gamma^{(r)}$ by replacing each block of $r_S$ $S$-boxes with a block of $S$-boxes whose length is given by a geometric sample. 
Define $\overline{r}_S[a,b]$ to be the mean over all such geometric samples replacing an $S$-block of $\gamma^{(r)}$ that has non-trivial overlap with the interval $[a,b]$. Define $\overline{r}_M[a,b]$ similarly. 

We will assume the following 6 conditions:
\begin{align} \label{eq:cond-geoms}
\begin{split}
\overline{r}_M[1,2T],\; \overline{r}_M[1,7T],\; \overline{r}_M[7T,15T] &\in [r_M/2,2r_M],\\
\overline{r}_S[1,2T],\; \overline{r}_S[1,7T],\; \overline{r}_S[7T,15T] &\in [r_S/2,2r_S].
\end{split}
\end{align}
We use \cite[Lemma 10]{ambainis2019QW}, which states that any sum of i.i.d.~geometric random variables will be within a factor 2 of its expected value with probability at least $7/16$.
While for instance the sums $\overline{r}_M[1,2T]$ and $\overline{r}_M[1,7T]$ are not strictly independent, clearly the event that $\overline{r}_M[1,2T] \in [r_M/2,2r_M]$ cannot decrease the probability that $\overline{r}_M[1,7T] \in [r_M/2,2r_M]$.
Hence the probability that all 6 conditions hold is at least $(7/16)^6$, and we will assume that this is the case for the rest of the proof.

Since $r_S/2\leq \overline{r}_S[0,2T]\leq 2r_S$ and $r_M/2\leq \overline{r}_M[0,2T]\leq 2r_M$, it holds that 
\begin{equation}
|\{i\in[0,7T/2]:y_i^{(r)}\in M\}|
\geq M_y^{(r)}[0,2T]/2
\geq T/4.
\label{eq:M-bound}
\end{equation}
To see this, notice that in the interval $[0,2T]$  of $\gamma^{(r)}$, we remove at most half of the $\geq T/2$ $M$-boxes to get $y^{(r)}$, so there are at least $T/4$ remaining. 
We at most double the $\leq 3T/2$ $S$-boxes, so the remaining $\geq T/4$ $M$-boxes from $\gamma^{(r)}$'s interval $[0,2T]$ all appear within the first $2T+3T/2=7T/2$ of $y^{(r)}$. From \eqref{eq:M-bound}, we immediately have
\begin{equation*}
|\{i\in[1,30T]:y_i^{(r)}\in M\}|
\geq T/4,
\end{equation*}
so the probability that a uniform random $t\in\{1,\dots,30T\}$ satisfies $y^{(r)}_t\in M$ is at least $1/120$.

Next, since $\overline{r}_M[0,7T],\overline{r}_M[7T,15T] \in [r_M/2,2r_M]$ and $\overline{r}_S[0,7T],\overline{r}_S[7T,15T] \in [r_S/2,2r_S]$, 
\begin{equation}
|\{i\in[7T/2,30T]:y_i^{(r)}\in S\}|
\geq |\{i\in[7T,15T]:\gamma_i^{(r)}\in S\}|/2
= S_y^{(r)}[7T,15T]/2
\geq T/8.
\label{eq:S-bound}
\end{equation}
To see this, note that in the interval $[7T,15T]$ of $\gamma^{(r)}$, we remove at most half of the $\geq T/4$ $S$-boxes to get $y^{(r)}$, so there are at least $T/8$ remaining, although they might no longer be contained within the interval $[7T,15T]$. However, since the position of any element in $y^{(r)}$ is at least half and at most double its position in $\gamma^{(r)}$, these $T/8$ $S$-boxes will be within the interval $[7T/2,30T]$ of $y^{(r)}$. 

From \eqref{eq:S-bound}, we want to conclude that $\Pr_{t,t'}(y^{(r)}_{t+t'}\in S|y^{(r)}_t\in M)$ is constant.
In fact, by \eqref{eq:M-bound}, we have that with constant probability $t\leq 7T/2$ \emph{and} $y^{(r)}_t\in M$.
Hence it is sufficient to lower bound $\Pr_{t,t'}(y^{(r)}_{t+t'}\in S|t\leq 7T/2, y^{(r)}_t\in M)$ by a constant.  
To do so, we note that for \emph{any} $t\in [1,\dots,7T/2]$ the range of possible values of $t+t'$ contains $[7T/2,30T]$.
Hence, by \eqref{eq:S-bound}, 
\begin{equation*}
\Pr_{t,t'}(y^{(r)}_{t+t'}\in S|t\leq 7T/2, Y^{(r)}_t\in M)\geq (T/8)/(30T)={1}/{240}.
\end{equation*}
This bound only holds conditioned on the events in \eqref{eq:cond-geoms}, but we already argued that these also hold with constant probability.
\end{proof}

This corollary allows us to prove the following statement.
\begin{corollary} \label{cor:bnd-exp}
Set $r_S=T/2$, and let $r_M\in R$ and $t\in \{1,\dots,30T\}$ be chosen uniformly at random. Let $q_S=1-\frac{1}{r_S}$ and $q_M=1-\frac{1}{r_M}$, and let $D(q)$ be the discriminant of $P(q)$ for a reversible ergodic Markov process $P$ on $X$ with stationary distribution $\pi$, and $\sigma=\pi|_S$ for some $S\subset X$ with $1/C_{\sigma,M}\leq \pi(S)\leq 2/C_{\sigma,M}$. 
If $T \geq 4C_{\sigma,M}$ then 
\[
\mathbb{E}_{t,r_M} \big[ \nrm{\Pi_M D^t(q) \ket{\sqrt{\sigma}}}^2 \big]
\in \Omega(\log^{-1} T).
\]
\end{corollary}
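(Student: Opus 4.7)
My plan is to lower-bound $\Exp_{t,r_M}\bigl[\nrm{\Pi_M D^t(q)\ket{\sqrt{\sigma}}}^2\bigr]$ by a classical probability of the interpolated walk $P(q)$, and then invoke Corollary~\ref{cor:bnd}. The key ingredient is a generalized ``inner product identity'': for all $t_A,t_B\in\N$,
\[
\braketbra{\sqrt{\sigma}}{D^{t_A}(q)\,\Pi_M\,D^{t_B}(q)}{\sqrt{\sigma}} = \Pr_{\sigma}\bigl(Y_{t_A}(q)\in M,\ Y_{t_A+t_B}(q)\in S\bigr).
\]
To show this I would expand the LHS as $\sum_{v\in M}\braketbra{\sqrt{\sigma}}{D^{t_A}(q)}{v}\braketbra{v}{D^{t_B}(q)}{\sqrt{\sigma}}$ and use reversibility of $P(q)$ with respect to its stationary distribution (which, by a direct check, is proportional to $\pi$ but rescaled by $r_S=1/(1-q_S)$ on $S$ and by $r_M=1/(1-q_M)$ on $M$). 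This yields two equivalent closed forms for $\braketbra{v}{D^{t}(q)}{\sqrt{\sigma}}$, namely $\sqrt{r_S\pi(S)/(r_M\pi_v)}\,\Pr_\sigma(Y_t(q)=v)$ and $\sqrt{r_M\pi_v/(r_S\pi(S))}\,\Pr_v(Y_t(q)\in S)$. Applying one expression to the $t_A$-factor and the other to the $t_B$-factor, the $v$-dependent prefactors cancel exactly, leaving the product $\Pr_\sigma(Y_{t_A}(q)=v)\,\Pr_v(Y_{t_B}(q)\in S)$; summing over $v\in M$ and using the Markov property then yields the identity. In particular the inner product is non-negative.

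Given the identity, Cauchy--Schwarz gives
\[
\nrm{\Pi_M D^{t_A}(q)\ket{\sqrt{\sigma}}}\cdot\nrm{\Pi_M D^{t_B}(q)\ket{\sqrt{\sigma}}} \geq \Pr_{\sigma}\bigl(Y_{t_A}(q)\in M,\ Y_{t_A+t_B}(q)\in S\bigr).
\]
For each fixed $r_M$, Jensen's inequality combined with the factorization $(\Exp_t X_t)^2=\Exp_{t_A,t_B}[X_{t_A}X_{t_B}]$ for i.i.d.~uniform $t_A,t_B\in\{1,\dots,30T\}$ yields $\Exp_t\bigl[\nrm{\Pi_M D^t(q)\ket{\sqrt\sigma}}^2\bigr]\geq \Exp_{t_A,t_B}\bigl[\nrm{\Pi_M D^{t_A}(q)\ket{\sqrt\sigma}}\cdot\nrm{\Pi_M D^{t_B}(q)\ket{\sqrt\sigma}}\bigr]$. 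Averaging over $r_M\in R$ and chaining with the Cauchy--Schwarz bound above gives
\[
\Exp_{t,r_M}\bigl[\nrm{\Pi_M D^t(q)\ket{\sqrt{\sigma}}}^2\bigr] \geq \Exp_{t_A,t_B,r_M}\bigl[\Pr_\sigma(Y_{t_A}(q)\in M,\ Y_{t_A+t_B}(q)\in S)\bigr],
\]
which is exactly the type of quantity bounded in Corollary~\ref{cor:bnd}.

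Finally, I would apply Corollary~\ref{cor:bnd} with $\kappa=\sigma$. Its event $E$ (namely $\ct\leq T$ and $S_Y[0,\hit]=1$) is implied by the event whose probability is bounded in Claim~\ref{claim:commute-time}: taking $p=\tfrac12$, the standing assumptions $T\geq 4C_{\sigma,M}$, $1/C_{\sigma,M}\leq \pi(S)\leq 2/C_{\sigma,M}$, together with $C_{S,M}\leq C_{\sigma,M}$, verify the claim's conditions $2/T\leq \pi(S)p\leq 1/C_{S,M}$, giving $\Pr(E)\geq p/2=\tfrac14$. Combining this with the conditional lower bound $\Omega(\log^{-1}T)$ from Corollary~\ref{cor:bnd} completes the proof. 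The main obstacle is the asymmetric use of reversibility needed in the inner product identity: one must substitute differently on the $D^{t_A}$- and $D^{t_B}$-factors so that the prefactors $\sqrt{r_M\pi_v/(r_S\pi(S))}$ exactly cancel; a symmetric treatment (matching Lemma~\ref{lem:nrm-bnd}) would yield only a squared-probability bound, which is too lossy to recover $\Omega(\log^{-1}T)$ after Jensen's inequality.
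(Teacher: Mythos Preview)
Your proposal is correct and follows essentially the same approach as the paper's proof: both establish the identity $\langle\sqrt{\sigma}|D^{t_A}(q)\Pi_M D^{t_B}(q)|\sqrt{\sigma}\rangle=\Pr_\sigma(Y_{t_A}(q)\in M,\,Y_{t_A+t_B}(q)\in S)$ via the asymmetric use of reversibility (the paper writes the cancelling prefactors as $\sqrt{a/b}$ and $\sqrt{b/a}$, you write them as $\sqrt{r_S\pi(S)/(r_M\pi_v)}$ and its reciprocal), then combine Jensen on $t$, Cauchy--Schwarz on the product of norms, and finally invoke Corollary~\ref{cor:bnd} together with Claim~\ref{claim:commute-time} to bound $\Pr(E)\geq 1/4$. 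The only difference is the order of exposition: you state the inner-product identity first and then apply Jensen and Cauchy--Schwarz, whereas the paper applies Jensen and Cauchy--Schwarz first and derives the identity inside the resulting sum.
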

\begin{proof}
Let $Y^{(r)}=Y(q)$ be the Markov chain of the absorbing walk $P(q)$ starting from the distribution $\sigma$. Then we can define $Y$ as the Markov chain evolving according to $P$, coupled to $Y^{(r)}$ as above. That is, $Y$ follows the same sequence as $Y^{(r)}$, except that it omits repeated elements that result from using the absorbing self-edges. 

Let $E$ denote the event that $\ct \leq T$, where $\ct$ is the commute time of $Y$, and $S_Y[0,\hit] = 1$.
Then by Claim~\ref{claim:commute-time} we know that $E$ holds with probability at least $1/4$.
Combined with Corollary~\ref{cor:bnd}, taking $t'$ uniformly at random from $\{1,\dots,30T\}$, this allows us to conclude
\begin{equation}
\mathbb{E}_{t,t',r}\left(\Pr_{Y_0^{(r)} \sim \sigma}(Y_t^{(r)}\in M, Y_{t'}^{(r)} \in S)\right)
\geq \frac{1}{4} \Omega(\log^{-1} T). \label{eq:exp-lb}
\end{equation}
Next, we compute:
\begin{align}
\mathbb{E}_{t,r_M} \big[ \nrm{\Pi_M D^t(q) \ket{\sqrt{\sigma}}}^2 \big]
&= \frac{1}{|R|} \sum_{r_M \in R}\frac{1}{30T}\sum_{t=1}^{30T} \nrm{\Pi_M D^t(q) \ket{\sqrt{\sigma}}}^2 \nonumber\\
&\geq \frac{1}{|R|} \sum_{r_M\in R} \left( \frac{1}{30T}\sum_{t=1}^{30T} \nrm{\Pi_M D^t(q) \ket{\sqrt{\sigma}}} \right)^2 \nonumber\\
&= \frac{1}{|R|}\sum_{r_M\in R}\frac{1}{30T}\sum_{t=1}^{30T}\nrm{\Pi_M D^t(q) \ket{\sqrt{\sigma}}} \frac{1}{30T}\sum_{t'=1}^{30T}\nrm{\Pi_M D^{t'}(q) \ket{\sqrt{\sigma}}}\nonumber\\
&\geq \frac{1}{|R|} \sum_{r_M\in R} \frac{1}{30T}\sum_{t=1}^{30T}\frac{1}{30T}\sum_{t'=1}^{30T}\left|\bra{\sqrt{\sigma}}D^{t}(q)\Pi_M D^{t'}(q)\ket{\sqrt{\sigma}}\right|,\label{eq:exp-nrm}
\end{align}
by the Cauchy-Schwarz inequality.
Let $\pi'$ denote the stationary distribution of $P(q)$. Since $P(q)$ is just a twice interpolated walk, its stationary distribution has the form 
$$\pi'=a\pi|_S+b\pi|_M+c\pi|_{X\setminus (M\cup S)}=a\sigma+b\pi|_M+c\pi|_{X\setminus (M\cup S)},$$
for some positive constants $a$, $b$ and $c$, whose precise values are not important for our purposes (see \cite{krovi2010QWalkFindMarkedAnyGraph} for an analysis of the stationary distribution of interpolated walks). Furthermore, we observe that $D(q)^t=\mathrm{diag}(\pi')^{1/2}P(q)^t\mathrm{diag}(\pi')^{-1/2}$, so, since $\ket{\sqrt{\sigma}}$ is supported on $S$:
\begin{align*}
\bra{\sqrt{\sigma}} D(q)^{t}\Pi_M &= \bra{\sqrt{\sigma}}\mathrm{diag}(a\sigma)^{1/2}P(q)^{t} \mathrm{diag}(b\pi|_M)^{-1/2}\\
&= \sqrt{a/b} \sum_{u\in S}\sigma_u\bra{u} P(q)^{t} \mathrm{diag}({ \pi|_M})^{-1/2}.
\end{align*}
Similarly, 
\begin{equation*}
\Pi_M D(q)^{t'}\ket{\sqrt{\sigma}} = \sqrt{b/a}\mathrm{diag}(\pi|_M)^{1/2} P(q)^{t'} \sum_{u\in S}\ket{u}.
\end{equation*}
Thus, 
\begin{align*}
\left|\bra{\sqrt{\sigma}}D^{t}(q)\Pi_M D^{t'}(q)\ket{\sqrt{\sigma}}\right| 
&= \sum_{u\in S}\sigma_u\bra{u}P(q)^{t}\mathrm{diag}(\pi|_M)^{-1/2}\mathrm{diag}(\pi|_M)^{1/2}P(q)^{t'}\sum_{u\in S}\ket{u}\\
&= \sum_{u\in S}\sigma_u\bra{u}P(q)^{t}\Pi_M P(q)^{t'}\sum_{u\in S}\ket{u}\\
&= \Pr_{Y_0(q)\sim \sigma}(Y_t(q)\in M, Y_{t'+t}(q)\in S).
\end{align*}
Recall that $Y^{(r)}=Y(q)$. Thus continuing, from \eqref{eq:exp-nrm}, and using \eqref{eq:exp-lb}, we have:
\begin{align*}
\mathbb{E}_{t,r_M} \big[ \nrm{\Pi_M D^t(q) \ket{\sqrt{\sigma}}}^2 \big]
&= \frac{1}{|R|} \sum_{r_M\in R} \frac{1}{30T}\sum_{t=1}^{30T}\frac{1}{30T}\sum_{t'=1}^{30T} \Pr_{Y_0(q) \sim \sigma}\big( Y_t(q) \in M, Y_{t'+t}(q) \in S \big) \\
&= \mathbb{E}_{t,t',r_M}\left(\Pr_{Y_0(q) \sim \pi_S}\big( Y_t(q) \in M, Y_{t+t'}(q) \in S\big)\right)
\in \Omega(\log^{-1} T). \qedhere
\end{align*}
\end{proof}

\noindent
From this corollary, we can straightforwardly prove our final lemma.
Combined with Lemma~\ref{lem:compl} this proves \cref{thm:electric-finding}.
\begin{lemma} \label{lem:stop}
There exists $T' \in \bigO{C_{\sigma,M}}$ such that, for all $T \geq T'$, Algorithm~\ref{alg:commute} returns a marked element with constant probability.
\end{lemma}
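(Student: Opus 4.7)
The plan is to reduce the analysis of Algorithm~\ref{alg:commute} directly to Corollary~\ref{cor:bnd-exp}, which supplies the key lower bound on the expected amplitude of the marked subspace after the discriminant $D^t(q)$ acts on $\ket{\sqrt{\sigma}}$. I would set $T' = c \cdot C_{\sigma,M}$ for a universal constant $c$ large enough that $T \geq T'$ guarantees the hypothesis $T\geq 4C_{\sigma,M}$ of the corollary; the factor $30$ discrepancy between the corollary's range $\{1,\dots,30T\}$ and the algorithm's range $[T]$ is absorbed into the choice of $c$ (equivalently, into the meaning of $[T]$). The parameter $r_S = T/2$ built into the algorithm, and the set $R = \{1/q_M : q_M \in Q\}$, match the parameter choices of the corollary exactly.

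Assuming ideal fast-forwarding in step~2, the state produced is
\[
\ket{1}\otimes \frac{1}{\sqrt{T|Q|}}\sum_{t,q_M}\ket{t}\ket{q}\, D^t(q)\ket{\sqrt{\sigma}} + \ket{0}\otimes \ket{\Gamma}.
\]
Projecting onto $\ket{1}$ in the flag register and onto $\Pi_M$ in the last register yields squared amplitude
\(
\frac{1}{T|Q|}\sum_{t,q_M}\|\Pi_M D^t(q)\ket{\sqrt{\sigma}}\|^2 = \mathbb{E}_{t,r_M}\!\big[\|\Pi_M D^t(q)\ket{\sqrt{\sigma}}\|^2\big],
\)
which by Corollary~\ref{cor:bnd-exp} is $\Omega(1/\log T)$. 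Hence the ideal amplitude on the ``good'' subspace (flag=1 \emph{and} vertex marked) is $\Omega(1/\sqrt{\log T})$. The reflection about this subspace decomposes as a reflection about $\ket{1}$ in the flag register (a single-qubit operation) composed with a reflection about $\Pi_M$ on the vertex register, implementable with a single call to $\checkv(M)$, so each round of amplitude amplification costs one call to $U$, one to $U^\dagger$, and $O(1)$ extra gates, matching the cost accounting in Lemma~\ref{lem:compl}. After $O(\sqrt{\log T})$ rounds, standard amplitude amplification \cite{brassard2002AmpAndEst} boosts the success probability to a constant, and a final measurement of the vertex register outputs an element of $M$.

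The main technical obstacle is handling the $\Theta(1/\log T)$ precision of the fast-forwarding subroutine, since the per-basis-state fast-forwarding error in $\ell_2$-norm is $\Theta(1/\sqrt{\log T})$, the same order as the signal amplitude. I would argue as follows: by Theorem~\ref{thm:fast-forwarding} applied with precision $\eps = \eta/\log T$ for a sufficiently small constant $\eta$, the total deviation of the actual post-step-2 state from the ideal one is at most $\sqrt{\eps}= O(1/\sqrt{\log T})$ with a small constant. Since amplitude amplification is robust to perturbations of the underlying unitary -- the error grows by at most a factor proportional to the number of rounds, here $O(\sqrt{\log T})$ -- choosing $\eta$ small enough ensures the final deviation from the ideal amplified state is bounded by, say, $1/4$, and the success probability remains bounded below by a constant. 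Combined with Lemma~\ref{lem:compl} this proves Theorem~\ref{thm:electric-finding}.
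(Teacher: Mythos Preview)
Your proof is correct and follows essentially the same approach as the paper's: invoke Corollary~\ref{cor:bnd-exp} to obtain $\Omega(1/\log T)$ probability of a marked outcome after step~2, then amplify with $O(\sqrt{\log T})$ rounds; you are in fact more careful than the paper about the fast-forwarding error budget, which the paper handles in a single sentence. One minor slip (irrelevant to the correctness of this lemma, since it concerns cost accounting already handled in Lemma~\ref{lem:compl}): the reflection about the subspace $\{\text{flag}=1\}\cap\{\text{marked}\}$ is $I - 2(\ketbra{1}{1}\otimes\Pi_M)$, which is \emph{not} the product of the two separate reflections you describe, though it is still implementable with a single $\checkv(M)$ call via a doubly-controlled phase, so your conclusion stands.
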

\begin{proof}
By the above Corollary~\ref{cor:bnd-exp} we know that for all $T \geq T'$, for some $T' \in \bigO{C_{\sigma,M}}$, it holds that
\[
\frac{1}{T}\sum_{t\in[T]} \frac{1}{|Q|}\sum_{q_M\in Q} \nrm{\Pi_M D^t(q) \ket{\sqrt{\sigma}}}^2 = \frac{1}{T}\sum_{t\in [T]}\frac{1}{|R|}\sum_{r_M\in R}\nrm{\Pi_M D^t(q)\ket{\sqrt{\sigma}}}^2
\in \Omega(\log^{-1}T).
\]
As a consequence, measuring the state
\[
\ket{1} \Big(\sum_{t\in[T]} \sum_{q_M\in Q} \frac{1}{\sqrt{T|Q|}} \ket{t}\ket{q}D^t(q)\ket{\pi_S}\Big)
+ \ket{0}\ket{\Gamma}
\]
returns a marked element with probability $\Omega(\log^{-1} T)$.
In step \ref{step:ffw}.~of the algorithm we approximate this state up to sufficient precision $\bigO{\log^{-1} T}$.
Applying $\bigO{\sqrt{\log T}}$ rounds of amplitude amplification then indeed suffices to retrieve a marked element with constant probability.
\end{proof}

\subsection{A generalization of Belovs' algorithm}
Now we sketch a generalization of \cref{alg:commute}. 

\begin{corollary}
	Let $\sigma, \rho$ be probability distributions on $S=\supp(\sigma)\subseteq X$, and $p:=1/\sum_{u\in S}\frac{\rho_u^2}{\sigma_u}$, then there is a quantum algorithm that finds a marked element from $M$ in expected complexity
	\[
	\bigO{\sqrt{1/p}\,\left[\log(C_{\rho,M})\mathsf{S}(\sigma) + \sqrt{C_{\rho,M}} \left(\mathsf{U}(\sigma)+\mathsf{C}\right)\right]\mathrm{polylog}(C_{\rho,M}/p)}.
	\]
\end{corollary}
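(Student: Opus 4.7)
The plan is to apply Algorithm~\ref{alg:commute} to the graph modification of Section~\ref{sec:mod-graph}, using $\sigma$ as the setup distribution exactly as before, but exploiting the bound of Lemma~\ref{lem:props-mod-graph-gen}, which involves the witness distribution $\rho$ and the overlap parameter $p$. Since the starting state is still $\ket{\sqrt{\sigma'}}$, the stationarity condition $\pi'|_{S'}\propto\sigma'$ used in the box-stretching argument of Section~\ref{sec:boxes} continues to hold.

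First, I would apply the modification of Section~\ref{sec:mod-graph} to $(G,\sigma,M)$ with interpolation parameter $C\in\Theta(p\cdot C_{\rho,M})$, chosen to balance the two terms in Lemma~\ref{lem:props-mod-graph-gen}. Plugging the witness distribution $\rho$ into that lemma gives $1/(C'_{S',M'}\pi'(S'))\geq p/2$ and $\pi'(S')=\Theta(1/(1+pC_{\rho,M}))$. Hence Claim~\ref{claim:commute-time} applies with $p_{\mathrm{claim}}=\Theta(p)$ and $T=\Theta(C_{\rho,M}+1/p)$, yielding a probability $\Omega(p)$ that a classical walk on $G'$ started from $\pi'|_{S'}$ hits $M'$ and returns to $S'$ within $T$ steps.

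Next, I would run Algorithm~\ref{alg:commute} on $G'$ with time budget $T$ and starting state $\ket{\sqrt{\sigma'}}$. The analyses of Corollaries~\ref{cor:bnd} and~\ref{cor:bnd-exp} go through with one substitution: the event $E$ now has probability $\Omega(p)$ rather than $\Omega(1)$, so the expected squared overlap $\mathbb{E}_{t,r_M}[\nrm{\Pi_{M'} D^t(q)\ket{\sqrt{\sigma'}}}^2]$ becomes $\Omega(p/\log T)$. Replacing the $\bigO{\sqrt{\log T}}$ internal amplification rounds by $\bigO{\sqrt{\log T/p}}$ outer rounds of amplitude amplification yields an overall cost $\sqrt{\log T/p}\,[\mathsf{S}(\sigma)+\bigO{\sqrt{T\log T\log\log T}}(\mathsf{U}(\sigma)+\mathsf{C})]$ per Lemma~\ref{lem:compl}; absorbing the $\log$, $\log\log$, and $\sqrt{\log T}$ factors into a $\mathrm{polylog}(C_{\rho,M}/p)$ overhead gives the claimed complexity in the interesting regime $pC_{\rho,M}\gtrsim 1$ (the additive $1/p$ in $T$ otherwise sits inside the polylog factor).

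The main obstacle I anticipate is verifying that the box-stretching analysis of Section~\ref{sec:boxes} is truly insensitive to the new shape of $\pi'(S')$: the original argument implicitly used $\pi'(S')\in\Theta(1/C_{\sigma,M})$, whereas here it scales as $1/(1+pC_{\rho,M})$, so one must check that the geometric range $R$ of absorbing-probability parameters chosen in Algorithm~\ref{alg:commute} still covers the relevant scales and that the time-averaging ranges appearing in Lemma~\ref{lem:comb} still align with the resulting gray and black block lengths. Once this is settled, the remainder of the proof is bookkeeping around Lemma~\ref{lem:props-mod-graph-gen}, which already encapsulates the generalized commute-time bound needed to replace $C_{\sigma,M}$ by the witness quantity $C_{\rho,M}$ at the cost of a factor $\sqrt{1/p}$.
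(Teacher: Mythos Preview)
Your proposal is correct and follows essentially the same route as the paper: choose $C\in\Theta(pC_{\rho,M})$ in the graph modification, invoke Lemma~\ref{lem:props-mod-graph-gen} with the witness distribution $\rho$ so that Claim~\ref{claim:commute-time} yields the event $E$ with probability $\Omega(p)$, rerun the analysis of Section~\ref{sec:boxes} to get success probability $\Omega(p/\log T)$, and compensate with $\sqrt{1/p}$ extra rounds of amplitude amplification. Your stated obstacle is not a real one---the box-stretching argument (Lemma~\ref{lem:comb} and Corollary~\ref{cor:bnd}) is purely combinatorial conditioned on $E$ and never uses the specific scale of $\pi'(S')$; the only places $\pi'(S')$ enters are Claim~\ref{claim:commute-time} (which already works for general $p$) and the choice of $r_S$, which scales with $T$ automatically---but two minor points you glossed over are the need to tighten the fast-forwarding precision in Step~\ref{step:ffw} to $\bigO{p/\log T}$ so the approximation error does not swamp the $\Omega(p/\log T)$ signal, and the binary search over $p$ and $C_{\rho,M}$ (since neither is known a priori), which is where the paper's extra $\log(C_{\rho,M})$ on the $\setup(\sigma)$ term and the ``expected complexity'' phrasing come from.
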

\begin{proof}
	Suppose that $p C_{\rho,M}\leq C \in \bigO{p C_{\rho,M}}$, then by \cref{lem:props-mod-graph-gen} and \cref{claim:commute-time} we have that the walk started from $\sigma'$ on the modified graph first hits $M'$ and then returns to $S'$ with probability at least $p/4$ within the first $T=\lceil \frac{4}{p}(C+2)\rceil=\bigO{C_{\rho,M}}$ steps. The analysis in \cref{sec:boxes} shows that \cref{alg:commute} finds a marked element with probability $\Omega(p)$ if we enhance the precision by a factor of $\sim p$ in step \ref{step:ffw}. After applying $\sqrt{1/p}$ additional rounds of amplitude amplification we find a marked element with probability $\Omega(1)$. 
	
	Finally note that we do not need to a priori know $\rho$ or the values of $p$ and $C_{\rho,M}$. We can do binary search to find multiplicative constant approximations of $p$ and $C_{\rho,M}$, only incurring a logarithmic overhead and providing an expected runtime as claimed.
\end{proof}
\anote{Another generalization could be to sample from an on-demand distribution on $M$. A similar algorithm should work on an analogously modified graph, with new vertices $(2,m)\colon m\in M$, and with some runtime analogous to $\HT^+$.}

Intuitively this improvement is somewhat analogous to the $\HT^+$ to $\HT$ improvement in the complexity of finding marked elements using quantum walks~\cite{ambainis2019QW}. There the $\HT^+$ complexity corresponds to ``fair'' sampling of a marked vertex from $\pi|_M$, whereas here the runtime $\sqrt{C_{\sigma,M}}$ corresponds to the ``democratic'' requirement that $M$ should be reachable from the entire $\sigma$ -- but one does not actually need to hit the marked set from everywhere! It is enough if we hit it with high probability from a large fraction of the initial states, cf.~\cref{lem:s-M}. Indeed, if $C=C_{\sigma|_Q,M}$ then for $\rho:=\sigma|_Q$ we get $p=\sigma(Q)$, and so we get an efficient algorithm with runtime $\sim \sqrt{C}$ as long as $p$ is not too small, while the quantity $\sqrt{C_{\sigma,M}}$ is less relevant.
\anote{Would be nice to express the condition in a more intuitive way, like $\Pr_\sigma[\tau_M^S\leq C]\geq p$ or something similar. Any ideas?}

To illustrate that this result can be helpful in some cases, we consider the following example.	Suppose that $G$ is a regular graph, with marked set $M$ and hitting time $\HT$. Suppose that we can remove an edge of $G$ without affecting the hitting time much. Take, say $3$ copies of $G$, and cyclically connect to each other the vertices adjacent to the removed edges, so that the graphs form a triangle, with a single edge between each pair of the copies of $G$.  Suppose that we unmark the marked vertices of one copy, and set the weight of the three new edges very small, so that the hitting time in the new graph can be arbitrary large. If the vertices are also permuted there is no apparent structure left, and previous quantum walk algorithms seem to fail in finding or even detecting marked vertices faster than $\widetilde{\mathcal{O}}(\sqrt{\HT'})$, where $\HT'$ is the hitting time of the new graph. However, our algorithm can find a marked vertex in time $\widetilde{\mathcal{O}}(\sqrt{\HT})$. Thus our walk actually finds a marked vertex much faster than the hitting time $\widetilde{\mathcal{O}}(\sqrt{\HT'})$ of the new graph.
\anote{Other applications? Backtracking? Space-time trade-offs in Element distinctness?}

\section{The MNRS framework and the electric network framework}\label{sec:unified}

In this section we describe our second main result, which is a quantum walk search algorithm that generalizes the MNRS framework, the hitting time framework, the controlled quantum amplification framework, and (our extension of) the electric network framework.
It is summarized in the following theorem.

\begin{theorem}\label{thm:mnrs-electric}
For any reversible Markov chain $P$ on state space $X$, any marked set $M\subset X$, any $t\in \N$, and any distribution $\sigma$ on $X$, there is a quantum algorithm that finds a marked element with bounded error in complexity
\begin{align*}
&\bigO{\sqrt{\log(C(t))}\mathsf{S}(\sigma)+\sqrt{C(t)\log(C(t))\log\log(C(t))}(\sqrt{t}\mathsf{U}\sqrt{\log(C(t))}+ \refl(\sigma)+\mathsf{C})}\\
={}& \bigOt{\mathsf{S}(\sigma)+\sqrt{C(t)}(\sqrt{t}\mathsf{U}(\sigma)+\mathsf{C})},
\end{align*}
where $C(t)$ is a known upper bound on $C_{\sigma,M}(P^t)$, $\mathsf{S}(\sigma)$ is the cost of $\mathtt{Setup}(\sigma)$, $\mathsf{U}$ is the cost of the walk operator $W(P)$, $\refl(\sigma)$ is as in Theorem~\ref{thm:belovs}, $\mathsf{U}(\sigma)=\mathsf{U}+\refl(\sigma)$,
and $\mathsf{C}$ is the cost of the $\mathtt{Check}(M)$ operation.
\end{theorem}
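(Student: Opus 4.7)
The plan is to reduce Theorem~\ref{thm:mnrs-electric} to Theorem~\ref{thm:electric-finding} applied to the iterated chain $P^t$, using quantum fast-forwarding (Theorem~\ref{thm:fast-forwarding}) to supply an approximate walk operator for $P^t$ in terms of $W(P)$.

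First I would observe that $P^t$ is itself a reversible Markov chain with the same stationary distribution $\pi$ as $P$, so by~\eqref{eq:discriminant}
\[
D(P^t) \;=\; \diag{\sqrt{\pi}}\, P^t\, \diag{\sqrt{\pi}}^{-1} \;=\; D(P)^t.
\]
Consequently Theorem~\ref{thm:fast-forwarding}, applied to $W(P)$ for $t$ steps at precision $\eps$, produces a unitary $\widetilde{W}_\eps$ that is an $\eps$-approximate block-encoding of $D(P^t)$, at cost $O(\sqrt{t\log(1/\eps)}\,\mathsf{U})$. In the terminology of Definition~\ref{def:walk-operator}, $\widetilde{W}_\eps$ serves as an approximate walk operator $W(P^t)$. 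The interpolation gadget from Section~\ref{sec:prelim-qw-search} then lets us assemble, from one call to $\widetilde{W}_\eps$, the corresponding check operation $\mathtt{Check}(M)$, and the reflection around $\supp(\sigma)$, an approximate interpolated walk operator $W(P^t(q))$ at cost $O(\sqrt{t\log(1/\eps)}\,\mathsf{U} + \refl(\sigma) + \mathsf{C})$.

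Next I would invoke Theorem~\ref{thm:electric-finding} with base chain $P^t$ (in place of $P$) and initial distribution $\sigma$; since $C_{\sigma,M}(P^t) \leq C(t)$, this gives complexity
\[
\bigO{\sqrt{\log C(t)}\,\mathsf{S}(\sigma) \;+\; \sqrt{C(t)\log C(t)\log\log C(t)}\,\bigl(\mathsf{U}(P^t,\sigma) + \mathsf{C}\bigr)},
\]
where $\mathsf{U}(P^t,\sigma)$ denotes the cost of one application of $W(P^t(q))$ and decomposes as $O(\mathsf{U}(P^t) + \refl(\sigma) + \mathsf{C})$.

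The main obstacle is the error analysis: the approximate walk $\widetilde{W}_\eps$ is used inside the inner fast-forwarding of Algorithm~\ref{alg:commute}, which is itself wrapped in amplitude amplification. A direct count shows that the total number of invocations of $\widetilde{W}_\eps$ is $N = O(\sqrt{C(t)\log C(t)\log\log C(t)})$, and that Algorithm~\ref{alg:commute} only tolerates a total operator-norm deviation of order $1/\log C(t)$ in its inner block-encoding. By subadditivity of block-encoding errors under products, it therefore suffices to pick $\eps \in 1/\mathrm{poly}(C(t))$, so that $N\eps \leq 1/\log C(t)$; this gives $\log(1/\eps) \in O(\log C(t))$ and hence
\[
\mathsf{U}(P^t) \;\in\; O\!\bigl(\sqrt{t\log C(t)}\,\mathsf{U}\bigr).
\]

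Substituting this into the bound above yields the first displayed complexity in the theorem: the $\sqrt{t}\mathsf{U}\sqrt{\log C(t)}$ factor appears only in front of $\mathsf{U}$ (through $\mathsf{U}(P^t)$), while $\refl(\sigma)$ and $\mathsf{C}$ enter without the extra $\sqrt{\log C(t)}$; the $\bigOt{\cdot}$ form on the second line is just the polylog-suppressed version. The trickiest step to make rigorous is checking that the two nested layers of fast-forwarding plus the outer amplitude amplification really do compose without blowing the $\eps^{-1}$ requirement past $\mathrm{polylog}(C(t))$, but since each layer contributes only a further $\mathrm{polylog}$ factor, a straightforward union bound over the $N$ applications closes the analysis and recovers, as special cases, the entries of Table~\ref{fig:new-framework}.
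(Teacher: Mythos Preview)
Your proposal is correct and follows essentially the same route as the paper: apply Theorem~\ref{thm:electric-finding} to the iterated chain $P^t$, supply the walk operator $W(P^t)$ via quantum fast-forwarding using $D(P^t)=D(P)^t$, and choose the fast-forwarding precision $\eps$ inversely proportional to the total number of calls so that $\log(1/\eps)\in O(\log C(t))$, yielding $\mathsf{U}(P^t)\in O(\sqrt{t\log C(t)}\,\mathsf{U})$. The paper's proof is slightly terser on the error budget (it simply sets $\eps=\Theta(1/\tau)$ for $\tau$ the number of calls, rather than tracking the $1/\log C(t)$ tolerance of the inner block-encoding), but the substance is identical.
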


Using standard techniques, we can also handle the case where $C_{\sigma,M}(P^t)$ is unknown, at the cost of an additional $\log(C_{\sigma,M}(P^t))$ factor on the first term, giving, for any $t$, an algorithm with complexity (neglecting log factors):
$$\mathsf{S}(\sigma)+\sqrt{C_{\sigma,M}(P^t)}(\sqrt{t}\mathsf{U}(\sigma)+\mathsf{C}).$$
Setting $t=1$, we recover the electric network framework.
In the special case where $\sigma$ equals the stationary distribution $\pi$ of $P$, and thus also of $P^t$, we have $C_{\pi,M}(P^t)=\HT(P^t,M)$ and $\refl(\sigma) \in \bigO{1}$, and so the complexity of the algorithm is (neglecting log factors):
$$\mathsf{S}+\sqrt{\HT(P^t,M)}(\sqrt{t}\mathsf{U}+\mathsf{C}).$$
Setting $t=1$ recovers the hitting time framework (Theorem~\ref{thm:hitting-time}), and setting $t=1/\delta$ recovers the MNRS framework. To see this, note that since $\frac{1}{\delta}$ is at least the mixing time of $P$, a single step of $P^{1/\delta}$ approximately samples from $\pi$, which finds a marked vertex with probability $\eps$, so $\HT(P^{1/\delta},M) = \bigO{1/\eps}$.
If in addition there is a unique marked element $M = \{m\}$, we can choose $t \in \Omega(\eps \HT(P,\{m\}))$ to retrieve the controlled quantum amplification framework.
This immediately follows from Lemma~\ref{lemma:DH} which shows that $\HT(P^t,\{m\}) \in \bigO{1/\eps}$ if $t \in \Omega(\eps \HT(P,\{m\}))$.
If we could extend this bound to larger sets, then we find an immediate and strict extension of their framework.

\begin{proof}[Proof of Theorem~\ref{thm:mnrs-electric}]
We will apply Theorem~\ref{thm:electric-finding} to the reversible Markov chain $P^t$. This gives an algorithm for finding an element $x\in M$ with complexity:
\begin{align}
{}&\mathsf{S}(\sigma)\sqrt{\log(C(t))}+\sqrt{C(t)\log(C(t))\log\log(C(t))}(\mathsf{U}_t+\refl_{\sigma}+\mathsf{C}),\label{eq:mnrs-HT}
\end{align}
where $\mathsf{U}_t$ is the complexity of implementing the walk operator $W(P^t)$, and $\refl_{\sigma}$ is as described above Theorem~\ref{thm:belovs}.
We need only describe how to implement a walk operator $W(P^t)$, and upper bound its complexity $\update_t$.

By Theorem~\ref{thm:fast-forwarding}, since $D(P^t)=D(P)^t$, there is an $\eps$-approximate walk operator $W(P^t)$ for $P^t$ with complexity $\bigO{\sqrt{t\log(1/\eps)}\update}$.
We will call this operator a number of times
\[
\tau
= \sqrt{C(t)\log(C(t))\log\log(C(t))}.
\]
Hence if we set $\eps=\Theta(\frac{1}{\tau})$ then this ensures that the algorithm is correct with bounded error.
This gives 
\[
\update_t
= \bigO{\sqrt{t} \update \sqrt{\log\tau}}
= \bigO{\sqrt{t} \update \sqrt{\log(C(t))}}.
\]
Plugging this into \eqref{eq:mnrs-HT} completes the proof.
\end{proof}

\section{Alternative algorithm for finding in the hitting time framework}\label{sec:alter}

Our quantum walk algorithm relies on the use of quantum fast-forwarding.
This makes it more complicated than the original quantum walk algorithms in e.g.~\cite{szegedy2004QMarkovChainSearch,magniez2006SearchQuantumWalk,belovs2013Electric}.
In this section we show that the correctness of our algorithm implies the correctness of a much simpler algorithm, at least in the regimes corresponding to the hitting time framework and the electric network framework.
Namely, if we simply pick a random interpolation parameter, run the corresponding quantum walk for about $\sqrt{C}$ steps, and finally measure the walk register, then we find a marked element with constant probability.
This algorithm was proposed in \cite[Section~4]{ambainis2019QW} for the hitting time framework, but was only conjectured to be correct.

To derive this result, we literally ``dissect'' the more complicated fast-forwarding algorithm (\cref{alg:commute}) by considering an explicit construction of the quantum fast-forwarding routine.
The structural properties of this quantum circuit then imply that the simpler routine should also succeed. To illustrate this, we give a proof of the correctness of the fast-forwarding technique, \cref{thm:fast-forwarding}, as this is the main tool used in Algorithm~\ref{alg:commute}.

\begin{proof}[Proof of fast-forwarding scheme, \cref{thm:fast-forwarding}]
In order to describe our construction we recall some well-known properties of quantum walks. One of the important basic observations is that for any unitary $W$ for which 
$D:=(\bra{\bar{0}}\otimes I)W(\ket{\bar{0}}\otimes I)$ is a Hermitian matrix it holds~\cite{childs2008OnRelContDiscQuantWalk,gilyen2018QSingValTransf,ambainis2019QW} that 
\begin{equation}\label{eq:ChebyWalk}
(\bra{\bar{0}}\otimes I)\left(([I-2\ketbra{\bar{0}}{\bar{0}}]\otimes I)W^\dagger([I-2\ketbra{\bar{0}}{\bar{0}}]\otimes I)W\right)^{\!n}(\ket{\bar{0}}\otimes I)
= T_{2n}(D),
\end{equation}
where $T_{2n}(x)$ is the $2n$-th Chebyshev polynomial of the first kind.

An intriguing property of Chebyshev polynomials is that~\cite{sachdeva2014FasterAlgsViaApxTheory} 
\begin{equation}\label{eq:binom}
x^t=\sum_{i=0}^{t}2^{-t}\binom{t}{i} T_{2i-t}(x).
\end{equation}
For $t,d\in\mathbb{N}$ even numbers, now define the polynomial 
\[
p_{t,d}(x)
= \sum_{n=-\frac{d}{2}}^{\frac{d}{2}} 2^{-t} \binom{t}{\frac{t}{2}+n} T_{2n}(x),
\]
which is simply the sum in \eqref{eq:binom} truncated.
By Chernoff's bound and \eqref{eq:binom} it follows that for all $\eps>0$, $d\geq \left\lceil \sqrt{2t\ln(2/\eps)} \right\rceil$, and $x\in [-1,1] \colon$ 
\[
|x^t-p_{t,d}(x)| \leq \eps.
\]
Since $T_n(x)=T_{-n}(x)$, by \eqref{eq:ChebyWalk}, for all even $t$ we get that 
\begin{equation}\label{eq:PolyWalk}
p_{t,d}(D)=\sum_{n=-\frac{d}{2}}^{\frac{d}{2}}2^{-t}\binom{t}{\frac{t}{2}+n}(\bra{\bar{0}}\otimes I)\left(([I-2\ketbra{\bar{0}}{\bar{0}}]\otimes I)W^\dagger([I-2\ketbra{\bar{0}}{\bar{0}}]\otimes I)W\right)^{\!|n|}(\ket{\bar{0}}\otimes I).
\end{equation}
Let $\ell\in\mathbb{N}$, we define $C_k:=([I-2(I_{k-1}\otimes\ketbra{1}{1}\otimes I_{\ell-k})\otimes\ketbra{\bar{0}}{\bar{0}}]\otimes I)$ as the controlled reflection operator controlled by the $k$th qubit, where $I_m$ denotes the identity operator on $m$ qubits. Let
\begin{equation}\label{eq:contWalk}
\mathbb{U}^{(\ell)}:=\prod_{k=0}^{\ell-1}\left(C_k W^\dagger C_kW\right)^{\!2^{k}}=\sum_{n=0}^{2^{\ell}-1} \ketbra{n}{n}\otimes\left(([I-2\ketbra{\bar{0}}{\bar{0}}]\otimes I)W^\dagger([I-2\ketbra{\bar{0}}{\bar{0}}]\otimes I)W\right)^{\!n}.
\end{equation}
Now we use the linear combination of unitaries (LCU)~\cite{childs2012HamSimLCU,berry2013ExpPrecHamSim} technique.
Suppose that $d < 2^{\ell+1}$, and $R$ is a unitary such that $R\colon \sqrt{\alpha} \ket{0}\mapsto\sqrt{2^{-t}\binom{t}{t/2}}\ket{0}+\sum_{n=1}^{\frac{d}{2}}\sqrt{2^{1-t}\binom{t}{\frac{t}{2}+n}}\ket{n}$, where $\alpha\in [1-\eps,1]$ is a normalizing factor. A simple LCU calculation shows, that we have
\begin{equation*}
p_{t,d}(D)=\alpha(\bra{0}R^\dagger\otimes\bra{\bar{0}}\otimes I)\mathbb{U}_\ell (R\ket{0}\otimes \ket{\bar{0}}\otimes I), 
\end{equation*}
and therefore setting $\ket{\tilde{0}}:=\ket{0}\otimes \ket{\bar{0}}$ and
\begin{equation}\label{eq:finalFF}
U:=(R^\dagger\otimes I)\mathbb{U}_\ell (R\otimes I)
\end{equation}
we get
\begin{equation*}
\nrm{D^t-\alpha(\bra{\tilde{0}}\otimes I)U(\ket{\tilde{0}}\otimes I)}\leq\eps.
\end{equation*}
We note that the case of odd $t$ can be handled completely analogously using odd counterparts of \eqref{eq:ChebyWalk}-\eqref{eq:contWalk}.
The $\alpha$ factor is also trivial to remove using simple techniques~\cite{gilyen2018QSingValTransf}; alternatively one can apply the triangle inequality and use the slightly weaker error-bound $ \nrm{D^t-(\bra{\tilde{0}}\otimes I)U(\ket{\tilde{0}}\otimes I)}\leq2\eps.$
\end{proof}

Now we are ready to prove the main statement of this section.
We first recall the main technical corollary (\cref{cor:bnd-exp}) underlying \cref{thm:electric-finding}: let $T \geq c C_{\sigma,M}$ for a sufficiently large constant $c$, set $r_S=(T/30)/2=T/60$ and let the other interpolation parameter $r_M \in R = \{1,2,4,\dots,2^{\lceil \log(14T)\rceil}\}$ and time parameter $t\in [T]$ be chosen uniformly at random.
Let $U$ be a block-encoding of $D(q)^t=(\bra{\bar{0}}\otimes I)U(\ket{\bar{0}}\otimes I)$, with $D(q)$ the discriminant matrix of $P(q)$ defined in \eqref{eq:Pq}.
Then measuring the state $U(\ket{\bar{0}}\otimes \ket{\sqrt{\sigma}})$ returns a marked element with probability at least $\Omega\left(\frac{1}{\log(T)}\right)$. This is precisely why Algorithm~\ref{alg:commute} is correct.

In particular we can use the unitary in \eqref{eq:finalFF} when $\eps=\Theta\left(\frac{1}{\log(T)}\right)$ is small enough. Note that since we are only interested in the measurement statistics of the second register we can also use $\mathbb{U}_\ell (R\otimes I)$ instead of $U$. Then measuring the first part of the first register commutes with $\mathbb{U}_\ell$, so we can measure this register already before applying $\mathbb{U}_\ell$, without modifying the measurement statistics. Now we have the following algorithm: Apply $R$ on the first half of the first register, then measure it. Finally apply $\mathbb{U}_\ell$ and measure the second register. But this is again equivalent to first (classically) sampling $n\in [-\frac{d}{2},\frac{d}{2}]$ distributed $\propto 2^{-t}\binom{t}{\frac{t}{2}+n}$, and then applying $2n$ quantum walk steps to the initial state and measuring the second register. This works in the case when $t$ is even; one can also handle odd $t$ analogously by slightly tweaking the circuit $U$. In fact one can show that sampling an even $t\in [T]$ uniformly at random also works in the algorithm of \cite{ambainis2019QW}, which is an alternative solution.

We summarize the resulting algorithm:
\begin{algorithm}[H]
\begin{enumerate}
\item pick $r_M \in R$ and $t\in [T]$ uniformly at random
\item sample $n$ according to $2^{-t}\binom{t}{t/2 + n}$, conditioned on $|n| \in \bigO{\sqrt{T\log(T)}}$ and having the same parity as $t$
\item apply $|n|$ steps of the interpolated quantum walk $W(P(q))$ with $q_M=1-\frac{1}{r_M}$ and $q_S=1-\frac{60}{T}$ to the state $\ket{\sqrt{\sigma}}$
\item measure the second register
\end{enumerate}
\caption{Simple quantum walk algorithm}\label{alg:simpler}
\end{algorithm}

\begin{theorem}
There exists a constant $c$ such that if $T \geq cC_{\sigma,M}$ then Algorithm~\ref{alg:simpler} returns a marked vertex with probability~$\Omega\left(\frac{1}{\log T}\right)$.
\end{theorem}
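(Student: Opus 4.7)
The plan is to transfer the analysis of \cref{alg:commute} directly to \cref{alg:simpler} by ``unfolding'' the explicit linear-combination-of-unitaries (LCU) construction of fast-forwarding from the proof of \cref{thm:fast-forwarding} into a classical mixture over Chebyshev-polynomial walks. The decisive quantitative input is already present: \cref{cor:bnd-exp} gives, for $r_S=T/60$ fixed, $t$ uniform in $\{1,\ldots,30T\}$, and $r_M$ uniform in $R$,
\[
\mathbb{E}_{t,r_M}\big[\nrm{\Pi_M D^t(q)\ket{\sqrt{\sigma}}}^2\big] \in \Omega(1/\log T),
\]
provided $T \geq c'C_{\sigma,M}$ for a sufficiently large constant $c'$ (matching $c=30c'$ in the theorem). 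This is exactly the post-measurement success probability exploited by \cref{alg:commute}, so my goal reduces to reproducing it (up to $O(1/\log T)$ loss) with the simpler, classical-outer-loop algorithm.

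Next I would recall the polynomial decomposition underlying fast-forwarding: for even $t$ and $d=\lceil\sqrt{2t\ln(2/\eps)}\rceil$, the truncated binomial expansion
\[
p_{t,d}(x) = \sum_{n=-d/2}^{d/2} c_n\, T_{2n}(x),\qquad c_n = 2^{-t}\binom{t}{t/2+n},
\]
satisfies $|x^t - p_{t,d}(x)| \leq \eps$ on $[-1,1]$ (with the odd case handled analogously by odd-degree Chebyshevs). Choosing $\eps = \Theta(1/\log T)$ yields $d = O(\sqrt{T\log T})$, which matches the $|n|$-cutoff in \cref{alg:simpler}. Since $c_n\geq 0$ and $\sum_n c_n \leq 1$, Jensen's inequality (equivalently Cauchy--Schwarz) combined with the triangle inequality gives
\[
\nrm{\Pi_M D^t(q)\ket{\sqrt{\sigma}}}^2 \leq \sum_n c_n \nrm{\Pi_M T_{2n}(D(q))\ket{\sqrt{\sigma}}}^2 + O(\eps).
\]

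Then I would identify the right-hand side with a lower bound on the conditional success probability of \cref{alg:simpler}. By identity~\eqref{eq:ChebyWalk}, applying $|n|$ iterations of the compound step $([I-2\ketbra{\bar{0}}{\bar{0}}]\otimes I)W(P(q))^\dagger([I-2\ketbra{\bar{0}}{\bar{0}}]\otimes I)W(P(q))$ to $\ket{\bar{0}}\ket{\sqrt{\sigma}}$ produces a state whose $\ket{\bar{0}}$-component in the ancilla is exactly $\ket{\bar{0}}\otimes T_{2n}(D(q))\ket{\sqrt{\sigma}}$. Since the final measurement is $I_A\otimes \Pi_M$, the Pythagorean decomposition over the ancilla basis,
\[
\nrm{(I_A\otimes \Pi_M)\ket{\phi}}^2 = \sum_a \nrm{\Pi_M\ket{\phi_a}}^2 \geq \nrm{\Pi_M\ket{\phi_{\bar 0}}}^2,
\]
shows that ignoring the ancilla can only help, so conditioned on $(t,r_M,n)$ the algorithm finds a marked vertex with probability at least $\nrm{\Pi_M T_{2n}(D(q))\ket{\sqrt{\sigma}}}^2$. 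Averaging over the classical sample of $n$ with weights $c_n$ (properly renormalized after truncation and parity-matching, losing only $O(\eps)$ mass) and over $t,r_M$ then gives
\[
\Pr[\text{success}] \geq \mathbb{E}_{t,r_M}\Big[\sum_n c_n \nrm{\Pi_M T_{2n}(D(q))\ket{\sqrt{\sigma}}}^2\Big] \geq \mathbb{E}_{t,r_M}\big[\nrm{\Pi_M D^t(q)\ket{\sqrt{\sigma}}}^2\big] - O(\eps) \in \Omega(1/\log T).
\]

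The main obstacles are bookkeeping rather than substance: (i) handling the parity of $|n|$ correctly, since the even-$t$ and odd-$t$ expansions live in distinct Chebyshev bases and the algorithm's sampling is parity-matched with $t$; (ii) absorbing the Chernoff tail $|n|>d/2$ into the $O(\eps)$ error budget, including the mild re-normalization of the sampling distribution; and (iii) stating the ``ignore the ancilla'' inequality precisely. None is conceptually deep once the LCU layering of fast-forwarding is written out explicitly.
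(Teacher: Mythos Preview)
Your proposal is correct and follows essentially the same route as the paper. Both arguments rely on the explicit LCU structure of the fast-forwarding circuit together with \cref{cor:bnd-exp}, and both use the ancilla (Pythagorean) inequality to pass from the block-encoded component to the full measurement probability. The only stylistic difference is that the paper argues operationally---dropping $R^\dagger$ (it acts only on the first register) and then commuting the measurement of the control register past the controlled unitary $\mathbb{U}_\ell$---which yields the \emph{equality} $\nrm{(I\otimes\Pi_M)\,\mathbb{U}_\ell(R\otimes I)\ket{0}\ket{\bar 0}\ket{\sqrt\sigma}}^2=\sum_n c_n\,\nrm{(I\otimes\Pi_M)\,\text{walk}^{n}\ket{\bar 0}\ket{\sqrt\sigma}}^2$; you instead reach the same conclusion via the triangle inequality plus Jensen/Cauchy--Schwarz on the Chebyshev decomposition, obtaining only the needed one-sided bound. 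Either way the remaining bookkeeping (parity of $t$, Chernoff tail in the truncation, $O(\eps)$ renormalization) is exactly as you describe.
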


Repeating this procedure $\Omega(\log T)$ times returns a marked vertex with constant probability.
This yields an algorithm that only uses ordinary (interpolated) quantum walks and finds a marked element with constant probability.
If $T \in \Theta(C_{\sigma,M})$, the algorithm has complexity $\bigO{(\setup(\sigma) + \sqrt{C_{\sigma,M}\log C_{\sigma,M}}(\update(\sigma) + \check))\log(C_{\sigma,M})}$. In the case of $\sigma=\pi$, we are in the hitting time framework, and this complexity becomes
$\bigO{(\setup + \sqrt{\HT\log \HT}(\update + \check))\log(\HT)}$.

\subsection*{Acknowledgments}

We thank Fr\'ed\'eric Magniez, Stephen Piddock and J\'er\'emie Roland for fruitful discussions and useful pointers.

\bibliographystyle{alpha}
\bibliography{Bibliography}

\appendix

\section{Counterexample} \label{app:counterexample}
It is a classic result that the combinatorial commute time $\Exp_s(\tau^t_s)$ equals the electric quantity $C_{s,t} = WR_{s,t}$.
We show that this result can be extended to the case where $t$ is a set $M$, rather than a singleton (see Appendix~\ref{app:Cs-M}).
Similarly one could expect that something of the form $\Exp_{\pi|_S}(\tau^M_S) = WR_{\pi|_S,M}$ should hold.
We show here that in fact this does not hold in general.
Similarly we show that $\Pr_{\pi|_S}(\tau_M < \tau_S^+) \neq \frac{1}{C_{\pi|_S,M} \pi(S)}$, whereas this does hold when $S$ is a singleton.

Let $G$ be a path on three nodes $u - v - w$ with unit weights.
Let $S = \{u,v\}$ and $M = \{w\}$, so that $\pi|_S = \frac{1}{3} e_u + \frac{2}{3} e_v$ and $\pi(S) = 3/4$.
The optimal (and only) $\pi|_S$-$M$ flow pushes value $1/3$ along the edge $(u,v)$, and value $1$ along the edge $(v,w)$.
The effective resistance thus equals $R_{\pi|_S,M} = \frac{1}{3^2} + 1 = \frac{10}{9}$.
Since $W = 4$, this shows that $WR_{\pi|_S,M} = \frac{40}{9}$ and $\frac{1}{C_{\pi|_S,M} \pi(S)} = 3/10$.

On the other hand, we can easily calculate that
\[
\Pr_{\pi|_S}(\tau_M < \tau_S^+)
= \frac{1}{3}
> \frac{1}{C_{\pi|_S,M} \pi(S)},
\]
since the only possibility is to start from $v$ and take the edge $(v,w)$, which happens with probability $\frac{2}{3} \frac{1}{2} = \frac{1}{3}$.
Similarly, we can show that the combinatorial commute time
\[
\Exp_{\pi|_S}(\tau^M_S)
= \frac{39}{9}
< WR_{\pi|_S,M}.
\]
To see this, note that $\Exp_{\pi|_S}(\tau^M_S) = \Exp_{\pi|_S}(\tau_M) + 1$, with $\Exp_{\pi|_S}(\tau_M)$ the expected hitting time of $M$ (after the walk hits $M$, it necessarily jumps back to $S$).
On its turn, $\Exp_{\pi|_S}(\tau_M) = \frac{1}{3} \Exp_u(\tau_M) + \frac{2}{3} \Exp_v(\tau_M)$ and $\Exp_u(\tau_M) = 1 + \Exp_v(\tau_M)$ (a walk from $u$ necessarily jumps to $v$ after one step).
To calculate $\Exp_v(\tau_M)$, note that $\Exp_v(\tau_M) = \frac{1}{2} + \frac{1}{2}(\Exp_v(\tau_M) + 2)$, since with probability $1/2$ we jump to $M$ in 1 step, and otherwise we go to $u$ and then back to $v$, taking 2 steps.
This implies that $\Exp_v(\tau_M) = 3$ and hence $\Exp_{\pi|_S}(\tau_S^M)$.

\section{Proof of \texorpdfstring{$s$-$M$ and $S$-$M$}{s-M and S-M} commute times} \label{app:Cs-M}
In this appendix we prove \cref{claim:commute-time}.
It follows by generalizing \cite[Proposition 9.5]{levin2017MarkovChainsMixingTimes}, where the theorem is proven for the special case of $S$ and $M$ being singletons.
It builds on \textit{voltages}, which are dual to electric flows.
Any voltage is described by a function $h:X \to \R$ that is \textit{harmonic} on all nodes that are not sources or sinks, i.e.,
\[
h(u)
= \sum_{v \in X} P_{u,v} h(v)
\]
for every $u$ which is neither a source (that is, $u\neq s$) nor a sink (that is $u\not\in M$).
The quantities $C_{\pi|_S,M}$ and $C_{S,M}$ are described in \cref{def:flow}.

\lemmaSM*
\begin{proof}
First we prove the claim for a singleton $S=\{s\}$, in which case the claim becomes
	\[
	\Pr_s(\tau_M < \tau_s^+)
	= \frac{1}{C_{s,M} \pi_s}.
	\]
	
Define the \textit{boundary voltages} $h_B(s) = 0$ and $h_B(u \in M) = 1$.
By standard results \cite{bollobas2013modern}, this implies that a total current of magnitude $i = 1/R_{s,M}$ will flow from $s$ to $M$, and the resulting voltage can be uniquely described as the \emph{escape probability}
\[
h(u)
= \Pr_u(\tau_M < \tau_s),
\]
as shown in \cite[Proposition 9.1]{levin2017MarkovChainsMixingTimes} (see also \cite{doyle1984electric}).
Using that $h(s) = 0$ and $i_{u,v} = (h(v) - h(u))/w_{u,v}$ by Ohm's law, we can now rewrite
\begin{align*}
\Pr_s(\tau_M < \tau_s^+)
&= \sum_{v \in X\setminus\{s\}} P(s,v) \, \Pr_v(\tau_M < \tau_s) \\
&= \sum_{v \in X\setminus\{s\}} \frac{w_{s,v}}{w_s} (h(v) - h(s))
= \sum_{v \in X\setminus\{s\}} \frac{i_{s,v}}{w_s} = \frac{i}{w_s},
\end{align*}
with $i$ the total current.
Since $i = 1/R_{s,M} = W/C_{s,M}$ and $\pi_s = w_s/W$, this implies that $\Pr_s(\tau_M < \tau_s^+) = 1/(C_{s,M}\pi_s)$.

Now we reduce the general case to the singleton case. For this we consider the graph $G'$ where we replace $S$ by a single vertex $s'$, so that for $u,v\notin S$ we set $w'_{uv}:=w_{uv}$, $w'_{s'v}:=\sum_{s\in S}w_{sv}$, and $w'_{s's'}:=\sum_{s, r\in S}w_{sr}$. Clearly then $W'=W$, $\pi'(s')=\pi(S)$ and $R'_{s',M}=R_{S,M}$. The latter deserves a little explanation. One can see that in the optimal $S\to M$ flow for any two vertices $s_1,s_2\in S$ and $v\notin S$ we have $i_{s_1,v}/w_{s_1,v}=i_{s_2,v}/w_{s_2,v}$. Therefore, after merging the flows (currents) on the merged edges $(s_1,v)$, $(s_2,v)$ the dissipated power 
$$
	\frac{(i_{s_1,v}+i_{s_2,v})^2}{w_{s_1,v}+w_{s_2,v}}
	=(i_{s_1,v}+i_{s_2,v})\frac{i_{s_1,v}+i_{s_2,v}}{w_{s_1,v}+w_{s_2,v}}
	=(i_{s_1,v}+i_{s_2,v})\left(\frac{i_{s_1,v}}{w_{s_1,v}}=\frac{i_{s_2,v}}{w_{s_2,v}}\right)
	=\frac{i_{s_1,v}^2}{w_{s_1,v}}	+\frac{i_{s_2,v}^2}{w_{s_2,v}}	
$$
remains unchanged. So merging the flows / distributing flows proportionally to the edge weights gives a mapping between the optimal flows ($S\to M$ and $s'\to M$) without changing the objective.

Finally, observe that 
\begin{align*}
\Pr_{\pi|_S}(\tau_M < \tau_S^+)
&= \sum_{s\in S} \frac{\pi(s)}{\pi(S)}\sum_{v \in X\setminus S}  P(s,v) \, \Pr_v(\tau_M < \tau_S) \\
&= \sum_{s\in S} \frac{w_s}{w(S)}\sum_{v \in X\setminus S}  \frac{w_{s,v}}{w_s} \, \Pr_v(\tau_M < \tau_S) \\
&= \sum_{v \in X\setminus S} \sum_{s\in S} \frac{w_{s,v}}{w(S)} \, \Pr_v(\tau_M < \tau_S) \\
&= \sum_{v \in X\setminus S} P'(s',v) \, \Pr'_v(\tau_M < \tau_{s'}) \\
&= \Pr_{s'}(\tau_M < \tau_{s'}^+),
\end{align*}
and so
\[
	\Pr_{\pi|_S}(\tau_M < \tau_S^+)
	=\Pr_{s'}(\tau_M < \tau_{s'}^+)
	=\frac{1}{C'_{s',M} \pi(s')}
	=\frac{1}{C_{S,M} \pi(S)}. \qedhere
\]
\end{proof}

\subsection{Special case where $S = \{s\}$}
For the special case where $S$ is a singleton, this gives a tight characterization of the commute time.
This easily follows from combining Lemma \ref{lem:s-M} with the expression below.
This expression is proven in \cite[Proposition 2.3]{lovasz1993random} or \cite[Corollary 2.8]{aldous2002reversible} for the case where $M$ is a singleton, but it is easily extended to the more general case.

\begin{lemma} \label{lem:exp-returns}
Let $s$ be disjoint from $M$.
Then
\[
\Pr_s(\tau_M < \tau_s^+)
= \frac{1}{\Exp_s(\tau^s_M)\pi_s}.
\]
\end{lemma}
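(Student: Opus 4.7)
The identity is a clean consequence of the strong Markov property at the first-return time $\tau_s^+$, combined with Kac's lemma. First I would set $p := \Pr_s(\tau_M < \tau_s^+)$ and split the walk started from $s$ into two cases, according to whether it reaches $M$ before it returns to $s$.

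On the event $\{\tau_M < \tau_s^+\}$ (probability $p$), no visit to $s$ occurs before the walk reaches $M$, so the commute time coincides with the first return time: $\tau_M^s = \tau_s^+$. On the complementary event $\{\tau_s^+ \le \tau_M\}$ (probability $1-p$), the walk is back at $s$ at time $\tau_s^+$ without having hit $M$, and by the strong Markov property the post-$\tau_s^+$ trajectory is distributed as an independent copy of a walk started from $s$; hence $\tau_M^s = \tau_s^+ + (\tau_M^s)'$, where $(\tau_M^s)'$ is an independent copy of $\tau_M^s$.

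Taking expectations in each case and summing then yields
\[
\Exp_s(\tau_M^s) \;=\; \Exp_s(\tau_s^+) \,+\, (1-p)\,\Exp_s(\tau_M^s),
\]
where the second term uses the strong Markov identity
$\Exp_s\bigl(\tau_M^s \circ \theta_{\tau_s^+}\cdot \mathbf{1}_{\{\tau_s^+ \le \tau_M\}}\bigr) = (1-p)\,\Exp_s(\tau_M^s)$, valid because $Y_{\tau_s^+} = s$ on the conditioning event. Applying Kac's lemma (Lemma~\ref{lem:kac}) with $S = \{s\}$ gives $\Exp_s(\tau_s^+) = 1/\pi_s$, and rearranging produces $p \cdot \Exp_s(\tau_M^s) = 1/\pi_s$, which is exactly the claim.

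The only subtle step is the strong Markov decomposition in the second case; the main ``obstacle'' is purely notational, namely checking that on $\{\tau_s^+ \le \tau_M\}$ the walk is at $s$ at time $\tau_s^+$ (immediate from the definition of $\tau_s^+$) so that the shifted walk is an independent copy of the original walk started at $s$. No potential-theoretic or harmonic-function machinery is required — in contrast to Lemma~\ref{lem:s-M}, which used the electric-network dictionary, this statement follows from the renewal structure of excursions from $s$ alone.
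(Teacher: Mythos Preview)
Your proposal is correct and takes essentially the same approach as the paper: both arguments split according to whether $\tau_M<\tau_s^+$, use the strong Markov property at $\tau_s^+$ to obtain the renewal equation $\Exp_s(\tau_M^s)=\Exp_s(\tau_s^+)+(1-p)\,\Exp_s(\tau_M^s)$, and then invoke Kac's lemma $\Exp_s(\tau_s^+)=1/\pi_s$ to conclude. The only cosmetic difference is that the paper phrases the decomposition via $\Exp_s(\tau_M^s-\tau_s^+)$ rather than the shift operator, but the computation is identical.
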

\begin{proof}
Let $q = \Pr_s(\tau_M < \tau_s^+)$.
Then by Kac's Lemma (Lemma~\ref{lem:kac}) we know that $\Exp_s(\tau_s^+) = 1/\pi_s$.
Necessarily, when starting from $s$, $\tau_s^+ \leq \tau^s_M$, and furthermore $\Pr_s(\tau_s^+ = \tau^s_M) = q$.
Now if $\tau_s^+ < \tau^s_M$, we know that the Markov chain is ``restarted'' at timestep $\tau_s^+$ (that is, it is distributed the same as when it started, namely, it is at $s$), and hence
\begin{align*}
\Exp_s(\tau^s_M - \tau_s^+)
&= q\Exp_s(\tau_M^s-\tau_s^+|\tau_M^s=\tau_s^+)+(1-q)\Exp_s(\tau_M^s-\tau_s^+|\tau_M^s>\tau_s^+)\\
\Exp_s(\tau_M^s)-\Exp_s(\tau_s^+)&=(1-q)\Exp_s(\tau^s_M).
\end{align*}
We can therefore rewrite $q = \Exp_s(\tau_s^+)/\Exp_s(\tau^s_M) = 1/(\Exp(\tau^s_M)\pi_s)$, proving the claim.
\end{proof}

Combining this lemma with our Lemma \ref{lem:s-M} shows that
\[
\Pr_s(\tau_M < \tau_s^+)
= \frac{1}{C_{s,M} \pi_s}
= \frac{1}{\Exp_s(\tau^s_M)\pi_s},
\]
and therefore $\Exp_s(\tau^s_M) = C_{s,M}$.
This generalizes the classic fact that $C_{s,t} = \Exp_s(\tau^s_t)$, as we mentioned in \cref{thm:commuteVertex}.

\end{document}